\theoremstyle{thmstyleone}%
\newtheorem{theorem}{Theorem}
\newtheorem{lemma}[theorem]{Lemma}
\theoremstyle{thmstyletwo}%
\newtheorem{remark}{Remark}%
\newtheorem{corollary}[theorem]{Corollary}
\newtheorem{observation}[theorem]{Observation}
\theoremstyle{thmstylethree}%
\newtheorem{definition}{Definition}%
\newcommand{\PC}{\mathcal{P}\xspace}
\newcommand{\RC}{\mathcal{R}\xspace}
\newcommand{\TC}{\mathcal{T}\xspace}
\newcommand{\SPM}{\mathrm{SPM}\xspace}
\newcommand{\dist}{\mathit{dist}\xspace}
\newcommand{\assumption}[1]{\textsf{A#1}\xspace}
\begin{document}

\title[Fast Reconfiguration for Programmable Matter]{Fast Reconfiguration for Programmable Matter}

\author[1]{\fnm{Irina} \sur{Kostitsyna}}\email{i.kostitsyna@tue.nl}
\equalcont{These authors contributed equally to this work.}

\author[2]{\fnm{Tom} \sur{Peters}}\email{t.peters1@tue.nl}
\equalcont{These authors contributed equally to this work.}

\author[2]{\fnm{Bettina} \sur{Speckmann}}\email{b.speckmann@tue.nl}
\equalcont{These authors contributed equally to this work.}

\affil[1]{\orgname{KBR at NASA Ames Research Centre}}
\affil[2]{\orgname{TU Eindhoven}, \city{Eindhoven}, \country{The Netherlands}}

\abstract{The concept of programmable matter envisions a very large number of tiny and simple robot particles forming a smart material.
Even though the particles are restricted to local communication, local movement, and simple computation, their actions can nevertheless result in the global change of the material's physical properties and geometry.

A fundamental algorithmic task for programmable matter is to achieve global shape reconfiguration by specifying local behavior of the particles.
In this paper we describe a new approach for shape reconfiguration in the \emph{amoebot} model.
The amoebot model is a distributed model which significantly restricts memory, computing, and communication capacity of the individual particles.
Thus the challenge lies in coordinating the actions of particles to produce the desired behavior of the global system.

Our reconfiguration algorithm is the first algorithm that does not use a canonical intermediate configuration when transforming between arbitrary shapes.
We introduce new geometric primitives for amoebots and show how to reconfigure particle systems, using these primitives, in a linear number of activation rounds in the worst case.
In practice, our method exploits the geometry of the symmetric difference between input and output shape: it minimizes unnecessary disassembly and reassembly of the particle system when the symmetric difference between the initial and the target shapes is small.
Furthermore, our reconfiguration algorithm moves the particles over as many parallel shortest paths as the problem instance allows.}

\keywords{Programmable matter, amoebot model, shape reconfiguration}

\maketitle

\bmhead{Acknowledgments}
The work in Section~\ref{sec:holes} has been started at the 2nd Bertinoro Workshop on Distributed Geometric Algorithms. We thank Jonas Harbig, Tim Ophelders, and George Skretas for their input and discussions.

\section{Introduction}
Programmable matter is a smart material composed of a large quantity of robot particles capable of communicating locally, performing simple computation, and, based on the outcome of this computation, changing their physical properties. 
Particles can move through a programmable matter system by changing their geometry and attaching to (and detaching from) neighboring particles.
By instructing the particles to change their local adjacencies, we can program a particle system to reconfigure its global shape.
Shape assembly and reconfiguration of particle systems have attracted a lot of interest in the past decade and a variety of specific models have been proposed~\cite{Cheung2011,Gmyr2018,Patitz2014,Woods2013,Geary2014,Demaine2018b,Naz2016,Piranda2018}.
We focus on the \emph{amoebot} model~\cite{Derakhshandeh2014}, which we briefly introduce below.
Here, the particles are modeled as independent agents collaboratively working towards a common goal in a distributed fashion.
The model significantly restricts computing and communication capacity of the individual particles, and thus the challenge of programming a system lies in coordinating local actions of particles to produce a desired behavior of the global system.

A fundamental problem for programmable matter is shape reconfiguration.
To solve it we need to design an algorithm for each particle to execute, such that, as a result, the programmable matter system as a whole reconfigures into the desired target shape.
Existing solutions first build an intermediate \emph{canonical configuration} (usually, a line or a triangle) and then build the target shape from that intermediate configuration~\cite{Derakhshandeh2016,DiLuna2019}.
However, in many scenarios, such as shape repair, completely deconstructing a structure only to build a very similar one, is clearly not the most efficient strategy.

We propose the first approach for shape reconfiguration that does not use a canonical intermediate configuration when transforming between two arbitrary shapes. 
Our algorithm exploits the geometry of the symmetric difference between the input shape $I$ and the target shape $T$.
Specifically, we move the particles from $I\setminus T$ to $T \setminus I$ along shortest paths through the overlap $I\cap T$ over as many parallel shortest paths as the problem instance allows.
In the worst case our algorithm works as well as existing solutions. However, in practice, our approach is significantly more efficient when the symmetric difference between the initial and the target shape is small.

\subparagraph{Amoebot model}
Particles in the amoebot model occupy nodes of a plane triangular grid~$G$.
A particle can occupy either a single node or a pair of adjacent nodes: the particle is contracted and expanded, respectively.
The particles have constant memory space, and thus have limited computational power.
They are disoriented (no common notion of orientation) and there is no consensus on chirality (clockwise or counter-clockwise).
The particles have no ids and execute the same algorithm, i.e. they are identical.
They can reference their neighbors using six (for contracted) or ten (for expanded particles) \emph{port identifiers} that are ordered clockwise or counter-clockwise modulo six or ten (see Figure~\ref{fig:model} (top)).
Particles can communicate with direct neighbors by sending messages over the ports.
Refer to Daymude et al.~\cite{Daymude2021} for additional details.

\begin{figure}
	\centering
	\includegraphics{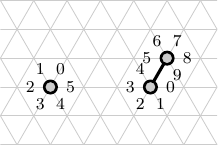}\hfil
	\includegraphics{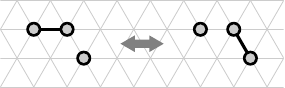}
	\caption{
		Top: particles with ports labeled, in contracted and expanded state.
		Bottom: handover operation between two particles.
	}
	\label{fig:model}
\end{figure}

Particles can move in two different ways: a contracted particle can \emph{expand} into an adjacent empty node of the grid, and an expanded particle can \emph{contract} into one of the nodes it currently occupies.
Each node of $G$ can be occupied by at most one particle, and we require that the particle system stays connected at all times.
To preserve connectivity more easily, we allow a \emph{handover} variant of both move types, a simultaneous expansion and contraction of two neighboring particles using the same node (Figure~\ref{fig:model} (bottom)).
A handover can be initiated by any of the two particles; if it is initiated by the expanded particle, we say it \emph{pulls} its contracted neighbor, otherwise we say that it \emph{pushes} its expanded neighbor.

Particles operate in activation cycles: when activated, they can read from the memory of their immediate neighbors, compute, send constant size messages to their neighbors, and perform a move operation.
Particles are activated by an asynchronous adversarial but fair scheduler (at any moment in time $t$, any particle must be activated at some time in the future $t'>t$).
If two particles are attempting conflicting actions (e.g., expanding into the same node), the conflict is resolved by the scheduler arbitrarily, and exactly one of these actions succeeds.
We perform running time analysis in terms of the number of \emph{rounds}: time intervals in which all particles have been activated at least once.

We say that a \emph{particle configuration}~$\PC$ is the set of all particles and their internal states.
Let $G_\PC$ be the subgraph of the triangular grid $G$ induced by the nodes occupied by particles in $\PC$.
Let a \emph{hole} in $\PC$ be any interior face of $G_\PC$ with more than three vertices.
We say a particle configuration~$\PC$ is \emph{connected} if there exists a path in $G_\PC$ between any two particles.
A particle configuration~$\PC$ is \emph{simply connected} if it is connected and has no holes.

\subparagraph{Related work}
The amoebot model, which is both natural and versatile, was introduced by Derakhshandeh et al.~\cite{Derakhshandeh2014} in 2014 and has recently gained popularity in the algorithmic community.
A number of algorithmic primitives, such as leader election~\cite{derakhshandeh2015leader,DiLuna2019,Dufoulon2021}, spanning forests~\cite{derakhshandeh2015leader}, and distributed counters~\cite{Daymude2020,Porter2018}, were developed to support algorithm design.

Derakhshandeh et al.~\cite{Derakhshandeh2016} designed a reconfiguration algorithm for an amoebot system which starts from particles forming a large triangle and targets a shape consisting of a constant number of unit triangles (such that their description fits into the memory of a single particle).
In their approach the initial large triangle is partitioned into the unit triangles, which move in a coordinated manner to their corresponding position within the target shape.
Derakhshandeh et al. make some assumptions on the model, including a sequential activation scheduler (at every moment in time only one particle can be active), access of particles to randomization, and common particle chirality.
Due to these assumptions, and the fact that the initial shape is compact, the reconfiguration process takes $O(\sqrt{n})$ number of rounds for a system with $n$ particles.

Di Luna et al.~\cite{DiLuna2019} were the first to reconfigure an input particle system into a line (a canonical intermediate shape). They then simulate a Turing machine on this line, and use the output of the computation to direct the construction of the target structure in $O(n\log n)$ rounds.
Their main goal was to lift some of the simplifying assumptions on the model of~\cite{Derakhshandeh2016}: their algorithm works under a synchronous scheduler, is deterministic, does not rely on the particles having common chirality, and  requires only the initial structure to be simply connected.
However, just as the work by Derakhshandeh et al.~\cite{Derakhshandeh2016}, their method only works for structures of constant description size.

Cannon et al.~\cite{Cannon2016} consider a stochastic variation of the amoebot model.
Viewed as an evolving Markov chain, the particles make probabilistic decisions based on the structure of their local neighborhoods.
In this variant, there exist solutions for compressing a system into a dense structure~\cite{Cannon2016}, simulating ant behavior to build a shortcut bridge between two locations~\cite{AndresArroyo2018}, and separating a system into multiple components by the \emph{color} of particles~\cite{Cannon2019}.

\subparagraph{Problem description}\label{sec:problem_description}
An instance of the \emph{reconfiguration problem} consists of a pair of connected shapes $(I, T)$ embedded in the grid $G$ (see Figure~\ref{fig:particle_reconfiguration_example}).
The goal is to transform the initial shape $I$ into the target shape $T$.
Initially, all particles in $I$ are contracted.
The problem is solved when there is a contracted particle occupying every node of $T$.

We make a few assumptions on the input and on the model.
Most of our assumptions fall into at least one of the following categories: they are natural for the problem statement, they can be lifted with extra care, or they are not more restrictive than existing work.

\begin{figure}[t]
    \centering
    \includegraphics{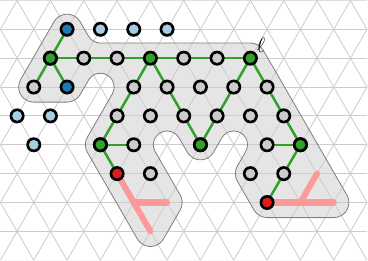}
    \caption{The particles form the initial shape $I$. The target shape $T$ is shaded in gray. Supply particles are blue, supply roots dark blue. Demand roots (red) store spanning trees of their demand components. The graph $G_L$ on the coarse grid is shown in green and leader~$\ell$ is marked. Particles on $G_L$ that are green are grid nodes, other particles on $G_L$ are edge nodes.}
    \label{fig:particle_reconfiguration_example}
\end{figure}

\subparagraph{Assumptions on $I$ and $T$}
We assume that the input shape $I$ and the target shape $T$ have the same number $n$ of nodes.
We call $I\cap T$ the \emph{core} of the system, the particles in $I\setminus T$ the \emph{supply}, and the particles in $T\setminus I$ the \emph{demand}.
In our algorithm, the core nodes are always occupied by particles, and the supply particles move through the core to the demand.

We assume that
\begin{description}
    \item[\assumption{1}] the core $I\cap T$ is a non-empty simply connected component. This is a natural assumption in the shape repair scenario when the symmetric difference between the initial and the target shape is small.
    \item[\assumption{2}] each demand component $D$ of $T\setminus I$ has a constant description complexity.
\end{description}

\noindent
In Section~\ref{sec:holes} we discuss a possible strategy for lifting \assumption{1}. Below \assumption{6} we explain why \assumption{2} is not more restrictive than the input assumptions in the state-of-the art.

\subparagraph{Assumptions on initial particle state}
A standard assumption is that initially all particles have the same state.
In this paper we assume that a preprocessing step has encoded $T$ into the particle system and thus the states of some particles have been modified from the initial state.
The reasons for this assumption are twofold: first, to limit the scope of this paper we chose to focus on the reconfiguration process; and second, the encoding preprocessing step, whose specification we omit, can be derived in a straightforward manner from existing primitives and algorithms. Below we describe more precisely what our assumptions are on the outcome of the preprocessing step.

To facilitate the navigation of particles through the core $I\cap T$, and in particular, to simplify the crossings of different flows of particles, we use a coarsened (by a factor of three) version of the triangular grid. 
Let $G_L$ be the intersection of the coarse grid with $I\cap T$ (the green nodes Figure~\ref{fig:particle_reconfiguration_example}).
We assume that after the preprocessing step
\begin{description}
    \item[\assumption{3}]\label{a3} a coarsened grid $G_L$ has been computed. By definition every node in the core is either in $G_L$ or is adjacent to a node in $G_L$.
    For simplicity of presentation we assume that the shape of the core is such that $G_L$ is connected;
    \item[\assumption{4}]\label{a4} particles know whether they belong to the supply $I\setminus T$, the demand $I\cap T$, or to the core $I\cap T$;
    \item[\assumption{5}]\label{a5} every connected component $C$ of the supply has a representative particle in the core $I \cap T$ adjacent to $C$, the \emph{supply root} of $C$; correspondingly, every connected component $D$ of the demand has one designated particle from the core adjacent to $D$, the \emph{demand root} of $D$;
    \item[\assumption{6}] each demand root $d$ stores a complete spanning tree rooted at $d$ of the corresponding demand component $D$ in its memory explicitly or implicitly, by encoding construction rules for $D$.
\end{description}

\noindent
If the core does not naturally induce a connected $G_L$, we can restore connectivity by locally deviating from the grid lines of the coarse grid and adapting the construction and use of $G_L$ accordingly.
Assumptions \assumption{2} and \assumption{4}--\assumption{6} allow us to focus our presentation on the reconfiguration algorithm itself.
Encoding a general target shape into the initial structure remains a challenging open problem. However, compared to prior work, our assumptions are not very strong: all other existing algorithms support only fixed, simple target shapes which can be easily encoded in the initial shape. For example, Derakhshandeh et al.~\cite{Derakhshandeh2016} assume that the initial shape $I$ is a large triangle and that the target shape $T$ consists of only a constant number of unit triangles. Here a leader particle can easily compute the information necessary for the pre-processing step and broadcast it through the system.

\subparagraph{Contribution and organization}
We present the first reconfiguration algorithm for programmable matter that does not use a canonical intermediate configuration. Instead, our algorithm introduces new geometric primitives for amoebots which allow us to route particles in parallel from supply to demand via the core of the particle system. 
A fundamental building block of our approach are \emph{feather trees}~\cite{disc22} which are a special type of \emph{shortest path trees} (SP-trees). 
SP-trees arrange particles in a tree structure such that the paths from leaves to the root are shortest paths through the particle structure.
The unique structure of feather trees allows us to use multiple overlapping trees in the particle system to enable particle navigation along shortest paths between multiple sources of supply and demand. In Section~\ref{sec:SPtrees} we give all necessary definitions and show how to efficiently construct SP-trees and feather trees using a grid variant of shortest path maps~\cite{Mitchell1991}.

In Section~\ref{sec:supply-demand} we then show how to use feather trees to construct a \emph{supply graph}, which directs the movement of particles from supply to demand. In Section~\ref{sec:nav} we describe in detail how to navigate the supply graph and also discuss the coarse grid $G_L$ which we use to ensure the proper crossing of different flows of particles travelling in the supply graph.
In Section~\ref{sec:algo} we summarize and analyze our complete algorithm for the reconfiguration problem.
We show that using a sequential scheduler we can solve the particle reconfiguration problem in $O(n)$ activation rounds. When using an asynchronous scheduler our algorithm takes $O(n)$ rounds in expectation, and $O(n \log n)$ rounds with high probability.
Finally, in Section~\ref{sec:holes}, we show how we can lift assumption~\assumption{1}.

In the worst case our algorithm is as fast as existing algorithms, but in practice our method exploits the geometry of the symmetric difference between input and output shape: it minimizes unnecessary disassembly and reassembly of the particle system. Furthermore, if the configuration of the particle system is such that the feather trees are balanced with respect to the amount of supply particles in each sub-tree, then our algorithm finishes in a number of rounds close to the diameter of the system (instead of the number of particles). Our reconfiguration algorithm also moves the particles over as many parallel shortest paths as the problem instance allows. In Section~\ref{sec:conclusion} we discuss these features in more detail and also sketch future work.

\section{Shortest path trees}\label{sec:SPtrees}
To solve the particle reconfiguration problem, we need to coordinate the movement of the particles from $I\setminus T$ to $T\setminus I$.
Among the previously proposed primitives for amoebot coordination is the \emph{spanning forest  primitive}~\cite{derakhshandeh2015leader} which organizes the particles into trees to facilitate movement while preserving connectivity.
The root of a tree initiates the movement, and the remaining particles follow via handovers between parents and children.
However, the spanning forest primitive does not impose any additional structure on the resulting spanning trees.
We propose to use a special kind of shortest path trees (SP-trees), called \emph{feather trees}, which were briefly introduced in~\cite{disc22}.

To ensure that all paths in the tree are shortest, we need to control the growth of the tree.
One way to do so, is to use a special counting scheme, where each particle computes its distance to the root. However, to fit into the memory requirements of a single particle, they only compute the difference in distance with neighboring particles~\cite{boulinier2008space}.

\begin{lemma}[\cite{boulinier2008space}]\label{lem:boulinier_sp}
Given a connected particle configuration $\PC$ with diameter~$d$, we can create an SP-tree using at most $O(d)$ rounds.
\end{lemma}

\subsection{Efficient SP-trees}\label{subsec:efficientSPtrees}
To create SP-trees more efficiently for simply connected particle systems, i.e. be able to create multiple of them at the same time, we describe a version of the shortest path map (SPM) data structure~\cite{Mitchell1991} on the grid (see Figure~\ref{fig:spm} (left)).
We say that a particle $q\in\PC$ is \emph{$\PC$-visible} from a particle $p\in\PC$ if there exists a shortest path from $p$ to $q$ in $G$ that is contained in $G_\PC$.
This definition of visibility is closely related to staircase visibility 
in rectilinear polygons~\cite{Culberson1987,Ghosh2007}.
Let $\PC$ be simply connected, and let $R_0\subseteq \PC$ be the subconfiguration of all particles $\PC$-visible from some particle~$r$.
By analogy with the geometric SPM, we refer to $R_0$ as a \emph{region}.
If $R_0=\PC$ then $\SPM(r)$ is simply $R_0$.
Otherwise, consider the connected components $\{\RC_1,\RC_2,\dots\}$ of $\PC \setminus R_0$.
The \emph{window} of $\RC_i$ is a maximal straight-line chain of particles in $R_0$, each of which is adjacent to a particle in $\RC_i$ (e.g., in Figure~\ref{fig:spm} (right), chain $(r_i,w)$ is a window).
Denote by~$r_i$ the closest particle to $r$ of the window $W_i$ of $\RC_i$.
Then $\SPM(r)$ is recursively defined as the union of $R_0$ and $\SPM(r_i)$ in $\RC_i \cup W_i$ for all $i$.
Let $R_i \subseteq \RC_i\cup W_i$ be the set of particles $\PC$-visible from $r_i$.
We call $R_i$ the \emph{visibility region} of $r_i$, and $r_i$ the \emph{root} of $R_i$.
Note that by our definition the particles of a window between two adjacent regions of a shortest path map belong to both regions.
\begin{figure*}[t]
	\centering
	\includegraphics{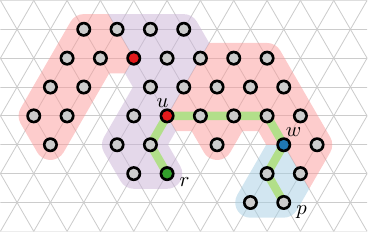}\hfil
	\includegraphics{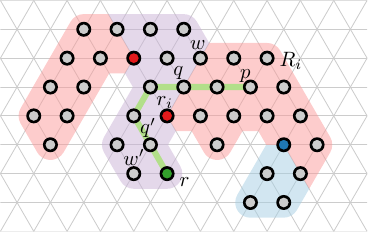}
	\caption{Shortest path map of node $r$. Any shortest path between $r$ and $p$ must pass through the roots of the respective SPM regions ($u$, $w$, and $r_i$). The region $R_0$ (in purple) consists of the particles $\PC$-visible to $r$. The red and the blue particles are the roots of the corresponding SPM regions.
	Right: any path not going through the root of a visibility region can be shortened.}
	\label{fig:spm}
\end{figure*}

\begin{lemma}
Let $r_i$ be the root of a visibility region $R_i$ in a particle configuration $\PC$.
For any particle $p$ in $R_i$, the shortest path from $r$ to $p$ in $\PC$ passes through $r_i$.
\end{lemma}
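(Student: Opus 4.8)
The plan is to reduce the statement to a single geometric inequality about crossing the window $W_i$, exploiting two facts that come directly from the definitions. Since $r_i \in W_i \subseteq R_0$, the particle $r_i$ is $\PC$-visible from $r$, so a shortest path from $r$ to $r_i$ in $G$ lies inside $G_\PC$; and since $p \in R_i$, the particle $p$ is $\PC$-visible from $r_i$, so a shortest path from $r_i$ to $p$ in $G$ lies inside $G_\PC$ as well. Writing $d_G(\cdot,\cdot)$ and $d_{G_\PC}(\cdot,\cdot)$ for graph distance in $G$ and in the occupied subgraph $G_\PC$, these two facts read $d_{G_\PC}(r,r_i)=d_G(r,r_i)$ and $d_{G_\PC}(r_i,p)=d_G(r_i,p)$. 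Concatenating the two visible shortest paths gives a walk from $r$ to $p$ inside $G_\PC$ through $r_i$ of length $d_G(r,r_i)+d_G(r_i,p)$, so $d_{G_\PC}(r,p)\le d_G(r,r_i)+d_G(r_i,p)$. It then remains to prove the matching lower bound and that it is attained only via $r_i$.

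First I would establish a separation property: every path from $r$ to a particle $p \in R_i\setminus R_0$ inside $G_\PC$ contains a particle of $W_i$. Here I would use that $\RC_i$ is a connected component of $\PC\setminus R_0$, that $W_i$ is its window (the maximal straight chain of $R_0$-particles adjacent to $\RC_i$ nearest to $r$), and crucially that $\PC$ is simply connected, so that $\RC_i$ cannot be reached ``around'' the window through a hole. Thus $W_i$ is a cut set separating $r$ from the part of $R_i$ lying in $\RC_i$. The boundary case $p\in W_i$ is handled separately, since such a $p$ already lies in $R_0$.

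The crux is the crossing inequality: for every window particle $w$ we have $d_G(r,w)+d_G(w,p)\ge d_G(r,r_i)+d_G(r_i,p)$, with strict inequality whenever $w\neq r_i$. Granting this, any $r$--$p$ path $\pi$ in $G_\PC$ crosses $W_i$ at some $w$, and splitting $\pi$ at $w$ yields $|\pi|\ge d_G(r,w)+d_G(w,p)\ge d_G(r,r_i)+d_G(r_i,p)$, using $d_{G_\PC}\ge d_G$ on each piece and that a prefix of a shortest path is itself shortest. Hence $d_{G_\PC}(r,p)=d_G(r,r_i)+d_G(r_i,p)$, the bound is met, and by strictness any path avoiding $r_i$ is strictly longer and therefore not shortest; consequently every shortest path passes through $r_i$.

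Proving the crossing inequality is where I expect the real work, and it is the main obstacle. I would argue it from the geometry of grid distance along the straight window: along any grid line the functions $w\mapsto d_G(r,w)$ and $w\mapsto d_G(w,p)$ are each convex, so their sum is convex along $W_i$ and attains its minimum at an endpoint of the chain. Since $r_i$ is by definition the endpoint closest to $r$, and the window emanates from $r_i$ away from $r$ into the shadow that contains $p$ (so that the visibility cone of $r_i$ capturing $p$ sits on the far side of the window), I expect the near endpoint $r_i$ to be the unique minimizer. The delicate points will be verifying the convexity and the direction of monotonicity precisely in the triangular metric, where shortest paths fill a parallelogram and ties are frequent, and upgrading ``minimizer'' to strict optimality so as to force \emph{every} shortest path, rather than merely one, through $r_i$.
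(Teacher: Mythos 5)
There is a genuine gap, and it sits exactly where you flagged it: the crossing inequality in \emph{free} grid distance is not provable, because it is false in general. By splitting $\pi$ at a window particle $w$ and lower-bounding both pieces by $d_G$, you discard the only information that makes the lemma true, namely that $\pi$ lives in $G_\PC$; the particles of $\RC_i$ are precisely those for which all shortest paths in $G$ leave the occupied set, so in $G$ itself nothing forces paths through $r_i$. Concretely: the window is a straight chain, and whenever the window's direction is one of the (at most two) directions used by shortest free-grid paths from window nodes to $p$, each step along $W_i$ away from $r_i$ increases $d_G(r,w)$ by one and decreases $d_G(w,p)$ by one, so the sum $d_G(r,w)+d_G(w,p)$ is \emph{constant} along the window and your required strictness fails at every $w\neq r_i$. (The lemma survives such configurations only because $d_{G_\PC}(w,p)>d_G(w,p)$ for $w\neq r_i$ --- the visibility slack your bound throws away; restoring it would mean proving the inequality with $d_{G_\PC}(w,p)$, which is essentially the lemma itself.) The convexity argument also runs in the wrong direction: $w\mapsto d_G(r,w)+d_G(w,p)$ is indeed convex along a grid line, but a convex function on a segment attains its \emph{maximum} at an endpoint, while its minimum may be interior, so convexity alone yields nothing; you would need monotonicity, which is exactly what breaks in the equality case above. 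Your upper bound via concatenation through $r_i$ and the separation property ($W_i$ is a cut set, by simple connectivity and the fact that distinct components of $\PC\setminus R_0$ are non-adjacent) are both fine and implicit in the paper as well.

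The paper avoids this trap by arguing entirely inside $G_\PC$ and using the geometry of $R_0$ rather than metric estimates. Supposing a shortest path $\pi$ in $G_\PC$ avoids $r_i$, it must cross the window $(r_i,w)$ at some particle $q\neq r_i$; and because $R_0$ is a visibility region and the straight-line \emph{extension} $(r_i,w')$ of the window partitions $R_0$ into two parts with $r$ on the far side, $\pi$ must also cross this extension at some particle $q'$. Shortcutting from $q'$ to $q$ along the straight window line then gives a strictly shorter $r$--$p$ path in $G_\PC$, a contradiction, and the argument recurses for regions deeper in the map. The crossing of the window extension is precisely the step that encodes which side of the window line $r$ lies on --- information invisible to any argument phrased purely in terms of $d_G$. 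If you want to salvage your decomposition, you must keep the second piece of the split in $G_\PC$ and show $d_{G_\PC}(w,p)\geq d_{G_\PC}(r_i,p)+d_G(r_i,w)$ (or an equivalent), which again requires a visibility argument of the paper's type rather than free-grid convexity.
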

\begin{proof}
Assume that there exists a shortest path $\pi$ in $G_\PC$ from $r$ to $p$ that does not pass through $r_i$. Assume further that $R_i$ is adjacent to $R_0$ (see Figure~\ref{fig:spm} (right)).
Path $\pi$ must cross window $W_i$ $(r_i,w)$ at some particle $q \neq r_i$.
The extension of window $(r_i,w)$, $(r_i, w')$, partitions $R_0$ into two parts.
Since $R_0$ is a visibility region, $\pi$ must cross $(r_i, w')$ at some particle $q'$.
Now there exists a shorter path from $r$ to $p$: from $r$ to $q'$ to $q$ to $p$.
Contradiction. The same argument applies recursively to regions $R_i$ further removed from $R_0$.
\end{proof}

\begin{corollary}\label{the:spm}
Any shortest path $\pi$ between $r$ and any other particle $p$ in $\PC$ must pass through the roots of the SPM regions that $\pi$ crosses.
\end{corollary}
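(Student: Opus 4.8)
The plan is to exploit the recursive structure of the shortest path map and reduce the corollary to repeated applications of the preceding lemma. Recall that $\SPM(r)$ is built hierarchically: $R_0$ is the visibility region of $r$, and for each connected component $\RC_i$ of $\PC\setminus R_0$ we recurse on $\SPM(r_i)$ inside $\RC_i\cup W_i$. This induces a tree on the regions, rooted at $R_0$, in which the children of a region are precisely the regions produced by the recursive step on the components of its complement. First I would make precise that a shortest path $\pi$ from $r$ to $p$ crosses a \emph{monotone} descending sequence of regions $R_0 = R^{(0)}, R^{(1)}, \dots, R^{(k)}$ in this tree, with $p\in R^{(k)}$ and each $R^{(j+1)}$ a child of $R^{(j)}$. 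This follows because consecutive regions are separated by windows, so the only way for $\pi$ to leave a region towards $p$ is to pass into one of its children.

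The heart of the argument is an induction on the depth $k$ of the region containing $p$. In the base case $k=0$ we have $p\in R_0$, whose root is $r$ itself; since $\pi$ starts at $r$, it trivially passes through the root of the only region it crosses. For the inductive step, let $\rho^{(k)}$ denote the root of $R^{(k)}$ with $k\geq 1$. Applying the preceding lemma to $R^{(k)}$ shows that every shortest path from $r$ to $p$, and in particular $\pi$, must pass through $\rho^{(k)}$. Here I would use the standard fact that every subpath of a shortest path is itself shortest: the prefix of $\pi$ from $r$ to $\rho^{(k)}$ is a shortest $r$--$\rho^{(k)}$ path, and $\rho^{(k)}$ lies in the window separating $R^{(k)}$ from its parent, hence in $R^{(k-1)}$. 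By the induction hypothesis applied to this prefix, $\pi$ passes through the roots of $R^{(0)},\dots,R^{(k-1)}$; together with $\rho^{(k)}$ this gives all the roots $\pi$ crosses.

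The step I expect to require the most care is justifying the recursive reapplication of the lemma. The lemma is phrased for shortest paths emanating from the global root $r$, whereas the recursion treats the tail of $\pi$ beyond $\rho^{(k)}$ as a fresh shortest path inside the subconfiguration $\RC\cup W$ on which the sub-map $\SPM(\rho^{(k)})$ is defined. I would therefore verify two things: that the tail of $\pi$ stays within the relevant subconfiguration, which is forced by the window separation together with monotonicity in the region tree; and that this tail, being a subpath of a shortest path, is shortest within the subconfiguration, so the lemma's hypotheses are genuinely met at the next level. The remaining subtlety is ruling out that $\pi$ re-enters a region or its window without traversing the corresponding root; this is exactly what the window-crossing shortening argument in the proof of the preceding lemma excludes, since any such detour would contradict the optimality of $\pi$.
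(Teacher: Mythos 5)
Your proposal is correct and follows essentially the same route as the paper: the paper presents the corollary as an immediate consequence of the preceding lemma, whose proof already notes that ``the same argument applies recursively'' to deeper regions. Your induction on the depth in the region tree, combined with the observation that prefixes of shortest paths are shortest and that each region root lies in the window shared with its parent, is simply a careful write-up of that implicit recursive argument.
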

If a particle $p$ is $\PC$-visible from $r$ then there is a \emph{$60^\circ$-angle monotone path}~\cite{dehkordi2014} from $r$ to $p$ in $G_\PC$.
That is, there exists a $60^\circ$-cone in a fixed orientation, such that for each particle $q$ on the path from $q$ to $p$ lies completely inside this cone translated to $q$ (see Figure~\ref{fig:cones} (left)). 

\begin{figure*}[t]
    \centering
    \includegraphics{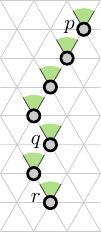}
    \hfil
    \includegraphics[page=1]{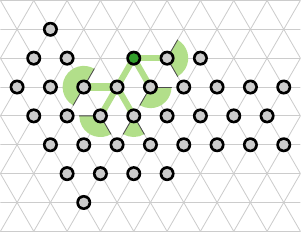}
    \hfil
    \includegraphics[page=2]{angle_trees_non-convex.pdf}
    \caption{Left: An angle monotone path from $r$ to $p$. For every particle~$q$, the remainder of the path lies in a $60^\circ$-cone. Middle: Growing an SP-tree using cones of directions. The particle on the left just extended its cone to $180^\circ$. Right: A couple activations later.}
    \label{fig:cones}
\end{figure*}

We use a version of such cones to grow an SP-tree efficiently.
Each node that is already included in the tree carries a cone of valid growth directions (see Figure~\ref{fig:cones} (middle)).
When a leaf of the tree is activated it includes any neighbors into the tree which are not part of the tree yet and lie within the cone.
A cone is defined as an interval of ports. 
The cone of the root $r$ contains all six ports. 
When a new particle $q$ is included in the tree, then its parent $p$ assigns a particular cone of directions to $q$. Assume parent $p$ has cone $c$ and that $q$ is connected to $p$ via port $i$ of $p$. By definition $i\in c$, since otherwise $p$ would not include $q$ into the tree. 
We intersect $c$ with the $120^\circ$-cone $[i-1, i+1]$ and pass the resulting cone $c'$ on to $q$. (Recall that the arithmetic operations on the ports are performed modulo $6$.)
When doing so we translate $c'$ into the local coordinate system of $q$ such that the cone always includes the same global directions.
This simple rule for cone assignments grows an SP-tree in the visibility region of the root $r$ and it does so in a linear number of rounds.

\begin{lemma}\label{lem:convex}
Given a particle configuration $\PC$ with diameter $d$ which is $\PC$-visible from a particle $r\in\PC$, we can grow an SP-tree in $\PC$ from $r$ using $O(d)$ rounds.
\end{lemma}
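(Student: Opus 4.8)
The plan is to establish three things: that every root-to-node path in the grown tree is a shortest path in $G_\PC$ (so the result is genuinely an SP-tree), that the tree reaches every particle of $\PC$, and that the growth finishes within $O(d)$ rounds.

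First I would pin down the shape of the cones. I claim the invariant that along any root-to-node path the cone is always a contiguous arc of ports of width at most two (three ports, i.e.\ $120^\circ$), and that whenever the cone has width two the node was reached using a single port, namely the centre of the arc. The base case is the first edge out of $r$: via port $i$ the child receives $[i-1,i+1]$. For the inductive step one checks the outgoing port $j\in\{c-1,c,c+1\}$ of a width-two cone centred at $c$ and the two options for a width-one cone; intersecting with $[j-1,j+1]$ either preserves the arc or shrinks it by one, and in every case the set of ports actually \emph{used} along the path stays within two adjacent ports. Hence every tree path uses at most two adjacent grid directions $d_k,d_{k+1}$. On the triangular grid a walk using $a$ steps in direction $d_k$ and $b$ steps in the adjacent direction $d_{k+1}$ reaches $a\,d_k+b\,d_{k+1}$ at graph distance exactly $a+b$ (immediate from the hex-distance formula), so the path is shortest; this is precisely the $60^\circ$-angle-monotonicity discussed above.

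The second and, I expect, hardest step is coverage, because the tree is built by an adversarial asynchronous scheduler and a particle's eventual parent is not determined in advance. The resolving observation is that the cone a particle ends up carrying is in fact a function only of its position relative to $r$, independent of activation order: a particle on a grid ray $d_k$ through $r$ always receives the width-two cone $[k-1,k+1]$, while a particle with $p-r=a\,d_k+b\,d_{k+1}$, $a,b>0$, strictly inside a $60^\circ$ wedge always receives the width-one cone $[k,k+1]$. This holds because any tree path to $p$ is shortest (step one), hence uses exactly the directions of the unique shortest representation of $p-r$, which pins the final cone regardless of which port was used first. With position-determined cones I would induct on $\dist(r,p)$: given a visible $p$ at distance $m$, take the penultimate vertex $p'$ of a $60^\circ$-monotone shortest path to $p$; it lies in $\PC$, is visible, sits at distance $m-1$, and a short case analysis (is $p'$ on a ray or strictly inside a wedge?) shows the direction of the edge $p'\to p$ always lies in the position-determined cone of $p'$. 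By the induction hypothesis $p'$ is in the tree, so at its first activation after inclusion it adds $p$ (unless a neighbour already did); either way $p$ joins the tree, and since every particle of $\PC$ is visible from $r$, the tree spans $\PC$.

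Finally, for the running time I would track the breadth-first frontier by distance layers $L_m=\{p:\dist(r,p)=m\}$. The coverage argument already shows that once all of $L_m$ is in the tree, each $p\in L_{m+1}$ has a predecessor in $L_m$ whose cone admits the connecting port, so by fairness every such predecessor is activated within the next round and $L_{m+1}$ is fully included one round later. Starting from $L_0=\{r\}$, layer $L_m$ is complete by the end of round $m$; as the eccentricity of $r$ is at most the diameter $d$, every particle lies in some $L_m$ with $m\le d$, and the whole SP-tree is grown in $O(d)$ rounds. The only mild care needed is the leaf-versus-internal distinction: a node includes neighbours only while it is a leaf, but since it does so exhaustively at its first post-inclusion activation, any neighbour it could ever add has already been added by then, which keeps both the coverage argument and the per-round frontier advance intact.
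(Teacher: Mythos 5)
Your proof is correct, and its core is the same as the paper's: both arguments establish shortestness from the fact that every tree path, having a cone of at most $120^\circ$ that shrinks to $60^\circ$ at the first turn, uses at most two adjacent grid directions, and both coverage arguments ultimately rest on the uniqueness of the direction multiset of shortest paths on the triangular grid. The packaging differs, though. For coverage the paper argues by contradiction: it takes a shortest path $\sigma$ to an allegedly missed particle $q$, looks at the last tree particle $p$ on $\sigma$, and uses the fact that all shortest paths between fixed endpoints use the same at most two directions to conclude that the tree path to $p$ uses a subset of $\sigma$'s directions, so $p$'s cone contains the next direction. Your ``position-determined cones'' lemma (cone $[k-1,k+1]$ on a ray, cone $[k,k+1]$ strictly inside a wedge, independent of activation order) is that same uniqueness fact promoted to an explicit invariant, and your induction on $\dist(r,p)$ is a constructive rendering of the paper's contradiction; the two are logically equivalent, but yours makes the scheduler-independence explicit rather than implicit. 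Where you genuinely add something is the running time: the paper's proof body only asserts that every particle joins the tree within $O(n)$ rounds and leaves the claimed $O(d)$ bound implicit, whereas your frontier argument --- layer $L_m$ complete by round $m$ because each particle of $L_{m+1}$ has a predecessor in $L_m$ whose position-determined cone admits the connecting port, plus fairness --- actually delivers the stated $O(d)$ bound, since the eccentricity of $r$ is at most $d$. Your closing remark on the leaf-versus-internal distinction is also sound: a particle adds \emph{all} cone-neighbours not yet in the tree at its first activation after inclusion, so nothing is lost once it ceases to be a leaf, and the per-round frontier advance survives.
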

\begin{proof}
We are growing the SP-tree from $r$ using the algorithm described above.
Every round, the particles having a cone grow the tree; therefore, every particle is included in the tree after $O(n)$ rounds.
Observe that any path in $G$ which uses only two adjacent directions from the possible six, is a shortest path. Moreover, every shortest path between two particles $p$ and $q$ uses the same at most two directions.
Consider the path $\pi$ in the SP-tree from $r$ to a particle $p$.
By construction, all particles on $\pi$, except for $r$, have cones of at most $120^\circ$ (which equals three adjacent directions).
Assume that $\pi$ leaves $r$ in the global direction $d_1$.
As soon as $\pi$ deviates from $d_1$ in direction $d_2$ adjacent to $d_1$, the cone of available directions shrinks to $60^\circ$ and contains only the two directions $d_1$ and $d_2$.
Hence $\pi$ uses at most two adjacent directions and is therefore a shortest path.

It remains to show that for any particle $q$ in $\PC$ there is a path in the SP-tree from $r$ to $q$.
Suppose that $q$ is not part of the SP-tree.
Consider a shortest path $\sigma$ from $r$ to $q$ in $\PC$, let $p$ be the last particle on this path that belongs to the SP-tree, and $q'$ be the next particle along $\sigma$.
We argue that $q'$ lies in one of the directions of the cone of particle $p$.
Indeed, as observed above, path $\sigma$ consists of at most two adjacent directions.
The path $\pi$ from $r$ to $p$ in the SP-tree uses the same or a subset of the directions of $\sigma$, and thus the cone of $p$ contains the direction towards $q'$.
Contradiction.
\end{proof}

We now extend this solution to arbitrary simply connected particle systems using the shortest-path map $\SPM(r)$. 
The SP-tree constructed by the algorithm above contains exactly the particles of the visibility region $R_0$ of $\SPM(r)$.
As any shortest path from a window particle to $r$ passes through the root of that window, any window incident to $R_0$ forms a single branch of the SP-tree.
To continue the growth of the tree in the remainder of $\PC$, we extend the cone of valid directions for the root particles of the regions of $\SPM(r)$ by $120^\circ$.
A particle $p$ can detect whether it is a root of an $\SPM(r)$-region by checking its local neighborhood.
Specifically, let the parent of $p$ lie in the direction of the port $i+3$.
If (1) the cone assigned to $p$ by its parent is $[i-1,i]$ (or $[i,i+1]$), (2) the neighboring node of $p$ in the direction $i+2$ (or $i-2$) is empty, and (3) the node in the direction $i+1$ (or $i-1$) is not empty, then $p$ is the root of an $\SPM(r)$-region, and thus $p$ extends its cone to $[i-1,i+2]$ (or $[i-2,i+1]$) (see Figure~\ref{fig:cones} (right)).
Note that an extended cone becomes a $180^\circ$-cone.

\begin{lemma}
\label{lem:tree_hits_root}
Let $\SPM(r)$ be the shortest-path map of a particle $r$ in a simply connected particle system $\PC$.
A particle $u\in\PC$ extends its cone during the construction of an SP-tree if and only if it is the root of a region in $\SPM(r)$.
\end{lemma}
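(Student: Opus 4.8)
The plan is to prove both directions of the equivalence by matching the local detection rule against the global window structure of $\SPM(r)$, using the cone-shrinking behaviour established in the proof of Lemma~\ref{lem:convex}. First I would record a structural fact about cones: by the intersection rule $c'=c\cap[i-1,i+1]$, a particle whose tree-path from $r$ has never turned still carries a $120^\circ$-cone, whereas the cone shrinks to a $60^\circ$-cone $[i-1,i]$ (or $[i,i+1]$) precisely at and after the first turn, with the parent then lying in direction $i+3$ and the incoming travel direction being $i$. Hence condition~(1) is equivalent to ``the path to $p$ has turned and arrives monotonically along $i$, with direction $i+1$ excluded from the cone.'' This is the hook that lets me translate the three port tests into statements about $\PC$-visibility.

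For the ($\Leftarrow$) direction, let $u=r_i$ be the root of a region $R_i$, i.e.\ the particle of the window $W_i$ closest to $r$. The direction $\delta$ from $r_i$ into the adjacent occluded component $\RC_i$ cannot lie inside $r_i$'s cone: otherwise the SP-tree of $R_0$ would grow a monotone---hence shortest, by Lemma~\ref{lem:convex}---path into $\RC_i$, making $\RC_i$ $\PC$-visible from $r$ and contradicting $\RC_i\subseteq\PC\setminus R_0$. Because $r_i$ sits at the reflex corner where visibility first breaks, the only way its cone can exclude $\delta$ while $r_i$ itself stays $\PC$-visible is the $60^\circ$ configuration $[i-1,i]$ with parent at $i+3$ and $\delta=i+1$; this yields conditions~(1) and~(3). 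For condition~(2) I would argue the node at $i+2$ is empty: an occupied neighbour there would either supply a monotone shortest path from $r$ bending into $\RC_i$ (making it visible, a contradiction) or be a window particle of $\RC_i$ strictly closer to $r$ than $r_i$ (contradicting the minimality defining $r_i$).

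For the ($\Rightarrow$) direction, suppose $p$ passes conditions~(1)--(3), and let $q$ be the occupied particle in direction $i+1$. Since $i+1\notin[i-1,i]$, the tree never adopts $q$ as a child of $p$, and I would show $q$ is genuinely occluded: any shortest continuation from $r$ turning onto $q$'s side would have to pass through the node at $i+2$, which is empty by condition~(2), so no shortest path stays within $G_\PC$; thus $q\in\PC\setminus R_0$ lies in some component $\RC_j$ with $p$ on its window $W_j$. To upgrade ``window particle'' to ``window root,'' I would note that an interior particle of $W_j$ has $\RC_j$ adjacent on both its $i+1$ and $i+2$ sides, making the node at $i+2$ occupied and violating condition~(2); hence the empty $i+2$ node certifies $p$ as the reflex endpoint of $W_j$ nearest $r$, i.e.\ $r_j$. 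Corollary~\ref{the:spm} then confirms that shortest paths from $r$ into $\RC_j$ indeed funnel through this $r_j$, consistent with $p$ being where the cone must be extended.

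The step I expect to be the main obstacle is the staircase-visibility bookkeeping in the ($\Rightarrow$) direction: rigorously relating the emptiness of the node at $i+2$ both to genuine occlusion of $q$ and to $p$ being the \emph{closest} window particle rather than an interior one, while also pinning the occluded direction to exactly $i+1$ (so that a $120^\circ$-cone root is impossible). Both chirality branches---$[i-1,i]$ versus $[i,i+1]$---must be treated, but they are mirror images, so handling one and invoking symmetry suffices.
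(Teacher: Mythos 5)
Your proposal follows essentially the same route as the paper's proof: both directions are handled by translating the three local port conditions into statements about windows of $\SPM(r)$, using the cone-shrinking mechanics established for Lemma~\ref{lem:convex}; your $(\Leftarrow)$ argument (the occluded direction must be excluded from the cone, which forces the $60^\circ$ configuration, with window-minimality forcing the empty node at $i+2$) matches the paper's in substance and, frankly, in its level of terseness.

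One concrete caveat on your $(\Rightarrow)$ direction: the claim that any shortest continuation from $r$ onto $q$'s side ``would have to pass through the node at $i+2$'' is not literally true. With cone $[i-1,i]$, shortest paths from $r$ to $q$ are staircases using only directions $i-1$ and $i$, and such a path can enter $q$ with a final step in direction $i-1$, coming from the node that is the common neighbor of $q$ and the empty node (one step in direction $i+1$ beyond the empty node) --- thereby avoiding the empty node entirely. Ruling this out needs a topological ingredient you did not name: if that node were occupied and monotonically reachable from $r$ inside $G_\PC$, then the tree path $r\to u$, the edges $u\to q$ and onward, and the monotone path back to $r$ would form a cycle in $G_\PC$ enclosing the empty node at $i+2$, creating a hole and contradicting simple connectivity. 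The paper compresses exactly this step into its (equally terse) assertion that any path from $r$ to $v$ must cross the chain $W'$, so you identified the right obstacle when you flagged the ``staircase-visibility bookkeeping''; just be aware that condition~(2) alone does not close it --- simple connectivity does. A second, minor omission: the paper's $(\Rightarrow)$ direction first assumes $u\in R_0$ and then explicitly invokes recursion for roots of deeper regions (where growing the tree beyond an extended root is equivalent to growing a fresh SP-tree in the occluded component); your plan should state the same reduction.
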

\begin{proof}
Let $w$ be a root of an $\SPM(r)$-region. 
We first argue that $w$ extends its cone when it is a leaf of a growing SP-tree and is activated.
Let $w$ lie in an $\SPM(r)$-region with root $u$ (as in Figure~\ref{fig:spm}).
Any shortest path from $u$ to $w$ in $\PC$ uses exactly two adjacent directions, one of which is directed along the window of $w$.
Otherwise either $w$ and the particles of the window would not be $\PC$-visible to $u$, or all neighbors of $w$ would be $\PC$-visible to $u$.
W.l.o.g., let these two directions be $i$ and $i-1$ in the local coordinate system of $w$, and let $i$ be the direction along the window.
All shortest paths from $w$ to $u$ must go in direction $i+3$, otherwise again all neighbors of $w$ would be $\PC$-visible to $u$.
Then, $w$ must have a neighboring particle in the direction of $i+1$, and the neighboring node in the direction $i+2$ must be empty.
Otherwise, $w$ would not be part of a window, or it would not be a root of the window.
Thus, $w$ extends its cone.

Let $u$ be a leaf of a growing SP-tree which extends its cone. We now argue that $u$ must be a root of an $\SPM(r)$-region.
Let $R_0$ be the visibility region of $r$ in $\PC$. We first assume that $u\in R_0$.
Let the parent $p$ of $u$ lie in the direction of port $i+3$.
W.l.o.g., assume that $u$ extends its cone $[i-1,i]$ to $[i-1,i+2]$, and thus the neighboring node at port $i+1$ is non-empty, and the neighboring node at port $i+2$ is empty.
Let $v$ be the neighbor of $u$ in the direction $i+1$. 
Consider a maximal chain of particles $W'$ in $\PC$ from $u$ in the direction of port $i$.
Particles $W'$ are $\PC$-visible from $r$ as a shortest path from $r$ to any $q\in W'$ uses only two directions $i-1$ and $i$.
Particle $v$ is not $\PC$-visible from $r$, as any path from $r$ to $v$ must cross $W'$, and thus use an extra direction $i+1$ or $i+2$.
Consider the connected component $\RC_v$ of $\PC \setminus R_0$ containing $v$.
Since $u$ is adjacent to $v$ it is part of some window $W$ of $\RC_v$.
The parent $p$ of $u$ is not adjacent to $\RC_v$. 
Since all shortest paths from $r$ to a particle in $W$ pass through $u$, $u$ must be the root of $W$. 
Since $u$ lies on the boundary of $\RC_v$, growing the SP-tree further from $u$ is equivalent to growing a new SP-tree only in $\RC_v$ with $u$ as the root. Hence the same argument applies recursively.
\end{proof}

Lemmas~\ref{lem:convex} and~\ref{lem:tree_hits_root} together imply Theorem~\ref{the:SP-tree}.

\begin{theorem}\label{the:SP-tree}
Given a simply connected particle configuration $\PC$ with diameter $d$ and a particle $r\in\PC$ we can grow an SP-tree in $\PC$ from $r$ using $O(d)$ rounds.
\end{theorem}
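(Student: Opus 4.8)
The plan is to run exactly the cone-growing procedure described above, with the $180^\circ$ cone-extension rule at region roots, and to verify that the two preceding lemmas supply the two ingredients we need: correctness of the constructed tree and an $O(d)$ bound on its construction time. Lemma~\ref{lem:convex} already settles the fully-visible case, so the task reduces to showing that the cone extensions knit the per-region SP-trees into a single SP-tree spanning all of $\PC$, and that this happens fast.

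First I would establish correctness by induction over the recursive region hierarchy of $\SPM(r)$. In the base region $R_0$ (the particles $\PC$-visible from $r$), Lemma~\ref{lem:convex} guarantees that the procedure produces an SP-tree covering exactly $R_0$. By Lemma~\ref{lem:tree_hits_root}, the particles that extend their cones are precisely the roots $r_i$ of the adjacent regions; at such a root the cone becomes a $180^\circ$-cone, so growth continues into $\RC_i\cup W_i$. I would argue that growing from $r_i$ with the extended cone is equivalent to starting a fresh cone-growth in the subconfiguration $\RC_i\cup W_i$ with root $r_i$, which is the situation of Lemma~\ref{lem:convex} one level down. Applying the induction hypothesis to each region, and using that the regions of $\SPM(r)$ cover all of $\PC$, yields completeness. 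For the shortest-path property, Corollary~\ref{the:spm} states that every shortest path from $r$ crosses each region through its root; combined with the within-region argument of Lemma~\ref{lem:convex} (cones never admit a third direction), this shows that every root-to-leaf path in the assembled tree is a shortest path, i.e. the result is indeed an SP-tree.

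The timing bound is where the real work lies. The naive reading of the construction --- finish $R_0$, then grow the next region, and so on --- would cost $O(d)$ per region and hence $O(d\cdot k)$ for $k$ regions, which is far too slow. The point I would stress is that the cone rule is entirely local and requires no global synchronization: a particle grows as soon as it carries a cone and is activated, so the different regions are grown in a pipelined fashion rather than sequentially. Concretely, every node at tree-depth $j$ receives a cone and becomes a leaf after at most $O(j)$ rounds, because in each round every particle is activated at least once and therefore the current frontier advances by at least one layer. Since every root-to-leaf path is a shortest path, the depth of the completed tree is at most the largest $r$-to-particle distance in $\PC$, which is bounded by the diameter $d$. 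Hence all particles are included within $O(d)$ rounds, independently of the number of regions.

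The main obstacle is precisely this pipelining argument: one must check that the propagation into a deeper region begins the moment its root extends its cone, without waiting for any earlier region to be completed, so that the total running time is governed by the tree depth $d$ rather than by the number of $\SPM(r)$-regions. Everything else follows by plugging Lemmas~\ref{lem:convex} and~\ref{lem:tree_hits_root} into the recursive region structure.
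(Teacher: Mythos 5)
Your proposal is correct and takes essentially the same route as the paper: the paper derives Theorem~\ref{the:SP-tree} in one line by combining Lemma~\ref{lem:convex} (within-region correctness and layer-per-round growth) with Lemma~\ref{lem:tree_hits_root} (cones extend exactly at $\SPM(r)$-region roots). Your explicit induction over the region hierarchy and the pipelining argument bounding the running time by the tree depth $d$ rather than by $O(d \cdot k)$ for $k$ regions are precisely the implicit content of that derivation, just spelled out in detail.
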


 Note that this version of SP-tree is created using the same amount of rounds as the one in \cite{boulinier2008space} (Lemma~\ref{lem:boulinier_sp}), and only works for simply connected configurations.
 Nevertheless, we can use this version to create trees that we can overlap.

\subsection{Feather trees}\label{subsec:feathertrees}
These SP-trees, are not unique: the exact shape of the tree depends on the activation sequence of its particles. 
Our approach to the reconfiguration problem is to construct multiple overlapping trees which the particles use to navigate across the structure. 
As the memory capacity of the particles is restricted, they cannot distinguish between multiple SP-trees by using ids.
Thus we need SP-trees that are unique and have a more restricted shape, so that the particles can distinguish between them by using their geometric properties.
In this section we hence introduce \emph{feather trees} which are a special case of SP-trees that use narrower cones during the growth process.
As a result, feather trees bifurcate less and have straighter branches.

Feather trees follow the same construction rules as our previous SP-trees, but with a slightly different specification of cones.
We distinguish between particles on \emph{shafts} (emanating from the root or other specific nodes) and \emph{branches} (see Figure~\ref{fig:feather_tree} (left)).
The root $r$ chooses a maximal independent set of neighbors $N_\mathit{ind}$; it contains at most three particles and there are at most two ways to choose.
The particles in $N_\mathit{ind}$ receive a standard cone with three directions (a 3-cone), and form the bases of shafts emanating from $r$.
All other neighbors of $r$ receive a cone with a single direction (a 1-cone), and form the bases of branches emanating from $r$.
For a neighbor~$p$ across the port $i$, $p$ receives the cone $[i-1,i+1]$, translated to the coordinate system of $p$, if $p$ is in $N_\mathit{ind}$, and the cone $[i]$ otherwise.
The shaft particles propagate the 3-cone straight, and 1-cones into the other two directions, thus starting new branches.
Hence all particles (except for, possibly, the root) have either a 3-cone or a 1-cone.
The particles with 3-cones lie on shafts and the particles with 1-cones lie on branches.

We extend the construction of the tree around reflex vertices on the boundary of $\PC$ in a similar manner as before.
If a branch particle $p$ receives a 1-cone from some direction $i+3$, and the direction $i+2$ (or $i-2$) does not contain a particle while the direction $i+1$ (or $i-1$) does, then $p$ 
initiates a growth of a new shaft in the direction $i+1$ (or $i-1$) by sending there a corresponding 3-cone (see Figure~\ref{fig:feather_tree} (right)).

\begin{figure*}[t]
\centering
    \includegraphics{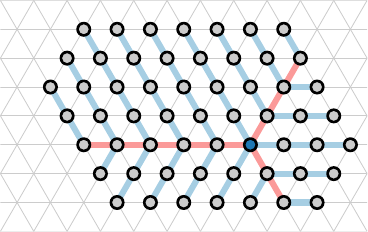}
    \hfil
    \includegraphics{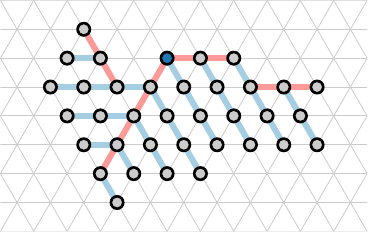}
    \caption{
   Two feather trees growing from the dark blue root. Shafts are red and branches are blue. Left: every particle is reachable by the initial feathers; Right: additional feathers are necessary.
}
    \label{fig:feather_tree}
\end{figure*}

Feather trees are a more restricted version of SP-trees (Theorem~\ref{the:SP-tree}).
For every feather tree, there exists an activation order of the particles such that the SP-tree algorithm would create this specific feather tree.
This leads us to the following lemma.

\begin{lemma}\label{lem:feather_trees}
Given a simply connected particle configuration $\PC$ with diameter $d$ and a particle $r\in \PC$, we can grow a feather tree from $r$ in $O(d)$ rounds.
\end{lemma}
Every particle is reached by a feather tree exactly once, from one particular direction.
Hence a feather tree is independent of the activation sequence of the particles.
In the following we describe how to navigate a set of overlapping feather trees.
To do so, we first identify a useful property of  shortest paths in feather trees.

We say that a vertex $v$ of $G_\PC$ is an \emph{inner vertex}, if $v$ and its six neighbors lie in the core $I\cap T$. 
All other vertices of the core are \emph{boundary vertices}. 
A \emph{bend} in a path is formed by three consecutive vertices that form a $120^\circ$ angle. 
We say that a bend is an \emph{boundary bend} if all three of its vertices are boundary vertices; otherwise the bend is an \emph{inner bend}.

\begin{definition}[Feather Path]
A path in $G_\PC$ is a \emph{feather path} if it does not contain two consecutive inner bends.
\end{definition}
We argue that every path $\pi$ from the root to a leaf in a feather tree is a feather path.
This follows from the fact that inner bends can occur only on shafts, and $\pi$ must alternate visiting shafts and branches.

\begin{lemma}\label{lem:feather_trees_bends}
    A path between a particle $s$ and a particle $t$ is a feather path if and only if it lies on a feather tree rooted at $s$.
\end{lemma}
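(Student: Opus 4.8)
The plan is to prove the biconditional in Lemma~\ref{lem:feather_trees_bends} by establishing the two implications separately, building on the structural observation already stated in the excerpt: inner bends occur only on shafts, and a root-to-leaf path must alternate between shafts and branches. First I would prove the forward-easy direction: if a path $\pi$ from $s$ to $t$ lies on a feather tree rooted at $s$, then $\pi$ is a feather path. Here I would decompose $\pi$ into its maximal alternating segments of shaft and branch particles, using the cone-assignment rules from Section~\ref{subsec:feathertrees}: a shaft carries a 3-cone and may produce a bend, while a branch carries a 1-cone (a single direction) and is therefore perfectly straight. Since a branch cannot contain a bend at all, any two consecutive bends on $\pi$ would both have to lie on shafts; but consecutive shaft particles along a root-to-leaf path are separated by branch transitions (a new shaft is only initiated from a branch particle at a reflex vertex, or a shaft continues straight without bending), so I would argue that two inner bends cannot be adjacent. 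This shows $\pi$ contains no two consecutive inner bends, i.e.\ $\pi$ is a feather path.

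For the converse — if the $s$-to-$t$ path $\pi$ is a feather path, then it lies on a feather tree rooted at $s$ — I would reuse the remark immediately preceding Lemma~\ref{lem:feather_trees} that for every feather tree there is an activation order of the SP-tree algorithm producing it, together with Lemma~\ref{lem:feather_trees} itself guaranteeing that a feather tree rooted at $s$ exists and spans $\PC$. The strategy is to show that a given feather path $\pi$ can be realized as a branch of some feather tree rooted at $s$. I would proceed by induction along $\pi$, maintaining the invariant that the prefix traversed so far is consistent with a valid cone propagation: a straight segment propagates a 3-cone along a shaft, an inner bend consumes the turn allowed by a 3-cone and forces the next segment onto a 1-cone branch (the constraint that the feather path forbids two consecutive inner bends is exactly what guarantees the branch has settled into a single direction before the next turn is demanded), and a boundary bend corresponds to the reflex-vertex rule that spawns a fresh shaft. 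At each vertex I would verify that the direction taken by $\pi$ lies inside the cone the feather-tree construction would assign, exactly mirroring the visibility-region argument in the proof of Lemma~\ref{lem:convex}.

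The main obstacle I expect is the converse direction, specifically reconciling the \emph{local} no-two-consecutive-inner-bends condition with the \emph{global} cone-propagation rules so that the feather path is genuinely a shortest path realizable by the construction. The subtlety is that the feather-path definition is phrased purely in terms of bend types (inner vs.\ boundary), whereas the feather-tree growth is governed by cones and the reflex-vertex detection rule; I would need to show these two descriptions coincide. Concretely, I would argue that a boundary bend always sits at or near a reflex vertex of $\partial\PC$ where the shaft-spawning rule of Section~\ref{subsec:feathertrees} fires, so boundary bends are ``free'' (they reset the cone to a new 3-cone without violating shortest-path monotonicity), while inner bends must be charged against the single turn a 3-cone permits before degenerating to a 1-cone. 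Making this correspondence precise — and checking that every feather path is in particular a shortest path, so that Corollary~\ref{the:spm} and the visibility-region machinery apply — is the delicate part; the rest of the argument is bookkeeping over the shaft/branch alternation already sketched in the text.
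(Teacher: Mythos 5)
Your proposal is correct and follows essentially the same two-part argument as the paper: the forward direction via the observation that inner bends occur only on shafts and that a path cannot leave a branch without a boundary bend, and the converse via letting the feather tree follow $\pi$ bend-by-bend (initial direction becomes a shaft, each bend is a shaft-to-branch transition, each boundary bend a branch-to-shaft transition via the reflex-vertex rule). The only difference is that the paper never needs your anticipated ``delicate part'' --- it invokes neither the activation-order realizability remark, nor Lemma~\ref{lem:convex}, nor the shortest-path/SPM machinery of Corollary~\ref{the:spm}; the direct bend-following construction is treated as self-contained.
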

\begin{proof}
    To show that every path on a feather tree is a feather path, consider path $\pi$ from root $s$ to an arbitrary particle $t$ on the feather tree rooted at $s$.
    Assume for contradiction that there are two consecutive inner bends on $\pi$.
    Only bends on a shaft (shaft bends) can be inner bends.
    Hence $\pi$ must contain two consecutive shaft bends. However, at a shaft bend the path moves on to a branch. The path cannot leave this branch without a bend, which is necessarily a boundary bend. Contradiction.

    To show that every feather path lies on a feather tree, consider a feather path $\pi$ from some particle $s$ to an arbitrary particle $t$.
    Let the direction that $\pi$ leaves $s$ be the shaft of the tree rooted at $s$.
    If $\pi$ makes a bend, the feather tree makes the same bend, going from a shaft to a branch.
    Now $\pi$ cannot make another bend unless it is a boundary bend.
    If it does make a boundary bend, the feather tree makes this boundary bend by going from a branch to a shaft.
    The feather tree follows $\pi$ for every bend it makes and therefore $\pi$ is part of a feather tree rooted at $s$.    
\end{proof}

\subparagraph{Navigating feather trees}
Consider a directed graph composed of multiple overlapping feather trees with edges pointing from roots to leaves.
Due to its limited memory, a particle cannot store the identity of the tree it is currently traversing.
Despite that, particles can navigate down the graph towards the leaf of some feather tree and remain on the correct feather tree simply by counting the number of inner bends and making sure that the particle stays on a feather path (Lemma~\ref{lem:feather_trees_bends}).
The number of inner bends can either be zero or one, and thus storing this information does not violate the assumption of constant memory per particle.
Thus, when starting at the root of a feather tree, a particle $p$ always reaches a leaf of that same tree.
In particular, it is always a valid choice for $p$ to continue straight ahead (if feasible).
A left or right $120^\circ$ turn is a valid choice if it is a boundary bend, or if the last bend $p$ made was boundary.

When moving against the direction of the edges, up the graph towards the root of some tree, we cannot control which root of which feather tree a particle $p$ reaches, but it still does so along a shortest path.
In particular, if $p$'s last turn was on an inner bend, then its only valid choice is to continue straight ahead.
Otherwise, all three options (straight ahead or a $120^\circ$ left or right turn) are valid.

\section{Supply and demand}\label{sec:supply-demand}
Each supply root organizes its supply component into an SP-tree; the supply particles will navigate through the supply roots into the core $I\cap T$ and towards the demand components along a \emph{supply graph}.
The supply graph, constructed in $I\cap T$, serves as a navigation network for the particles moving from the supply to the demand along shortest paths.
Let $G_{I\cap T}$ be the subgraph of $G$ induced on the nodes of $I\cap T$.
We say a supply graph $S$ is a subgraph of $G_{I\cap T}$ connecting every supply root $s$ to every demand root $d$ such that the following three \emph{supply graph properties} hold:
\begin{enumerate}
    \item for every pair $(d, s)$ a shortest path from $d$ to $s$ in $S$ is also a shortest path in $G_{I\cap T}$,
    \item for every pair $(d, s)$ there exists a shortest path from $d$ to $s$ in $S$ that is a feather path,
    \item every particle~$p$ in $S$ lies on a shortest path for some pair $(d, s)$.
\end{enumerate}

We orient the edges of $S$ from demand to supply, possibly creating parallel edges oriented in opposite directions.
For a directed edge from $u$ to $v$ in $S$, we say that $u$ is the predecessor of $v$, and $v$ is the successor of $u$.

To create the supply graph satisfying the above properties, we use feather trees rooted at demand roots.
Every demand root initiates the growth of its feather tree.
When a feather tree reaches a supply root, a \emph{supply found token} is sent back to the root of the tree.
Note that if several feather trees overlap, a particle $p$ in charge of forwarding the token up the tree cannot always determine which specific direction the corresponding root of the tree is.
It thus sends a copy of the token to all \emph{valid} parents (predecessors of $p$ on every possible feather path of the token), and so the token eventually reaches all demand roots reachable by a feather path from the node it was created in.
To detect if a token has already made an inner bend, the supply found token carries a flag $\beta$ that is set once the token makes an inner bend.
Specifically, a particle~$p$ that receives a supply found token $t$ does the following:

\begin{enumerate}
\item $p$ marks itself as part of the supply graph $S$, and adds the direction $i$ that $t$ came from as a valid successor in $S$, 
\item from the set of all its predecessors (for all incoming edges), $p$ computes the set $U$ of valid predecessors. When $p$ is not a reflex vertex, $U=[i+3]$ if $\beta=1$, and $U=[i+2,i+4]$ if $\beta=0$. For reflex vertices, $\beta$ is reset to $0$ and more directions might be valid. Particle~$p$ then adds $U$ to the set of its valid predecessors in $S$,
\item $p$ sends copies of $t$ (with an updated value of $\beta$) to the particles in $U$, 
and
\item $p$ stores $t$ in its own memory.
\end{enumerate}

\noindent Each particle~$p$, from each direction $i$, stores at most one token with flag $\beta$ set to $1$ and one with $\beta$ set to $0$.
Hence $p$ can store the corresponding information in its memory.
If a particle $p$ already belongs to the supply graph $S$ when it is reached by a feather tree $F$, then $p$ checks if its predecessor $q$ in $F$ is a valid parent for any token~$t$ stored in $p$, and, if so, adds $q$ to the set of its valid predecessors in $S$ (as in step~(2)).
For each of these tokens, but for at most one for each values of $\beta$, $p$ sends a copy of $t$ to $q$ (as in step (3)).

\begin{lemma}\label{lem:supply_graph}
Given a simply connected particle configuration $\PC$ with diameter $d$, a set of particles marked as supply roots, and a set of particles marked as demand roots, a supply graph can be constructed in $O(d)$ rounds.
\end{lemma}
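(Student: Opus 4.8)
The plan is to decompose the construction into two propagation phases and bound each by $O(d)$ rounds. In the first phase, every demand root grows its own feather tree outward into the core; in the second phase, the \emph{supply found} tokens created at reached supply roots travel back up these trees to establish the edges of the supply graph $S$. Both phases are breadth-first wavefronts that advance one step per round through a structure of diameter $d$, and constant memory per particle suffices throughout, so the total is $O(d)$ rounds.

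For the growth phase I would invoke Lemma~\ref{lem:feather_trees}: a single feather tree grows from its root in $O(d)$ rounds. The demand roots grow their trees simultaneously, and the trees may overlap. The key observation is that $S$ is recorded purely as direction-based predecessor and successor relations at each particle, so no particle needs to store the identity of any individual tree. A particle is reached from at most six directions and processes the cone assignments and edge markings for all trees reaching it in a given round using only constant memory. Hence parallel growth does not slow any individual tree, and after $O(d)$ rounds every particle of $\PC$ has been incorporated into all feather trees that reach it.

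For the token phase I would argue that each supply found token travels from a supply root to a demand root along a feather path. Since a feather tree is an SP-tree, such a path is a shortest path in $G_\PC$ and hence has length at most $d$. The subtlety is congestion: a single particle may lie on the feather paths of tokens originating at many different supply roots. Here the flag $\beta$ is essential. Each particle stores at most one token per incoming direction for each of the two values of $\beta$, and forwards at most one token per value of $\beta$ to each valid predecessor. Thus the work performed by any particle is bounded by a constant number of forwards in total---not per round---and redundant tokens are discarded. The first token to traverse each edge therefore propagates with no congestion delay, so the reverse wavefront advances one step per round and reaches all demand roots within $O(d)$ rounds.

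The main obstacle is precisely this congestion analysis. I would formalize it by noting that the relevant quantity at each particle is not the number of tokens it receives but the number of distinct direction-and-$\beta$ combinations, which is at most twelve; the extra predecessors admitted at reflex vertices, together with the resetting of $\beta$ there, only add a constant factor and do not lengthen any feather path beyond $d$. Once a particle has forwarded a token for a given combination it need never forward another, since the corresponding edge of $S$ is already present. Combining the $O(d)$ bound for the growth phase with the $O(d)$ bound for token propagation---the two phases may overlap in time, as a supply root reached early may begin sending tokens while other trees are still growing---yields the claimed $O(d)$ bound.
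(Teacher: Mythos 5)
Your proposal follows essentially the same route as the paper's proof: grow feather trees from all demand roots in parallel and invoke Lemma~\ref{lem:feather_trees} for the growth phase, then bound the back-propagation of supply found tokens along feather paths, which are shortest paths and hence of length at most $d$; your per-$(i,\beta)$ storage bound is exactly the paper's mechanism for keeping memory and congestion constant, and your remark that the two phases overlap matches the paper's setup. One caution: your closing claim that once a particle has forwarded a token for a given combination it \emph{never} needs to forward another is in tension with the completeness step the paper makes explicit. Trees grow at different speeds, so a feather tree $F$ may reach a particle $p$ only \emph{after} the relevant tokens have already passed through $p$; the algorithm then has $p$ forward a copy of the \emph{stored} token to its new valid predecessor in $F$ (this is why step~(4) stores $t$, and why the paper's proof states that every supply root is found either directly or \emph{indirectly}). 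Without this re-forwarding, the demand root of a late-growing tree would never receive a supply found token and the pair $(d,s)$ would stay unconnected, violating the first supply graph property. Your earlier rule --- forwarding at most one token per value of $\beta$ to \emph{each} valid predecessor, including predecessors added later --- does accommodate this, so read charitably your argument is complete; but as written the two statements conflict, and you should resolve them explicitly in favor of the per-predecessor rule. A small point in your favor: you derive the $O(d)$ bound directly from feather paths being shortest paths, which matches the lemma statement more precisely than the paper's own proof text, which loosely writes $O(n)$ at the corresponding steps.
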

\begin{proof}
The graph $S$ constructed via the steps listed above satisfies the three supply graph properties by construction.
The algorithm which constructs the supply graph grows a feather tree from every demand root $d$ in parallel.
Hence, by Lemma~\ref{lem:feather_trees} every particle becomes a part of every tree in $O(n)$ rounds.
When a supply root $s$ is reached by a tree $F_d$, a supply found token $t$ is sent back to $d$ and it reaches $d$ in $O(n)$ rounds.
Every supply root~$s$ is found by every feather tree $F$ either directly, or indirectly by reaching a particle $p$ that is already in $S$ and that contains a token $t$ coming from $s$ for which $F$ is a valid predecessor.
Thus, the supply graph is constructed in $O(n)$ rounds in total.
\end{proof}


\subparagraph{Bubbles}
\begin{figure}[b]
\centering
    \includegraphics{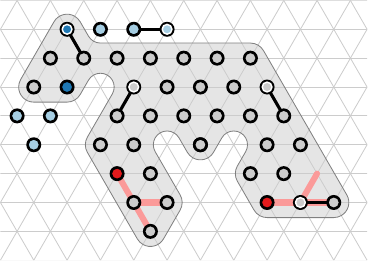}
    \caption{
   A reconfiguration process with five expanded particles holding bubbles (outlined in white). Supply particles are blue, supply roots dark blue. Demand roots (red) store spanning trees of their demand components. The supply graph is not shown.}
    \label{fig:bubbles}
\end{figure}

Particles move from supply to demand.
However, for ease of presentation and analysis, we introduce the abstract concept of demand \emph{bubbles} that move from demand to supply, in the direction of edges of $S$, see Figure~\ref{fig:bubbles}.
Let us assume for now that the supply graph $S$ has been constructed (in fact, its construction can occur in parallel with the reconfiguration process described below).
Starting with a corresponding demand root $d$, each demand component $D$ is constructed by particles flowing from the core $I \cap T$, according to the spanning tree of $D$ stored in $d$.
Every time a leaf particle expands in $D$, it creates a \emph{bubble} of demand that needs to travel through $d$ down $S$ to the supply, where it can be resolved.
Bubbles move via a series of handovers along shortest paths in $S$.
An expanded particle~$p$ holding a bubble~$b$ stores two values associated with it.
The first value $\beta$ is the number of inner bends $b$ took since the last boundary bend ($\beta\in\{0,1\}$), and is used to route the bubbles in $S$.
The second value $\delta$ stores the general direction of $b$'s movement; $\delta=\textsc{s}$ if $b$ is moving forward to supply, and $\delta=\textsc{d}$ if $b$ is moving backwards to demand.

If a particle $p$ holding a bubble~$b$ wants to move $b$ to a neighboring particle $q$, $p$ can only do so if $q$ is contracted.
Then, $p$ initiates a pull operation, and thereby transfers $b$ and its corresponding values to $q$.
Thus the particles are pulled in the direction of a demand root, but the bubbles travel along $S$ from a demand root towards the supply.

A supply component may become empty before all bubbles moving towards it are resolved.
In this case, the particles of $S$ have to move the bubbles back up the graph.
Particles do not have sufficient memory to store which specific demand root bubbles came from.
However, because of the first supply graph property, every demand root has a connection to the remaining supply.
While a bubble is moved back up along $S$, as soon as there is a different path towards some other supply root, it is moved into that path.
Then, the edges connecting to the now empty supply are deleted from $S$.
Moreover, other edges that now point to empty supply or to deleted edges are themselves deleted from $S$.
As the initial and target shapes have the same size, the total number of bubbles equals the number of supply particles.
Therefore, once all bubbles are resolved, the reconfiguration problem is solved.

In the remainder of the paper we may say ``a bubble activates'' or ``a bubble moves''.
By this we imply that ``a particle holding a bubble activates'' or ``a particle holding a bubble moves the bubble to a neighboring particle by activating a pull handover''.

\section{Navigating the supply graph}\label{sec:nav}
When the demand and supply roots are connected with the supply graph $S$, as described in Section~\ref{sec:supply-demand}, the reconfiguration process begins.
Once a demand root $d$ has received a supply found token from at least one successor ($d$ is added to $S$), it begins to construct its demand component $D$ along the spanning tree $\TC_D$ that $d$ stores, and starts sending demand bubbles into $S$.
The leafs of the partially built spanning tree $\TC_D$ carry the information about their respective sub-trees yet to be built, and pull the chains of particles from $d$ to fill in those sub-trees, thus generating bubbles that travel through $d$ into $S$.

With each node $v$ of $S$, for every combination of the direction $i$ to a predecessor of $v$ and a value of $\beta$, we associate a value $\lambda(i,\beta)\in\{\text{true},\text{false}\}$, which encodes the liveliness of feather paths with the corresponding value of $\beta$ from the direction $i$ through $v$ to some supply.
If $\lambda(i,\beta)=\text{false}$ then, for a given value of $\beta$, there are no feather paths through $v$ to non-empty supply in $S$.
Note that here we specifically consider nodes of $S$, and not the particles occupying them.
When particles travel through $S$, they maintain the values of $\lambda$ associated with the corresponding nodes.
Initially, when $S$ is being constructed, $\lambda=\text{true}$ for all nodes, all directions, and all $\beta$.
When a bubble $b$ travels down $S$ to a supply component that turns out to be empty, $b$ reverses its direction.
Then, for all the nodes that $b$ visits while reversing, the corresponding value $\lambda$ is set to $\text{false}$, thus marking the path as dead.

For an expanded particle $p$ occupying two adjacent nodes of $S$, denote the predecessor node as $v_a$, and the successor node as $v_b$.
By our convention, we say the bubble in $p$ occupies $v_b$.
When particle~$p$ with a bubble~$b$ activates, it performs one of the following operations.
It checks them in order and performs the first action available.
\begin{enumerate}
    \item If $\delta=\textsc{s}$ ($b$ is moving to supply) and $v_b$ is inside a supply component, then $p$ pulls on any contracted child in the spanning tree of the supply; if $v_b$ is a leaf, then $p$ simply contracts into $v_a$, thus resolving the bubble.
    \item If $\delta=\textsc{s}$ and $v_b\in S$, $p$ checks which of the successors of $v_b$ in $S$ lie on a feather path for $b$ and are alive (i.e., their corresponding  $\lambda=\text{true}$).
    If there is such a successor $q$ that is contracted, $p$ pulls $q$ and sends it the corresponding values of $\beta$ and $\delta$ (while updating $\beta$ if needed), i.e., $p$ transfers $b$ to $q$.
    Thus the bubble moves down $S$ to supply.
    \item If $\delta=\textsc{s}$, $v_b\in S$, and $v_b$ does not have alive successors in $S$ that lie on a feather path for~$b$, then $p$ reverses the direction of $b$ to $\delta=\textsc{d}$, and sets the value $\lambda(i,\beta)$ of $v_b$ to $\text{false}$, where $i$ is the direction from $v_b$ to $v_a$. 
    The bubble does not move.
    \item If $\delta=\textsc{d}$ ($b$ is moving to demand) and there exists an alive successor node of either $v_b$ or $v_a$ that lies on a feather path for $b$ and is occupied by a contracted particle $q$, then $p$ switches the direction of $b$ to $\delta=\textsc{s}$, pulls on $q$, and transfers to $q$ the bubble $b$ with its corresponding values (while updating $\beta$ if needed). The bubble changes direction and moves onto a feather path which is alive.
    \item If $\delta=\textsc{d}$ and none of the successors of $v_b$ or $v_a$ in $S$ are alive for $b$, then $p$ sets the corresponding values $\lambda$ of $v_b$ and $v_a$ to $\text{false}$, and checks which predecessors of $v_a$ lie on a feather path for $b$ (note, this set is non-empty).
    If there is such a predecessor $q$ that is contracted, then $p$ pulls $q$ and transfers it the bubble $b$.
    Thus the bubble moves up in~$S$.
\end{enumerate}

\subsection{Coarse grid}
Particles can only make progress if they have a contracted successor.
If there are crossing paths in $S$, these paths might interfere.
To ensure that flows of particles along different feather paths can cross, we introduce a coarsened grid, and devise a special crossing procedure.

Rather than constructing supply graph $S$ on the triangular grid $G$, we now do so using a grid~$G_L$ that is coarsened by a factor of three and is overlaid over the core $I\cap T$ (see Figure~\ref{fig:particle_reconfiguration_example}).
Then, among the nodes of $G$ we distinguish between those that are also \emph{grid nodes} of $G_L$, \emph{edge nodes} of $G_L$, and those that are neither.
By our assumptions on the input, the graph $G_L$ is connected.
Note, that then, every node of the particle system is either a part of $G_L$ (is a grid or an edge node), or is adjacent to a node of $G_L$.
To ensure that all particles agree on the location of $G_L$, we assume that we are given a leader particle $\ell$ in the core $I\cap T$, that initiates the construction of $G_L$ (see Figure~\ref{fig:particle_reconfiguration_example} in Section~\ref{sec:problem_description}).

Note that for two adjacent grid nodes $v_1$ and $v_2$ of $G_L$, the edge $(v_1, v_2)$ does not have to be in $G_L$, if one or both corresponding edge nodes are not part of the core $I \cap T$.
We say a grid node~$v$ in $G_L$ is a boundary node if there exists a grid cell with corners $v$, $v_1$, $v_2$ where at least one of its edges is missing.
Grid node~$v$ is a \emph{direct boundary} node if this missing edge is incident to $v$, and an \emph{indirect boundary} node if the missing edge is $(v_1, v_2)$.
All other grid nodes in $G_L$ are inner nodes.

Now, feather trees only grow over the particles in $G_L$.
Demand roots initiate the construction of their feather trees from a closest grid node in $G_L$.
The supply roots organize their respective supply components in SP-trees as before (in the original grid), and connect to all adjacent particles in the supply graph $S$ ($S\subseteq G_L$).

The growth of a feather tree in $G_L$ is very similar to that in $G$.
On direct boundary nodes, the cones propagate according to the same rules as before.
On indirect boundary nodes, the creation of new shafts and branches is outlined in Figure~\ref{fig:trees_on_large_grid}.
The resulting feather tree may now have angles of $60^\circ$.
To still be able to navigate the supply graph, bubbles are allowed to make a $60^\circ$ bend on indirect boundary nodes, resetting $\beta=0$, only when both the start and end vertex of that bend are also boundary nodes.

\begin{figure*}[]
\centering
\includegraphics{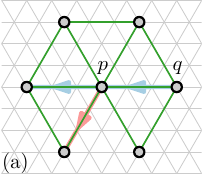}\hfill
\includegraphics{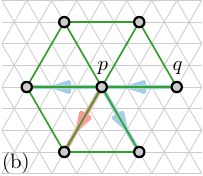}\hfill
\includegraphics{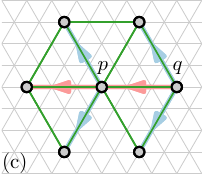}\hfill
\includegraphics{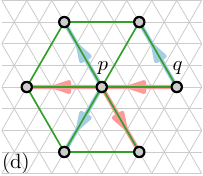}
\caption{Construction of feather trees at indirect boundary nodes.
$G_L$ is shown in green, only the particles on nodes of $G_L$ are shown.
A branch ((a)--(b), in blue) and a shaft ((c)--(d), in red) of a feather tree grows from $q$ to $p$.
(a) A new shaft is emanated from $p$;
(b) a new shaft and a new branch are emanated from $p$;
(c) $p$ behaves as an internal node;
(d) a new shaft is emanated from $p$.}
\label{fig:trees_on_large_grid}
\end{figure*}

A useful property of $G_L$ that helps us ensure smooth crossings of the bubble flows is that there are at least two edge nodes in between any two grid nodes.
We can now restrict the expanded particles carrying bubbles to mainly use the edge nodes of $S$, and only cross the grid nodes if there is enough room for them to fully cross.

With $p(x)$, we denote the particle occupying node $x$.
For a directed edge from $u$ to $v$ in $S$, we say that $u$ is the predecessor of $v$, and $v$ is the successor of $u$, and denote this by $u\rightarrow v$.
Consider a bubble $b$ moving forward in $S$ (with $\delta=\textsc{s}$), held by an expanded particle $p$ occupying two edge nodes $v_a$ and $v_b$, with $v_a\rightarrow v_b$ (see Figure~\ref{fig:large_grid_directions}).
\begin{figure}
    \centering
    \includegraphics{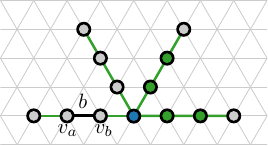}
    \caption{A bubble~$b$ on two edge nodes of $G_L$. Only particles on $G_L$ are drawn. The blue particle is on $c(b)$, the nodes occupied by the two pairs of green particles together form $N(b)$.}
    \label{fig:large_grid_directions}
\end{figure}
Let $c(b)$ be the grid node of $S$ adjacent to $v_b$, such that $v_b \rightarrow c(b)$; $c(b)$ is the grid node that $b$ needs to cross next.
Let $\mathcal{D}(b)$ denote the set of valid directions for $b$ in $S$ from the position of $c(b)$.
Let $N(b)=\{(r^i_1,r^i_2) \mid i\in \mathcal{D}(b)\}$ be the set of pairs of edge nodes lying in these valid directions for $b$.
Specifically, for each $i\in\mathcal{D}(b)$, let $c(b)\rightarrow r^i_1$ in the direction $i$, and $r^i_1\rightarrow r^i_2$.

\subparagraph{Crossings in $G_L$}
The grid nodes of $G_L$ that are part of the supply graph act as traffic conductors.
We thus use terms grid nodes and junctions interchangeably.
For bubble $b$, its particle $p$ is only allowed to pull on the particle at $c(b)$, and thus initiate the crossing of $c(b)$, if there is a pair of nodes $(r_1,r_2)\in N(b)$ that are occupied by contracted particles.
In this case, after at most three activation rounds, $b$ will completely cross $c(b)$, the expanded particle now carrying it will occupy the edge nodes $r_1$ and $r_2$, and the junction $c(b)$ will be ready to send another bubble through itself.
Assume for now that $S$ has all edges oriented in one direction (there are no parallel edges in opposite directions).
Below we discuss how to lift this assumption.
The procedure followed by the junctions is the following.
If an expanded particle $p$ wants to pull on a particle at a junction, it first requests permission to do so by sending a \emph{request} token containing the direction it wants to go after $c(b)$.
Every junction node stores a queue of these requests.
A request token arriving from the port $i$ is only added to the queue if there are no requests from $i$ in the queue yet.
As every direction is stored only once, this queue is at most of size six.
When particle $p(r_1)$ occupying an edge node $r_1$, with some grid node $c\rightarrow r_1$, activates, it checks if itself and the particle $p(r_2)$ at $r_1 \rightarrow r_2$ are contracted.
If so, $p(r_1)$ sends an \emph{availability} token to $p(c)$.
Similarly, junction nodes need to store at most six availability tokens at once.

When the particle at a junction activates, and it is ready to transfer the next bubble, it grants the first pull request with a matching availability token by sending the acknowledgment token to the particle holding the corresponding bubble.
The request and availability tokens are then consumed.
Only particles with granted pull requests are allowed to pull and move their bubbles onto junctions.
Junction queues are associated with the grid nodes of $S$, and not particles.
Thus, if an expanded particle $p$ occupying a grid node~$c$ pulls another particle~$q$ to $c$, the queues are transformed into the coordinate system of $q$ and sent to $q$.

\begin{remark}
Above, for simplicity of presentation, we assume that there are no bidirectional edges in $S$.
These edges, however, can be treated as follows.
During the construction of the supply graph one of the two opposite directions is chosen as a dominating one.
For a corresponding edge of $G_L$, one of its two edge nodes that receives the supply found token first, reserves the direction of the corresponding feather path for this edge.
If the other edge node eventually receives the supply found token from the opposite direction, it stores the information about this edge as being inactive.
While the branch of the dominating direction is alive, the dominated branch is marked as unavailable.
As soon as, and if, supply runs out for the dominating branch, and it is marked dead, the dominated branch is activated, and can now be used.
This slows down the reconfiguration process by a number of rounds at most linear in the size of the dominating branch.
\end{remark}

\section{Algorithm}\label{sec:algo}
To summarize, our approach consists of three phases.
In the first phase, the leader particle initiates the construction of the coarse grid $G_L$ over the core $I\cap T$.
In the second phase, the particles grow feather trees, starting from the demand roots.
If a feather tree reaches supply, that information is sent back up the tree and the particles form the supply graph~$S$.
In the last phase, particles move from supply to demand along $S$.
Note that, for the particle system as a whole, these phases may overlap in time.
For example, the reconfiguration process may begin before the supply graph is fully constructed.
Each individual particle can move on to executing the next phase of the algorithm once the previous phase for it is finished.

For the purposes of analysis, we view the reconfiguration as bubbles of demand traveling along $S$ from demand to supply.
Bubbles turn around on dead paths where all supply has been consumed.
To ensure proper crossing of different bubble flows, we let the grid nodes of $S$ to act as traffic conductors, letting some bubbles cross while others wait for their turn.

\subsection{Correctness.}
For simplicity of presentation, we first show correctness of the algorithm under a sequential scheduler.
We then extend the algorithm and its analysis to the case of an asynchronous scheduler.
To show correctness, we need to show two properties, \emph{safety} and \emph{liveness}.
The algorithm is safe if $\PC$ never enters an invalid state, and is live if in any valid state there exists a particle that, when activated, can make progress towards the goal.
Lemma~\ref{lem:supply_graph} proves the correctness of the phase of the construction of the supply graph.
As the particles start executing the reconfiguration phase only after the construction of the supply graph is finished for them locally, for the purposes of proving the correctness of the reconfiguration phase, we may assume that the supply graph has been constructed in the particle system as a whole.
A state of $\PC$ is valid when it satisfies the following properties:
\begin{itemize}
    \item Particle configuration $\PC$ is connected.
    \item It holds that $\#b + \#d = \#s$, where~$\#b$, $\#d$, and $\#s$ are the number of bubbles, demand spots, and supply particles respectively. That is, the size of supply matches exactly the size of demand.
    This assumes that initial and target shapes have the same size.
    \item There are no bidirectional edges in $S$ allowed for traversal.
    \item For every pair of demand root $d$ and supply root $s$ with a non-empty supply, there exists a feather path from $d$ to $s$ in $S$; furthermore, all such feather paths are alive (i.e., the corresponding values of $\lambda$ are set to true).
    \item Any expanded particle has both nodes on a single feather path of $S$.
    \item Any expanded particle on an alive feather path moves to supply ($\delta=\textsc{s}$).
    \item Any node of $S$ with $\lambda(i,\beta)=\text{true}$, for some combination of direction $i$ and value $\beta$, is connected by an alive feather path to some demand root $d$.
\end{itemize}

For the property of liveness, we need to show that progress can always be made.
We say the particle system makes progress whenever (1) a bubble moves on its path (this includes resolving the bubble with supply, or when a demand root creates a new bubble), and (2) a bubble changes its value for direction $\delta$.
We show by induction that during the execution of our algorithm, the configuration stays valid at any point in time, and that at any point in time, there is a bubble that can make progress.

\begin{lemma}\label{lem:connected}
	At any time, $\PC$ forms a single connected component.
\end{lemma}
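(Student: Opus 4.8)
The plan is to argue by induction on the sequence of particle activations that connectivity is preserved. The base case is the initial configuration, which is connected by assumption. For the inductive step, I would observe that the only operations that change which nodes are occupied are the move operations: expansions, contractions, and handovers. Since the algorithm moves bubbles exclusively via pull handovers (as emphasized at the end of Section~\ref{sec:supply-demand}), the key is to show that a single pull handover preserves connectivity, and then that the spanning-tree-driven construction of demand components and the consumption of supply at leaves also preserve it.

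**First I would** isolate the local claim: a handover between two neighboring particles preserves connectivity. In a pull handover an expanded particle $p$ occupying nodes $v_a,v_b$ pulls a contracted neighbor $q$ into $v_a$ (or the symmetric push); afterwards $p$ is contracted on $v_b$ and $q$ is expanded onto $v_a$ and its old node. The multiset of occupied nodes is unchanged except that the two particles swap roles across the shared edge, so $G_\PC$ as a set of occupied nodes is actually invariant under a handover in the interior. The genuinely relevant moves are therefore (i) a supply leaf contracting to resolve a bubble, which frees a node in $I\setminus T$, and (ii) a demand leaf expanding into a fresh node of $T\setminus I$, which occupies a new node. I would show each of these is safe by appealing to the tree structure: supply particles are organized in an SP-tree rooted at the supply root, and only a leaf of that tree contracts, so removing a leaf of a tree keeps the remaining occupied nodes connected through the root into the core; symmetrically, demand is grown along the spanning tree $\TC_D$ stored at $d$, so every newly occupied node attaches to an already-occupied parent.

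**The main obstacle** will be handling the asynchronous and conflicting activations cleanly rather than the single-move algebra. Because the scheduler is adversarial and may resolve conflicting expansions into the same node arbitrarily, I must argue that no sequence of interleaved pulls can momentarily disconnect the structure — in particular that a particle is never pulled away from the unique parent that anchors it to the core while that parent simultaneously leaves. Here I would lean on the invariant maintained in the list of valid-state properties: every expanded particle has both of its nodes on a single feather path of $S$, and $S \subseteq G_{I\cap T}$ connects every demand root to non-empty supply by an alive feather path. Consequently every bubble-carrying particle, every core particle, every supply tree, and every partially built demand tree remains attached to the simply connected core $I\cap T$, which by assumption \assumption{1} is itself connected. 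Since all components hang off this common core, the whole configuration $\PC$ is connected.

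**To finish,** I would package these observations as the inductive step: assuming $\PC$ is connected before an arbitrary activation, each of the finitely many admissible actions (handover, supply-leaf contraction, demand-leaf expansion, and the no-op direction reversals that do not move any particle) either leaves the set of occupied nodes unchanged or modifies it at a tree leaf, and in every case the core remains occupied and connected while all other particles retain a path to the core. Hence connectivity is preserved, and by induction $\PC$ forms a single connected component at all times.
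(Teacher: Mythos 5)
Your proof is correct and takes essentially the same approach as the paper: both reduce the problem to sole contractions (since handovers leave the set of occupied nodes unchanged and expansions only add nodes), note that these occur only at supply-tree leaves, and use the supply root together with the always-occupied connected core to conclude that every neighbor of the freed node remains connected. The paper phrases this as a short proof by contradiction rather than your explicit induction over activations, but the substance is identical.
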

\begin{proof}
Initial shape $I$ is connected.
A connected particle configuration can only become disconnected by the sole contraction of a particle.
Assume for contradiction that $\PC$ becomes disconnected by the sole contraction of a particle~$p$, then $p$ must be supply.
Let particle~$q$ be a neighbor of $p$ that becomes disconnected from $p$ and let $r$ be the supply root of the connected component of supply of $p$.
If $q$ is supply, it is in the same supply component as $p$ and therefore there exists a path from $q$ to $r$ to $p$.
If $q$ is not supply, it is in the core.
The core is always connected.
Therefore there also exists a path from $q$ to $r$ to $p$.
This is a contradiction, which proves the lemma.
\end{proof}

\begin{lemma}\label{lem:bubble_number}
For the number of bubbles~$\#b$, the number of supply particles left~$\#s$ and the number of demand spots left~$\#d$, it always holds that $\#b + \#d = \#s$.
\end{lemma}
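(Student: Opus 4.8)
The plan is to prove Lemma~\ref{lem:bubble_number} by induction on the sequence of activations, showing that the invariant $\#b + \#d = \#s$ is preserved by every operation that can change any of the three quantities. First I would establish the base case: at the very start of the reconfiguration phase, no bubbles exist ($\#b = 0$), no demand has been built ($\#d$ equals the full size of the demand $|T\setminus I|$), and all supply is still present ($\#s = |I\setminus T|$). Since by assumption $I$ and $T$ have the same number of nodes, we have $|T\setminus I| = |I\setminus T|$, and the invariant holds initially.

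For the inductive step, the key observation is to enumerate exactly the events that alter $\#b$, $\#d$, or $\#s$, and to check that each one changes two of these quantities in a compensating way. There are precisely three relevant events. When a demand root (or a leaf of a partially built demand component) expands to fill in a new demand spot, one demand spot is consumed and one new bubble is created: $\#d$ decreases by one and $\#b$ increases by one, leaving $\#b + \#d$ unchanged and $\#s$ untouched. When a bubble is resolved at a non-empty supply component (operation~(1) in Section~\ref{sec:nav}, where an expanded particle at a supply leaf contracts), the bubble disappears and one supply particle is consumed: $\#b$ decreases by one and $\#s$ decreases by one, so both sides of the equation decrease by one and equality is maintained. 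All other operations of the navigation algorithm---moving a bubble down $S$ (operation~(2)), reversing its direction (operation~(3) and~(4)), and moving it back up $S$ (operation~(5))---merely relocate or re-flag an existing bubble via handovers; none of them creates or destroys a bubble, a demand spot, or a supply particle, so the three counts are all unchanged.

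I would present the argument by noting that the three counts can only change via the two compensating events above (demand creation and bubble resolution), and that all motion operations preserve each count individually. Since the invariant holds at the start and is preserved by every activation, it holds at all times, which proves the lemma.

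I do not anticipate a serious obstacle, as the proof is essentially a conservation-of-mass bookkeeping argument. The one point requiring care is to argue that the enumeration of events is \emph{complete}: I must confirm that the handover operations used to move bubbles (pull operations that transfer $b$ and its values to a contracted neighbor) genuinely conserve the counts, and that no operation simultaneously, say, creates a bubble while also consuming supply in a way that would double-count. This amounts to carefully reading each of the five navigation operations and verifying that only operation~(1) touches $\#s$ and only demand-component growth touches $\#d$, with bubble creation and resolution being the sole modifiers of $\#b$. Once this case analysis is in place, the invariant follows immediately.
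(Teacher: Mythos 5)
Your proof is correct and follows essentially the same approach as the paper: induction over activations, with the observation that only two events change the counts (a demand expansion trading $\#d-1$ for $\#b+1$, and a bubble resolution decreasing $\#b$ and $\#s$ together), each preserving the invariant $\#b + \#d = \#s$. Your added care in verifying that the remaining navigation operations leave all three counts untouched is a sound elaboration of the paper's terser "we only consider activations that change any of $\#b$, $\#d$, or $\#s$," but not a different argument.
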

\begin{proof}
We prove the lemma by induction.
Initially, $\#b = 0$ and $\#s = \#d$, so also $\#b + \#d = \#s$.
Assume that $\#b + \#d = \#s$ for some configuration.
Now consider the activation of a single particle.
We only consider activations that change any of $\#b$, $\#d$, or $\#s$.
They change only in one of two scenarios.
Either a particle gets expanded into a previously empty node of a demand component, creating a bubble, or a bubble gets resolved when a supply particle is contracted.
In the first case we have $\#b + 1 + \#d - 1 = \#s$.
In the second case we have $\#b - 1 + \#d = \#s - 1$.
\end{proof}

\begin{observation}\label{obs:bidirectional}
By construction, for two parallel edges in $S$, both of them cannot be active at the same moment in time.
\end{observation}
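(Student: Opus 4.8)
The plan is to derive the observation directly from the bidirectional-edge handling described in the remark above, since the statement is essentially a correctness property of that construction rather than a fresh geometric claim. First I would recall when two parallel edges in $S$ can arise at all: exactly when a single grid edge of $G_L$ is reached by feather-tree growth from both of its endpoints, i.e. along both opposite feather-path directions. By construction, the two edge nodes of this grid edge resolve the conflict already at growth time. The edge node that receives the supply found token first reserves its direction as the \emph{dominating} one; when (and if) the opposite supply found token later arrives, that direction is recorded as the \emph{dominated} one and is immediately flagged inactive.

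Next I would track the activation status of each of the two edges over the course of the reconfiguration. The dominating edge is the only one available for traversal from the moment it is reserved, and it stays alive (its relevant $\lambda$ values are \text{true}) until the supply it leads towards is exhausted; at that point bubbles reverse through it and its $\lambda$ values are set to \text{false}, marking it dead. The dominated edge, in contrast, carries an explicit inactive flag and is therefore never traversed while the dominating edge is alive. The construction clears this flag and activates the dominated edge only upon detecting that the dominating edge has been marked dead.

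I would then conclude by combining these facts: an edge is available for traversal precisely when it is alive and not flagged inactive; the dominated edge's inactive flag is cleared only after the dominating edge is dead; and a dead edge is by definition not available. Hence at any single moment at most one of the two parallel edges can be in the active state, which is exactly the claim.

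The main obstacle I anticipate is ruling out a transient window in which both edges are simultaneously active during the handover from the dominating to the dominated branch. To close this gap I would argue that the activation of the dominated edge is causally triggered by, and therefore strictly subsequent to, the event marking the dominating edge dead: clearing the dominated edge's inactive flag requires having first read the dominating edge's \text{false} status, so a single node cannot at one instant be both in the ``alive, dominating'' state and in the ``just-activated, dominated'' state. The two activity intervals are thus disjoint, and the observation follows.
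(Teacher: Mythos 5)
Your proposal is correct and takes essentially the same route as the paper: the paper gives no argument beyond the words ``by construction,'' which refer exactly to the remark on bidirectional edges that you unpack --- the dominating direction reserved by whichever edge node receives the supply found token first, the explicit inactive flag on the dominated direction, and activation of the dominated branch only after the dominating one is marked dead. The only caveat worth noting is that your tie-breaking step (``receives the supply found token \emph{first}'') is well-defined under the sequential scheduler assumed in this correctness section; the paper handles the asynchronous case, where both edge nodes may activate simultaneously and choose opposite directions, separately via the randomized locking mechanism discussed in the extension to asynchronous schedulers.
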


\begin{lemma}\label{lem:single_branch}
At any time, every expanded particle has both nodes on a single branch of supply graph $S$.
\end{lemma}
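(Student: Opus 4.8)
The plan is to prove the statement by induction on the sequence of particle activations, maintaining the invariant that the two nodes occupied by any expanded bubble-carrying particle are \emph{consecutive} on a single feather path of $S$ (which is what ``single branch'' refers to here, matching the corresponding validity property). The base case is immediate: before any bubble is created every particle is contracted, so the claim holds vacuously; and when a bubble first enters $S$ at a demand root $d$, the carrier occupies $d$ together with the first node of a successor edge of $d$'s own feather tree, which is a single edge and hence trivially a feather path.

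For the inductive step I would assume the invariant holds before an activation and examine the only events that can change which two nodes an expanded carrier occupies, namely the five navigation operations of Section~\ref{sec:nav} together with the junction-crossing procedure. The operations that merely resolve a bubble by contraction (operation~1) or that only flip $\delta$ and set a $\lambda$-value without moving the bubble (operation~3) either delete an expanded particle or leave the occupied pair unchanged, so they preserve the invariant trivially. The substantive cases are the handovers (operations~2, 4, 5) and the crossings: in each of these a contracted neighbor $q$ is pulled, and after the handover $q$ occupies exactly one newly entered node together with the node the previous carrier just vacated. Since the rules only ever pull onto a successor or predecessor that ``lies on a feather path for $b$,'' the resulting pair of nodes occupied by $q$ is a consecutive pair along such a feather path.

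The key tool is Lemma~\ref{lem:feather_trees_bends}: the routing maintains the bend count $\beta$ precisely so that the sequence of nodes visited by $b$ forms a feather path, and a feather path coincides with a branch of a feather tree, hence of $S$. Consequently a consecutive pair of nodes on the traversed feather path automatically lies on a single branch, which is exactly the claim. For the junction-crossing procedure I would track the expanded particle through its three intermediate positions $\{v_b,c(b)\}$, $\{c(b),r_1\}$, and $\{r_1,r_2\}$, and observe that each pair is a consecutive edge of the feather path extended through $c(b)$ into a direction of $\mathcal{D}(b)$, so the invariant is preserved throughout the crossing as well.

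The main obstacle I anticipate is the reversal cases (operations~4 and~5), where the bubble changes direction and may switch from a dead feather path onto a different, still-alive one. Here one must verify that pulling a successor of $v_a$ (rather than of $v_b$), or a predecessor of $v_a$, still yields a consecutive pair on a common feather path, and that setting $\lambda=\text{false}$ on the abandoned segment does not disturb the feather-path structure of the segment now occupied. The $60^\circ$ bends permitted at indirect boundary nodes of $G_L$ require the same care, since $\beta$ is reset there; for these I would appeal to the construction depicted in Figure~\ref{fig:trees_on_large_grid} to confirm that such a bend continues a single branch, completing the inductive step.
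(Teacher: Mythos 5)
Your proof is correct and takes essentially the same route as the paper: an induction over activations with the base case of no expanded particles, new bubbles trivially on a single branch at their demand root, and every move either keeping the carrier on its branch or switching it wholesale to another. The paper compresses your detailed case analysis of the five navigation operations, reversals, and junction crossings into the single observation that a moving bubble ``either stays on the branch it is on, or it changes completely from one branch into another,'' so your version is simply a fleshed-out rendering of the same argument.
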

\begin{proof}
We prove the lemma by induction.
The initial shape $I$ has no expanded particles.
Assume that all bubbles are on a single branch of supply graph $S$.
A new bubble can only be created by a demand root and is by definition located on a single branch of $S$.
If a bubble moves, it always moves in such a way that it either stays on the branch it is on, or it changes completely from one branch into another.
Therefore after any activation the lemma still holds.
\end{proof}

\begin{lemma}
At any time, any expanded particle on an alive feather path moves to supply ($\delta=\textsc{s}$).
\end{lemma}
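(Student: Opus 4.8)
The plan is to prove the statement by induction on the sequence of particle activations, mirroring the inductive arguments of Lemmas~\ref{lem:bubble_number} and~\ref{lem:single_branch}. It is cleanest to argue the contrapositive invariant: \emph{whenever an expanded particle holds a bubble with $\delta=\textsc{d}$, the node $v_b$ it occupies is dead for that bubble}, i.e. $\lambda(i,\beta)=\text{false}$ at $v_b$, where $i$ is the direction from $v_b$ to $v_a$ and $\beta$ is the bubble's bend count. Since an alive feather path is exactly one whose relevant $\lambda$-entry is true, this invariant immediately yields the lemma. For the base case, the initial shape $I$ consists solely of contracted particles, so there are no expanded particles and the invariant holds vacuously.

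For the inductive step I would consider a single activation and check the five bubble operations together with the creation of a fresh bubble by a demand root. The operations after which the bubble carries $\delta=\textsc{s}$ — bubble creation and the first, second, and fourth operations — either resolve the bubble, pull it into a supply component, or pull it onto a successor that was explicitly checked to be alive; for these the invariant concerns only $\delta=\textsc{d}$ bubbles and so is vacuous for the moved bubble, while leaving all other nodes' $\lambda$-entries unchanged. The delicate cases are the two operations that leave a bubble with $\delta=\textsc{d}$. The reversing (third) operation flips $\delta$ to $\textsc{d}$ and, in the same activation, sets $\lambda(i,\beta)$ of the occupied node $v_b$ to false, so the bubble is immediately on a dead node. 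The back-up (fifth) operation keeps $\delta=\textsc{d}$, sets $\lambda$ of both $v_a$ and $v_b$ to false, and then pulls the bubble onto a predecessor $q$ of $v_a$; afterwards the bubble occupies the old node $v_a$, entered from the direction $j$ toward $q$, and I must show that $\lambda(j,\beta)=\text{false}$ there.

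The main obstacle is exactly this last point. I would close it using the precondition of the back-up operation, namely that it fires only when \emph{none} of the successors of $v_a$ or $v_b$ are alive for $b$. Hence every feather path that enters $v_a$ and continues forward to supply is dead, so $\lambda(j,\beta)$ at $v_a$ is false for the relevant $\beta$ — and the operation sets precisely this entry to false before backing up. Care is needed to track $\beta$ across bends, and in particular across reflex (indirect-boundary) vertices, where $\beta$ is reset to $0$; there I would verify that the value of $\beta$ carried by the bubble onto $v_a$ is the same index under which the $\lambda$-entry is marked dead, so that the backed-up position is genuinely dead for this bubble.

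Finally, I would rule out the one remaining way the invariant could break, namely a $\delta=\textsc{s}$ bubble sitting on an alive node whose $\lambda$-entry is set to false \emph{by another particle}. This cannot occur because the only operations that turn a $\lambda$-entry false (the reversing and back-up operations) act exclusively on the two nodes $v_a,v_b$ occupied by the acting particle, and each node of $G_\PC$ is occupied by at most one particle. Thus the $\lambda$-entry of a node held by a $\delta=\textsc{s}$ bubble can only be turned false by that same bubble through the reversing operation, which in the same step switches its direction to $\textsc{d}$, so the invariant is preserved.
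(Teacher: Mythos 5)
Your proof is correct and follows essentially the same route as the paper's: an induction over activations whose key observation is that a bubble only acquires (or keeps) $\delta=\textsc{d}$ in the same step in which the relevant $\lambda$-entries of its occupied node(s) are set to false, so it is never on an alive feather path while moving to demand. The paper's version is far terser — it only states that a bubble reverses when no alive feather path to supply exists from its current node — whereas you additionally spell out the back-up (fifth) operation, the $\beta$-bookkeeping, and the fact that only the occupying particle can falsify a node's $\lambda$-entry, all of which the paper leaves implicit.
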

\begin{proof}
We prove the lemma by induction.
The initial shape $I$ has no expanded particles.
Assume that at some moment in time all bubbles in $I\cap T$ that are on alive feather paths move to supply.
A bubble turns its direction back to demand only when there are no alive feather paths to supply from its current node.
Therefore after any activation the lemma still holds.
\end{proof}

\begin{lemma}\label{lem:alive-path}
At any time, any node of $S$ with $\lambda(i,\beta)=$true, for some combination of direction $i$ and value $\beta$, is connected by an alive feather path to some demand root $d$.
\end{lemma}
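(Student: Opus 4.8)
The plan is to prove the invariant by induction over the sequence of particle activations, mirroring the structure of Lemmas~\ref{lem:connected}--\ref{lem:single_branch}. The base case is the freshly constructed supply graph: as described in Section~\ref{sec:supply-demand}, $S$ is grown as the union of feather trees rooted at the demand roots, so every node of $S$ is reached by the feather tree of some demand root $d$ and therefore lies on a feather path from $d$. Since at construction time $\lambda(i,\beta)=\text{true}$ for every node, direction, and value of $\beta$, each of these feather paths is alive, and the invariant holds.

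For the inductive step, I observe that the only operations that can endanger the invariant are those that set some $\lambda$-value to false or delete an edge of $S$: operation~3 (reversal at a dead end), operation~5 (a backwards-moving bubble with no live successor), and the deletion of edges incident to exhausted supply. In each case the modification is \emph{monotone}: it only shrinks the set of triples $(v,i,\beta)$ with $\lambda(i,\beta)=\text{true}$ at $v$, and only removes edges on the supply side. Hence it suffices to show that each such step never leaves a node with a surviving true $\lambda$-value whose every alive feather path to the demand roots has been destroyed.

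The key observation, which I would isolate as a small auxiliary claim, is that the firing condition of operations~3 and~5 is exactly what is needed. When operation~3 sets $\lambda(i,\beta)$ of $v_b$ to false, it does so only because $v_b$ has \emph{no} alive successor on a feather path for the current bubble, i.e. the configuration $(i,\beta)$ at $v_b$ has no alive forward continuation toward supply. Using the feather-path characterization (Lemma~\ref{lem:feather_trees_bends}) to match the bend count of an incoming feather path with that of its continuation, this means that no alive feather path from a demand root can \emph{pass through} $v_b$ in configuration $(i,\beta)$ and continue further toward supply; it can only \emph{terminate} at $v_b$ in configuration $(i,\beta)$. Consequently, withdrawing $\lambda(i,\beta)$ at $v_b$ removes only the witness for $v_b$'s own (now withdrawn) claim and cannot orphan any other node: by the inductive hypothesis any node $w$ with a surviving true $\lambda$-value already possessed an alive feather path to a demand root, and since that path, were it to traverse $v_b$ in configuration $(i,\beta)$, would supply an alive forward continuation and contradict the firing condition, it must avoid the falsified configuration and therefore remains alive. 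The same reasoning applies to operation~5 (for both $v_a$ and $v_b$), and, by backward induction from the supply side, to the deletion of edges pointing to exhausted supply, since a node all of whose successor edges are deleted also has all of its $\lambda$-values set to false.

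I expect the main obstacle to be the bookkeeping of the bend count $\beta$ across this correspondence, in particular at reflex vertices (where $\beta$ is reset to $0$) and at the indirect boundary nodes of $G_L$ (where $60^\circ$ bends are permitted and $\beta$ is likewise reset). There one must verify that the configuration labelling a feather path as it arrives at $v_b$ really is the same $(i,\beta)$ that operation~3 or~5 falsifies, so that ``no alive successor'' genuinely rules out \emph{every} alive forward continuation and not merely those carrying one particular value of $\beta$. Handling these resets carefully, and confirming that the monotone shrinkage never disconnects a surviving true configuration from its demand root, is the technical heart of the proof.
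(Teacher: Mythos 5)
Your proposal is correct and follows essentially the same route as the paper: induction over activations with the base case that $S$ is built from feather trees rooted at demand roots with all $\lambda$ initially true, and the inductive step resting on the observation that operations~3 and~5 fire only when no alive successor on a feather path exists, so a dead-marking can never hit an interior node of an alive feather path leading to a still-alive node. Your contrapositive phrasing (``falsifying $\lambda(i,\beta)$ at $v_b$ cannot orphan any node $w$'') is the same argument the paper makes via its closest-dead-ancestor contradiction, and your flagged concern about $\beta$-resets is a real subtlety that the paper's proof likewise leaves implicit.
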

\begin{proof}
We prove the lemma by induction.
Initially, for any node on $S$, it lies on a live feather path to some demand root.
Assume that at some moment in time every node $v$ of $S$ that is alive for some direction $i$ to its predecessor and some value of $\beta$ is connected to some demand node $d$.
Consider the corresponding feather path $\pi$ connecting $d$ to $v$.
Let $p$ be the particle currently occupying node $v$.
By the previous lemma, $p$'s bubble is moving to supply.
If, in the following activation, $p$ reverses to the demand, it marks $v$ as dead for these specific values of $i$ and $\beta$, and thus $v$ is not alive anymore.
The nodes of $\pi$ cannot be marked as dead, while $v$ is still marked as alive.
Assume that this can occur.
Consider the closest node $u$ to $v$, which is an ancestor of $v$ in $\pi$, which is marked by some particle $q$ as dead.
Then $u$ has a successor along $\pi$ which is alive for $i$ and $\beta$.
But then, $q$ would check if $u$ has any alive successors before marking it as dead.
And thus, all the nodes of $\pi$ connecting $v$ to $d$ must remain alive.
Therefore after any activation the lemma still holds.
\end{proof}

\begin{lemma}\label{lem:valid_supply_graph}
	At any time, for a pair of demand root $d$ and a supply root $s$ of a non-empty supply, all feather paths connecting $d$ and $s$ in $S$ are alive.
\end{lemma}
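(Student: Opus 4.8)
The plan is to prove the invariant by induction over the sequence of particle activations, in the same style as Lemmas~\ref{lem:connected}--\ref{lem:alive-path}. Fix a demand root~$d$ and a supply root~$s$ whose supply component is still non-empty, and let $\pi$ be an arbitrary feather path from $d$ to $s$ in $S$. I want to show that at every point in time all nodes of $\pi$ are alive, that is, for each node $v$ on $\pi$ the value $\lambda(i,\beta)$ is $\text{true}$, where $i$ is the direction towards $\pi$'s predecessor at $v$ and $\beta$ is the inner-bend count carried by $\pi$ at $v$. The base case is immediate: during the construction of $S$ every $\lambda$ is initialised to $\text{true}$, so every feather path is alive.

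For the inductive step I assume the invariant holds just before some activation and consider the activation of a single particle $p$ holding a bubble $b$. Supply components only shrink, so $s$ cannot become non-empty, and no activation other than the one of $p$ changes any $\lambda$; hence the invariant can only be broken if this activation sets some $\lambda(i,\beta)$ to $\text{false}$ at a node $v$ lying on $\pi$ with $(i,\beta)$ matching $\pi$ at $v$. Inspecting the navigation rules, a node's $\lambda$ is set to $\text{false}$ only in rule~(3) (an $s$-directed bubble with no alive feather-path successor reverses) or in rule~(5) (a $d$-directed bubble at a dead end), and the nodes marked dead are precisely the one or two nodes occupied by $p$. In both rules the firing condition is exactly that $p$ has no successor that is alive and lies on a feather path for $b$.

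The crux is to exhibit such a successor whenever $v$ sits on $\pi$. If $v$ is not the last node of $\pi$, let $v'$ be its successor along $\pi$. Because the killed entry $(i,\beta)$ matches $\pi$ at $v$, the bubble $b$ is \emph{compatible} with $\pi$ at $v$, so by Lemma~\ref{lem:feather_trees_bends} the node $v'$ is a valid feather-path continuation for $b$; and since $\pi$ is alive before the activation (induction hypothesis) while $p$ only changes the $\lambda$-values of its own nodes, $v'$ is still alive at the moment $p$ performs its check. If instead $v$ is the node of $\pi$ adjacent to $s$, then, because $s$ is non-empty, the forward move into the supply component is always available, and rules~(1) and~(2)---which are checked before the reversal rules---take precedence, so $p$ never reaches rule~(3) or~(5) at $v$. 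In either case the firing condition of rules~(3) and~(5) fails, so $v$ is not marked dead, contradicting the assumption. Together with the fact that resolving supply (rule~(1)) never marks a node dead and only ever removes $s$ from the hypothesis, this shows the activation preserves the invariant.

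The step I expect to be the main obstacle is making the phrases ``$(i,\beta)$ matches $\pi$ at $v$'' and ``$v'$ is a valid continuation for $b$'' fully precise, since $\lambda$ is indexed both by a predecessor direction and by the inner-bend flag, and indirect-boundary and reflex vertices reset $\beta$ (recall the $60^\circ$ bends allowed between boundary nodes in $G_L$). The argument must verify that whenever the killed entry could belong to $\pi$, the bubble $b$ carries exactly the $\beta$-value needed to continue along $\pi$ to $v'$, so that Lemma~\ref{lem:feather_trees_bends} indeed certifies $v'$ as the next node of a feather path for $b$. Once this correspondence is pinned down, the remainder is the routine case distinction over the five navigation rules together with the supply-root anchor described above.
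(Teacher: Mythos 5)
Your proof takes essentially the same route as the paper's: an induction over activations whose key step is that the reversal rules (3) and (5) can only mark a node dead when it has no alive feather-path successor, so interior nodes of an alive path from $d$ to a non-empty $s$ cannot be killed, while the supply end is anchored because the forward rules take precedence as long as $s$ is non-empty. The $(i,\beta)$-matching subtlety you flag is left implicit in the paper as well (it simply invokes the argument of Lemma~\ref{lem:alive-path}), so your treatment is, if anything, slightly more explicit about the same point.
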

\begin{proof}
	The existence of a feather path in $S$ connecting $d$ and $s$ follows from the assumption for the purposes of the proof of correctness that $S$ has been fully constructed.
	We prove that all such feather paths are alive by induction.
	Initially, all nodes on $S$ are alive.
	Assume that at some time all feather paths connecting $d$ to $s$ in $S$ are alive.
	Similarly to the proof of Lemma~\ref{lem:alive-path}, a node on a feather path $\pi$ from $d$ to $s$ can be marked dead by a particle only if that node does not have a live successor in $\pi$.
	Thus an interior node of $\pi$ cannot be marked dead.
	The end of $\pi$, i.e., the supply root $s$, is only marked dead once the supply is empty.
	Therefore, as long as the supply of $s$ is non-empty, all the feather paths from $d$ to $s$ are alive.
\end{proof}

Next, using the proven above safety property, we show liveness by arguing that progress can always be made.
We say the particle system makes progress whenever (1) a bubble moves on its path (this includes resolving the bubble with supply, or when a demand root creates a new bubble), and (2) a bubble changes its value for direction~$\delta$.

\begin{lemma}\label{lem:bubble_progress}
If there exist bubbles in the system, some bubble eventually makes progress.
\end{lemma}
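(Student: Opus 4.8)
The plan is to prove liveness by an extremal, deadlock-free argument that leans entirely on the safety invariants established above. First I would record that a non-empty supply exists: since bubbles are present we have $\#b>0$, so by Lemma~\ref{lem:bubble_number}, $\#s=\#b+\#d\ge\#b>0$. I would then case split on the direction flag $\delta$ of the bubbles, and in each case exhibit one particular bubble that is forced to make progress, using minimality to rule out a cyclic waiting dependency.

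For the case where at least one bubble has $\delta=\textsc{s}$: if some such bubble $b$ has no alive successor on a feather path, operation~(3) fires and $b$ reverses, which is progress. Otherwise every $\delta=\textsc{s}$ bubble has an alive forward successor, and I would pick the bubble $b^{*}$ minimizing the length of a shortest alive feather path from its successor node to a non-empty supply root (well defined by the maintenance of $\lambda$ and Lemma~\ref{lem:valid_supply_graph}). Let $w$ be the next node on such a minimizing path. If $w$ lies inside non-empty supply, operation~(1) resolves $b^{*}$; if $w\in S$ is contracted, operation~(2) moves it. The only other option is that $w$ is occupied by an expanded particle holding a bubble $b'$; but $w$ lies on an alive feather path, so by the invariant that every expanded particle on an alive feather path has $\delta=\textsc{s}$, bubble $b'$ is also moving to supply and is strictly closer to a non-empty supply than $b^{*}$, contradicting minimality. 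Hence progress is forced.

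For the remaining case, where all bubbles have $\delta=\textsc{d}$ (and therefore, by the same invariant, none lies on an alive feather path), I would pick the bubble $b^{\#}$ minimizing the distance in $S$ from its node up to a demand root. If $b^{\#}$ has an alive feather-path successor, that successor cannot be occupied, since any occupant would lie on an alive feather path and hence have $\delta=\textsc{s}$, contradicting the case assumption; so the successor is contracted and operation~(4) turns $b^{\#}$ around. If $b^{\#}$ has no alive successor, operation~(5) moves it up toward a predecessor, which again cannot be occupied, as an occupant would be a $\delta=\textsc{d}$ bubble strictly closer to a demand root, contradicting the choice of $b^{\#}$. The degenerate situation of $b^{\#}$ sitting at a demand root with no alive successor is excluded by Lemma~\ref{lem:valid_supply_graph}, which guarantees an alive feather path from every demand root to the non-empty supply. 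Thus in every case some bubble makes progress.

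The hard part will be reconciling the abstract operations~(1)--(5) with the junction-crossing protocol on the coarse grid $G_L$, where a forward step is not a single hop but a crossing of a grid node $c(b)$ requiring a contracted pair $(r_1,r_2)\in N(b)$ together with a granted request. I would handle this by measuring distances in $S$ over $G_L$ and noting that the extremal arguments remain local: the blocking occupant is always exactly one edge or one junction closer to supply (respectively demand), so minimality is still strictly violated. To turn ``the node ahead is contracted'' into an actual move, I would invoke the protocol: the edge node $p(r_1)$ eventually emits an availability token for the contracted pair, $b^{*}$ emits a matching request, and on activation the junction grants some pending request with matching availability, whose bubble completes its crossing in $O(1)$ further rounds. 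Under the fair (here sequential) scheduler these finitely many activations are guaranteed to occur, which is exactly the sense of ``eventually'' in the statement; Observation~\ref{obs:bidirectional} and Lemma~\ref{lem:connected} ensure that no bidirectional edge or disconnection disrupts this reasoning.
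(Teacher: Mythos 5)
Your decomposition differs from the paper's: you split on the direction flag $\delta$ and run an extremal argument inside the supply graph $S$, whereas the paper splits on \emph{location} --- bubbles inside supply components, bubbles in $S$, and bubbles inside demand components --- picking, respectively, the bubble closest to a leaf of the supply spanning tree, the bubble closest along a live feather path to a non-empty supply root, and the bubble closest to a demand root. This difference exposes a genuine gap: your extremal metric, ``length of a shortest alive feather path from the successor node to a non-empty supply root,'' only ranks bubbles sitting on feather paths of $S$. Bubbles already inside a supply component (travelling down the supply SP-tree) and bubbles inside a demand component are invisible to it, yet they are exactly two of the paper's three cases. Concretely, your claim that ``if $w$ lies inside non-empty supply, operation~(1) resolves $b^{*}$'' misreads operation~(1): it resolves a bubble only at a leaf of the supply tree; otherwise it pulls a contracted child, and that child may itself be expanded, since several bubbles can queue up inside one supply component. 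This is precisely why the paper's first case selects the bubble closest to a leaf of the supply spanning tree --- minimality \emph{there} is what guarantees the contracted child exists. Your proof needs that extra case (and a symmetric one near demand roots), which is easy but missing.

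The all-$\delta=\textsc{d}$ case also has a soft spot. Your contradiction ``any occupant of an alive successor lies on an alive feather path and hence has $\delta=\textsc{s}$'' reads the safety invariant more coarsely than it is maintained: liveliness $\lambda(i,\beta)$ is indexed per direction \emph{and} per $\beta$, so an expanded occupant can sit on a node that is alive for $b^{\#}$'s value of $\beta$ while every continuation consistent with its own $\beta$ is dead, making it a legitimate $\delta=\textsc{d}$ bubble that blocks $b^{\#}$ without contradicting your case split. Similarly, in the operation-(5) subcase the blocking occupant is one step upstream along \emph{one} feather path, which does not make it strictly closer to a demand root under your ``minimum over all demand roots'' metric (the minimum for $b^{\#}$ may be realized through a different predecessor), so minimality is not contradicted either. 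These waiting chains can still be discharged, e.g., by following the chain of blocked $\delta=\textsc{d}$ bubbles upward until it terminates near a demand root, where Lemma~\ref{lem:valid_supply_graph} supplies an alive successor, but as written your two contradictions do not go through. The paper sidesteps both issues through its choice of metric: for the bubble closest to non-empty supply along a live path, every particle ahead of it on that path is contracted, so the only remaining contention is at a junction --- which it resolves exactly as your final paragraph does, by noting that the junction grants either $b$ or some competing $b'$ with $c(b')=c(b)$, and in either case a bubble progresses.
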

\begin{proof}
If there exists a bubble in a supply component, consider the closest such bubble $b$ to a leaf of the supply spanning tree.
If $b$ is at a leaf, the particle carrying it contracts in the next round, thus resolving the bubble.
If $b$ is not a leaf, its node has a contracted child, which will be pulled by the particle of $b$ in the next round, thus making a progress.

If there are no bubbles in the supply components, but there exists a bubble in the supply graph, consider the closest bubble $b$ in $S$ to a supply root of a non-empty supply component.
By the safety property, there is a live feather path from $b$ to the supply root.
If the particle $p$ carrying $b$ occupies one of the grid nodes of $G_L$, then there must be a contracted successor in a valid direction for $p$ to pull on in the next round, otherwise the crossing would not have been initiated.
If $p$ occupies edge nodes of $G_L$, it will send a request token to the grid node $c(b)$ to initiate the crossing.
In the following round, the grid node $c(b)$ will either grant the permission to $b$ to cross it, or it will grant access to another bubble $b'$ such that $c(b')=c(b)$.
This implies that in the following round either $b$ or $b'$ makes a progress.

Finally, if there are no bubbles in the supply components nor the supply graph, but there are bubbles in the demand components, by a similar argument we can show that one of the closest bubbles to a demand root will make a progress in the following round.
\end{proof}

\begin{lemma}\label{lem:demand_progress}
If there exists a demand spot not occupied, a new bubble is eventually created.
\end{lemma}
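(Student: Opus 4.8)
The plan is to reduce the statement to the behaviour of a single \emph{frontier leaf} of an unfinished demand component and to show that this leaf eventually expands into an empty target node, which by definition creates a new bubble. Suppose some demand spot is unoccupied. Then the spanning tree $\TC_D$ stored at some demand root $d$ is not yet fully built, so its already-built part has a leaf $p$ possessing an empty child node $w$ of $\TC_D$ (if nothing has been built yet, take $p=d$). First I would record, via Lemma~\ref{lem:bubble_number}, that an unoccupied demand spot forces $\#d\ge 1$ and hence $\#s=\#b+\#d\ge 1$: supply still remains, and by Lemma~\ref{lem:valid_supply_graph} a live feather path continues to connect $d$ to a non-empty supply component. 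This guarantees that any bubble sitting in $D$ still has somewhere to go.

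The easy case is when $p$ is contracted: by fairness $p$ is activated at some later time, and upon activation it expands into the adjacent empty node $w$, creating a new bubble and completing the proof. The remaining case is when $p$ is expanded, i.e.\ it is still holding a bubble that has not yet advanced toward $d$. Here I would argue by contradiction: assume that from some time $t$ onward no new bubble is ever created. Then no frontier leaf ever expands, so the built portion of every $\TC_D$ freezes and the set of bubbles present at time $t$ is fixed and finite.

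Now I would drain this fixed set of bubbles. While any bubble exists, Lemma~\ref{lem:bubble_progress} guarantees that some bubble makes progress in each round; since no new bubbles appear, and since a fixed finite set of bubbles can make only boundedly much progress before being resolved (each bubble advances along feather paths and can reverse only finitely often, because a feather path marked dead stays dead and a live path to non-empty supply persists by Lemma~\ref{lem:valid_supply_graph} while $\#s\ge 1$), after finitely many rounds all bubbles are resolved. But a configuration with no bubbles has no expanded particles holding them, so in particular the frontier leaf $p$ is now contracted; by fairness it is eventually activated and expands into $w$, creating a new bubble after time $t$ --- contradicting the assumption. Hence a new bubble is eventually created.

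The main obstacle is exactly the expanded-leaf case: Lemma~\ref{lem:bubble_progress} only promises that \emph{some} bubble progresses, not that the particular bubble blocking $p$ clears out, so a direct ``$p$ becomes contracted'' argument is not immediate. The contradiction route sidesteps this by draining every bubble at once rather than tracking one, and the one point that needs careful checking is that a fixed set of bubbles cannot make progress forever without being resolved; this rests on the monotonicity of dead feather paths together with the persistence of a live path to supply (Lemma~\ref{lem:valid_supply_graph}) while supply remains.
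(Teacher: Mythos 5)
Your proof is correct, and it reaches the paper's conclusion by a noticeably more careful route. The paper's own proof is root-centric and very terse: it takes the demand root $d$ responsible for the unoccupied spot, asserts that if $d$ currently holds a bubble then it ``eventually contracts by Lemma~\ref{lem:bubble_progress}'', and concludes that $d$ then expands on its next activation, creating a new bubble. You instead work at a frontier leaf of the partially built spanning tree $\TC_D$ (which in fact matches the bubble-creation mechanism of Section~\ref{sec:supply-demand} more faithfully than the root-centric phrasing), and, crucially, you refuse to take for granted the step the paper elides: as you yourself flag, Lemma~\ref{lem:bubble_progress} only promises that \emph{some} bubble makes progress, not that the particular bubble blocking the frontier leaf clears. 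Your contradiction-and-draining argument closes this gap soundly: if no new bubble were ever created, the bubble population would be fixed and finite, each bubble's total progress is bounded --- this is exactly Lemma~\ref{lem:edge_counts}, which you could cite directly rather than re-deriving the bound on reversals from dead-path monotonicity and Lemma~\ref{lem:valid_supply_graph} --- so repeated application of Lemma~\ref{lem:bubble_progress} forces every bubble to resolve, after which the contracted frontier particle expands by fairness, contradicting the assumption. One small slip: Lemma~\ref{lem:bubble_progress} guarantees progress \emph{eventually}, not ``in each round'', but this does not harm your finiteness argument. In sum, your version buys rigor at precisely the paper's weakest step (the jump from ``some bubble progresses'' to ``this bubble clears'') at the cost of length, while the paper's version buys brevity by implicitly assuming that stronger per-bubble liveness.
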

\begin{proof}
There is a demand spot that is not occupied.
There is a demand root~$d$ that is responsible for this demand spot.
If $d$ currently holds a bubble, it eventually contracts by Lemma~\ref{lem:bubble_progress}.
After $d$ is contracted, next time it activates it expands, thus creating a new bubble.
\end{proof}

The lemmas outlined above lead to the following conclusion.
\begin{lemma}\label{lem:correctness}
A particle configuration $\PC$ stays valid and live at all times.
\end{lemma}

\subsection{Running time}
We begin by arguing that the total distance traveled by each bubble is linear in the number of particles in the system.
We first limit the length of the path each bubble takes.

\begin{lemma}\label{lem:edge_counts}
The total path of a bubble~$b$ in a particle configuration $\PC$ with $n$ particles has size $O(n)$.
\end{lemma}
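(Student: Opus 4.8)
The plan is to bound separately the three regimes a bubble can be in: inside a supply component, inside the supply graph $S$, and inside a demand component. For each regime I would argue that the path length is $O(n)$, and then combine the bounds. The key structural fact I would exploit is that every individual movement of a bubble proceeds along a \emph{feather path}, which by the first supply graph property is a shortest path in $G_{I\cap T}$; hence each forward segment has length at most the diameter $d = O(n)$. The only way a bubble can accumulate more than $O(d)$ length is by reversing direction ($\delta$ flipping from \textsc{s} to \textsc{d} and back) and re-traversing portions of $S$. So the heart of the argument is to bound the total amount of back-and-forth travel.

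First I would handle the interior of supply and demand components. A bubble created at a demand root travels down the spanning tree $\TC_D$ of its demand component $D$; since the demand components have constant description complexity (\assumption{2}), and more importantly a bubble moves monotonically from a leaf toward the root $d$ in $D$ (it never re-enters $D$ once it leaves through $d$), this contributes at most $O(n)$ to the path. Symmetrically, once a bubble enters a supply component through a supply root, it descends the supply SP-tree monotonically toward a leaf where it is resolved, contributing at most the depth of that tree, again $O(n)$. So both ``endpoint'' regimes contribute $O(n)$ each without any reversal.

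The main work is bounding the travel \emph{inside} $S$. Here the crucial observation is that reversals are driven by supply running out: a bubble reverses (rule~3) precisely when its current node in $S$ has no alive successor on a feather path, and when it reverses it sets the corresponding $\lambda(i,\beta)$ to \textbf{false}, permanently killing that path (paths are never revived). Combined with Lemma~\ref{lem:alive-path}, once an edge of $S$ is marked dead for a given $(i,\beta)$, no bubble will ever traverse it in the direction toward that dead supply again. I would therefore set up a charging argument: each backward step the bubble takes can be charged to a distinct node-direction-$\beta$ triple of $S$ that gets permanently marked dead. Since $S\subseteq G_{I\cap T}$ has $O(n)$ nodes and the number of $(i,\beta)$ combinations per node is constant (six directions, two $\beta$ values), the total number of kill-events over the lifetime of a single bubble is $O(n)$, bounding the total backward travel by $O(n)$. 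Each maximal forward segment between two reversals is a feather path of length $O(d) = O(n)$, and the number of such segments is one more than the number of reversals, so the forward travel is also $O(n)$.

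The step I expect to be the main obstacle is making the charging argument airtight, specifically verifying that a single bubble cannot re-traverse the same edge of $S$ forward arbitrarily many times \emph{before} any of the relevant $\lambda$ values are set to false. The subtlety is that a bubble may be routed forward along a still-alive feather path, reach a junction, reverse, and be redirected (rule~4) onto a \emph{different} alive branch without immediately killing the branch it just came from. I would resolve this by arguing that whenever rule~4 redirects the bubble to an alive branch the bubble makes net forward progress toward an as-yet-unexhausted supply root, and that by Lemma~\ref{lem:valid_supply_graph} the branch it is diverted onto leads to non-empty supply; thus each redirection strictly decreases the number of supply particles still reachable, or else triggers a kill-event on the exhausted branch via rule~5. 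Quantifying ``net forward progress'' carefully—most cleanly via a potential function combining the number of alive $(i,\beta)$ triples in $S$ with the bubble's shortest-path distance to the nearest non-empty supply—is where the care is needed, but once that potential is shown to decrease by at least a constant on every $O(1)$ rounds of bubble activity, the $O(n)$ bound follows.
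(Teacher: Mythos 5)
Your bound on backward travel is sound and coincides with the paper's: the paper likewise observes that a bubble steps back over an edge of $S$ only when no valid successor is alive, and since $\lambda$-values are set to false permanently, each edge is traversed backward at most once in total, giving $O(n)$ backward steps. The genuine gap is in your forward bound. From ``each forward segment is a feather path of length $O(d)$'' and ``the number of segments is one more than the number of reversals'' you conclude the forward travel is $O(n)$; but your own charging argument permits up to $\Theta(n)$ reversals, so this inference only yields $O(n\cdot d)=O(n^2)$. You correctly flag this as the crux, but the potential function you sketch does not obviously repair it: the distance to the \emph{nearest} non-empty supply is not monotone under forward steps (the bubble may be moving along an alive feather path toward a supply that other bubbles are simultaneously draining, while the nearest non-empty supply lies behind it), and the count of alive $(i,\beta)$ triples changes only at reversals, not on forward steps, so the combined potential need not drop by a constant per step. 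As it stands, nothing in your argument excludes a bubble re-traversing the same edge of $S$ forward several times, with different $(i,\beta)$ values, before the relevant triples are killed.

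The paper closes exactly this gap with a budget argument that sidesteps the question of repeated forward traversals entirely. Each bubble~$b$ starts with a budget $\mathcal{B}_0$ equal to the number of edges of $S$; a step toward supply costs one unit, a step toward demand refunds one. Since backward travel is $O(n)$, it suffices to show the budget never runs out, which bounds the forward steps by $\mathcal{B}_0$ plus the backward steps, i.e.\ $O(n)$ in total. Budget sufficiency is proved by a triangle-inequality induction: if $b$, launched at demand root $d$, first reverses at $w$ and resumes direction $\delta=\textsc{s}$ at $v$, its budget at $v$ is $\mathcal{B}_0-\dist(d,w)+\dist(w,v)\ge\mathcal{B}_0-\dist(d,v)$, i.e.\ at least what a direct trip from $d$ to $v$ would have left; repeating this between consecutive reversals shows the remaining budget always suffices to reach some supply root along a feather path. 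If you want to salvage your own route instead, you would need to strengthen the charging so that forward steps, too, are charged $O(1)$ times per edge of $S$ (e.g.\ arguing that between two forward traversals of the same directed edge the bubble must kill a distinct $\lambda$-triple on it), rather than bounding segments by their worst-case length; your monotone $O(n)$ bounds inside the supply and demand components are fine and uncontroversial.
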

\begin{proof}
	For ease of analysis, we say each bubble $b$ initially has a budget $\mathcal{B}_0$ of size equal to the number of edges in $S$.
	Whenever $b$ makes a step towards supply, the budget decreases by one.
	Whenever $b$ makes a step towards demand, the budget increases by one.
	A bubble only takes a step back towards a demand root if none of the valid successors in the supply graph are alive.
	Therefore, it only moves back over every edge of $S$ at most once and the budget is increased in total by $O(n)$.
	The budget of a bubble $b$, and therefore the number of steps $b$ takes is bounded by $O(n)$.
	
	We now prove that the budget is sufficient for $b$ to eventually reach a supply.
	Let $b$ be initiated at demand root $d$, and let $w$ be the first node at which $b$ changes its direction from $\delta=\textsc{s}$ to $\textsc{d}$, and let $v$ be the first node at which $b$ changes its direction back to $\delta=\textsc{s}$.
	Note that $w\not=v$, as $b$ always moves down or up $S$ along a feather path.
	Thus, if the direction of a bubble changes, it makes at least one step up the supply graph.
	The budget of $b$ when it reaches $w$ is $\mathcal{B}_0-\dist(d,w)$ (where $\dist(d,w)$ is the distance between $d$ and $w$ in $S$), and its budget when it reaches $v$ is $\mathcal{B}_v=\mathcal{B}_0-\dist(d,w)+\dist(w,v)$.
	By the triangle inequality ($\dist(d,v)+\dist(v,w)\ge\dist(d,w)$), $\mathcal{B}_v$ is at least the budget that $b$ would have if it traveled to $v$ directly from $d$.
	Repeating this argument between any two consecutive changes of $b$'s direction from $\textsc{d}$ to $\textsc{s}$, we conclude that at any moment in time $b$'s budget is sufficient to reach any supply root along a feather path.
\end{proof}

Next, we analyze the progress made by each bubble by creating a specific series of configurations; we show that any activation sequence results in at least as much progress.
This approach has been previously used for analyzing the running time of a moving line of particles~\cite{Derakhshandeh2016}, and for a tree moving from the root~\cite{Daymude2018}.
We repeat most of the definitions of~\cite{Daymude2018} here for completeness.
We say a \emph{parallel movement schedule} is a sequence of particle configurations $(C_0,\dots,C_t)$ where each configuration~$C_i$ is a valid particle configuration, and for each $i \geq 0$, $C_{i+1}$ can be reached from $C_i$ such that for every particle $p$, one of the following properties hold:
\begin{enumerate}
    \item $p$ occupies the exact same node(s) of $G$ in $C_i$ and $C_{i+1}$,
    \item $p$ was contracted in $C_i$ and expands into an empty adjacent node in $C_{i+1}$,
    \item $p$ was expanded in $C_i$ and contracts, leaving an empty node in $C_{i+1}$, or
    \item $p$ performs a valid handover with a neighboring particle $q$.
\end{enumerate}

Let $A$ be an arbitrary fair asynchronous activation sequence and let $C_i^A$ be the particle configuration at the end of round $i$ in $A$, if every particle moves according to our algorithm.
Let $\pi_p$ be the path particle~$p$ takes from $C_0^A$ till $C_k^A$, where $C_k^A$ is the configuration that solves the reconfiguration problem.

A \emph{bubble schedule} $B = (A, (C_0, \dots, C_t))$ is a parallel schedule $(C_0, \dots, C_t)$, where each particle $p$ follows the same path $\pi_p$ it would have followed according to $A$.
Recall that if two particles $p$ and $q$ want to take a conflicting action by pulling on the same neighbor, this is arbitrarily resolved and only one of them succeeds.
In a bubble schedule the particles follow the same path as they would under $A$, and hence the same particle makes progress in the bubble schedule.
A bubble schedule is said to be \emph{maximal} if all particles perform the allowed movements in a parallel schedule whenever possible.
If all particles follow the same paths, it follows that the bubbles follow the same paths as well.

Let $M^p = (m_0^p, \dots, m_j^p)$ be the sequence of moves that a particle $p$ makes on $\pi_p$.
We construct the configurations in a maximal bubble schedule starting from $C_0 = C_0^A$ as follows.
At each configuration $C_i$ for $0 \leq i < t$, let $\{m_{i}^p \mid p \in P\}$ be the set of moves all particles can make at configuration $C_i$.
We now pick a maximal subset $M_i$ of $\{m_{i}^p \mid p \in P\}$ such that all moves in $M_i$ are mutually compatible.
After executing the moves in $M_i$, we obtain~$C_{i+1}$.
Lastly, we say that a configuration $C$ is dominating $C'$ ($C \succeq C'$) if every particle in $C$ has traveled at least as far along its path as its counterpart in $C'$.

\begin{figure*}
    \centering
    \includegraphics{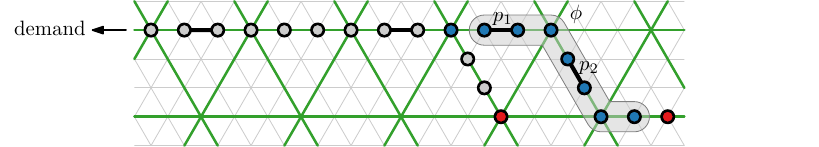}
    \caption{A partitioning of $S$. The grid $G_L$ is shown in green. Only particles on $S$ are shown. Demand root is on the left outside the figure, supply roots are red. Particles $p_1$ and $p_2$ are adjacent.
    The blue particles are on nodes of $V_c$, the grey particles are on nodes of $V_b$. One component~$\phi$ is highlighted in light grey.}
    \label{fig:partitioning}
\end{figure*}

\begin{lemma}[Lemma~3 of~\cite{Daymude2018}]
\label{lem:domination_argument}
Given any fair asynchronous activation sequence $A$ which begins at an initial configuration $C_0^A$ in which there are no expanded particles, there is a maximal bubble schedule $B = (A, (C_0, \dots, C_t))$ with $C_0 = C_0^A$, such that $C_i^A \succeq C_i$ for all~$0 \leq i \leq t$.
\end{lemma}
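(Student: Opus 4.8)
The plan is to prove the statement by induction on the round index $i$, establishing that the domination relation $C_i^A \succeq C_i$ is preserved from one round to the next. The base case is immediate, since $C_0 = C_0^A$ by construction. For the inductive step I would fix an arbitrary particle $p$ and compare the distance $a_i(p)$ it has travelled along its fixed path $\pi_p$ in $C_i^A$ with the distance $b_i(p)$ it has travelled in the bubble-schedule configuration $C_i$. The induction hypothesis is exactly $a_i(p) \ge b_i(p)$ for every particle $p$, and the goal is $a_{i+1}(p)\ge b_{i+1}(p)$ for every $p$. Note that along $\pi_p$ progress is monotone in both schedules by definition of the recorded trajectory, so $a_i(p)$ is non-decreasing in $i$, and a single parallel step advances any particle by at most one, so $b_{i+1}(p)-b_i(p)\in\{0,1\}$.

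First I would dispose of the two easy cases. If $p$ does not advance in the bubble schedule during this step, then $b_{i+1}(p)=b_i(p)\le a_i(p)\le a_{i+1}(p)$, and domination holds for $p$. If $p$ does advance but was already strictly ahead in the asynchronous schedule, i.e.\ $a_i(p) > b_i(p)$, then $a_{i+1}(p)\ge a_i(p)\ge b_i(p)+1 = b_{i+1}(p)$, and again domination is maintained. In both situations we have not even had to inspect the precise nature of $p$'s move.

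The remaining, crucial case is when $p$ advances in the maximal bubble schedule while its asynchronous counterpart has exactly caught up, $a_i(p)=b_i(p)$. Here I must show that $p$ advances during round $i+1$ of $A$ as well, so that $a_{i+1}(p)\ge b_i(p)+1$. The move $m_i^p$ that $p$ performs in the bubble schedule is a pull handover along $S$ (an expanded particle pulling a contracted successor, equivalently a contracted particle pulled by an expanded predecessor). I would argue that the enabling conditions for this handover, which hold at $C_i$, also hold at the first activation of $p$ within round $i+1$: the handover partner $q$ is, by the induction hypothesis applied to $q$, at least as far along in $C_i^A$ as in $C_i$, and a short case analysis on whether $p$ pulls or is pulled shows that the required contracted/expanded states and the vacancy of the target node are preserved under domination. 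Because the bubble schedule is constructed so that every particle follows its identical path $\pi_p$, the particle that advances in the maximal parallel step is precisely the one that $A$ lets succeed in any conflicting pull, so the adversarial conflict resolution introduces no mismatch and the same handover succeeds in $A$. Hence $p$ advances during round $i+1$ and $a_{i+1}(p)\ge b_{i+1}(p)$.

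I expect this last case to be the main obstacle. The delicate point is transferring the \emph{availability} of a move from the synchronized schedule to the asynchronous one using only the per-particle domination bound, while correctly accounting for handovers, in which two particles change state simultaneously, and for the adversary's freedom in resolving conflicting pulls. Once availability is established in the catch-up case, collecting the three cases over all particles yields $C_{i+1}^A\succeq C_{i+1}$, and the induction closes.
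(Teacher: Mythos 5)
The paper does not actually prove this lemma: it is imported verbatim as Lemma~3 of Daymude et al.~\cite{Daymude2018}, with no in-text proof, so the only meaningful comparison is with the cited original. Your sketch reconstructs exactly the standard argument used there (and earlier for a moving line in~\cite{Derakhshandeh2016}): induction on rounds, per-particle progress along the fixed paths $\pi_p$, and the three-way case split where only the tied case ($a_i(p)=b_i(p)$ with $p$ scheduled to move) requires work. You also correctly identify the two points that make the tied case close: conflict resolution is inherited from $A$ itself (the paper defines the bubble schedule so that ``the same particle makes progress''), and fairness guarantees $p$ an activation in round $i+1$. Two imprecisions are worth fixing. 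First, the moves $m_i^p$ are not only pull handovers: the parallel movement schedule also allows pure expansions (a demand root expanding into an empty demand spot, creating a bubble) and pure contractions (a supply leaf resolving a bubble), and the tied-case argument must cover these, though they are easy — contraction is always enabled, and each demand spot is entered from a unique direction per the stored spanning tree. Second, your claim that the partner's ``contracted/expanded states and the vacancy of the target node are preserved under domination'' is not literally true: if the partner $q$ is strictly ahead in $C_i^A$, its state can differ from its state in $C_i$. The correct argument is a position-counting one: if $p$ is still tied (has not advanced), then $q$ being strictly ahead would require $q$ to occupy a node that $p$ still occupies, a contradiction; hence $q$ is also tied, in the same state as in $C_i$, and the handover is enabled. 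With those repairs your outline matches the cited proof, so there is no gap in approach — only the deferred ``short case analysis,'' which is in fact the entire substance of the original lemma.
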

Any valid activation sequence $A$ makes at least as much progress per round as a corresponding maximal bubble schedule $B$.
Now for a maximal bubble schedule $B$, consider a similar bubble schedule~$B'$, where contracted particles at the grid nodes of $G_L$ are only allowed to activate if there is no other grid node in $G_L$ that is occupied by an expanded bubble.
Clearly $B$ (and thus $A$) makes at least as much progress per configuration as $B'$.
We call such $B'$ a \emph{junction-synchronized bubble schedule}, and use this kind of parallel schedule from now on.

%
Consider a configuration $C_i$, where all expanded particles in $S$ occupy only edge nodes of $G_L$.
We call two expanded particles \emph{adjacent} if they are separated by a single grid node.
Let $J$ be the set of all junction nodes of $S$ with total (in and out) degree at least three, and let $R_D$ and $R_S$ be the sets of demand and supply roots respectively.
We partition the nodes of $S$ into two sets $V_b$ and $V_c$, such that (1) each connected component $\phi$ of $S\setminus (J\cup R_D \cup R_S)$ (which is a path) consists of $k_b\ge 0$ nodes in $V_b$ followed by some number $k_c$ nodes in $V_c$, and (2) if $k_c\ge 5$ then the first five nodes on $\phi$ that are in $V_c$ contain two expanded particles separated by a grid node (see Figure~\ref{fig:partitioning}).

If these two conditions are satisfied, we call such partition $(V_b,V_c)$ \emph{valid}.
Consider the particles occupying the nodes of $S\setminus (J\cup R_D \cup R_S)$.
Recall that for a bubble~$b$ moving forward in $S$, $N(b)=\{(r^i_1,r^i_2) \mid i\in \mathcal{D}(b)\}$ is the set of pairs of edge nodes lying in the valid directions for $b$.
For every expanded particle $p$ with bubble~$b$, we put the nodes of $p$ in $V_c$ if there does not exists $(r_1, r_2)\in N(b)$ where both $r_1$ and $r_2$ are occupied by contracted particles.
Next, for every unassigned particle $p \in S\setminus (J\cup R_D \cup R_S)$ (expanded or not), consider the component $\phi$ that contains $p$.
We put the nodes of $p$ in $V_c$ if there exists an expanded particle $q$ in $\phi$ such that (1) $q$ has $p$ as its predecessor in $\phi$, and (2) $q$ has been assigned to $V_c$.
All the remaining nodes in $S\setminus (J\cup R_D \cup R_S)$ are put in $V_b$.
Finally, for each node $c$ in $J \cup R_D \cup R_S$, if it has an adjacent node in $V_c$, we put $c$ in $V_c$, otherwise, we put $c$ in $V_b$.
Note that by construction this partition is valid.
\begin{lemma}\label{lem:nice-schedule}
Given any fair asynchronous activation sequence $A$, consider any junction-synchronized bubble schedule $(A, (C_0,\cdots,C_t))$.
For any configuration $C_i$ with $0 \le i \le t$ that has all expanded particle in $S$ occupying two edge nodes in $G_L$, there is a valid partition of the nodes of $S$ into $V_b$ and $V_c$ that satisfies the following properties.
\begin{enumerate}
\item For induced subgraph $S_b$ of $S$ on $V_b$: For every bubble~$b\in S_b$, there exists $(r_1, r_2)\in N(b)$ where both $r_1$ and $r_2$ are in $V_b$ and are occupied by contracted particles.
\item For induced subgraph $S_c$ of $S$ on $V_c$: For every two neighboring edge nodes $r_1$ and $r_2$ in $G_L$ occupied by two contracted particles, there exists a bubble~$b$ in $S_c$ such that $(r_1, r_2)\in N(b)$.
\end{enumerate}
\end{lemma}
\begin{proof}
	We will prove the statement by induction.
	Because there are no bubbles in $C_0$, all the nodes of $S$ are in $V_b$ and the lemma holds.
	
	Assume the lemma holds for some configuration $C_i$ with $0 \le i \le t$ in which every expanded particle~$p$ occupies only edge nodes in $G_L$.
	We will now show that the lemma also holds for configuration $C_{i+x}$ for some small $x$ and we show this is the first configuration for which every expanded particle~$p$ again occupies only edge nodes in $G_L$.
	
	If there are no bubbles in $S$ in $C_i$, the demand roots expand into their demand components, creating new bubbles.
	New bubbles that are created can only enter $S$ if there is enough free space after the demand root, therefore, a new bubble automatically satisfies the lemma.
	
	Assume then that there are bubbles in $S$ in $C_i$.
	For $C_i$ there can be two cases: there are either none or some expanded particles which received an acknowledgment to move from the corresponding junction nodes in the previous round $C_{i-1}$.
	In the first case, no particles move in $C_i$, and only request, availability, and acknowledgment tokens are being sent.
	Then, the lemma still holds for $C_{i+1}$.
	In the second case, particles that have received an acknowledgement move in $C_i$.
	It takes three more configurations, until $C_{i+3}$, for them to fully traverse their junction nodes and occupy the following edge nodes.
	By our assumption on the junction-synchronized bubble schedule, no other particles move in rounds $C_{i+1}$ and $C_{i+2}$.
	
	We first analyze the subgraph $S_b$, induced by the nodes in $V_b$ at $C_{i+3}$.
	Consider an expanded particle $p$ in $S_b$ with bubble $b$, for which the first property is violated in $C_{i+3}$.
	This can only happen if either (1) $p$ does not move in $C_i$, $c(b)\in J$, and another bubble crosses $c(b)$ in rounds $C_i$--$C_{i+3}$, or (2) $p$ moves in~$C_i$, and all nodes of $N(b)$ in $C_{i+3}$ belong to $V_c$.
	Then, we change the partition of the nodes in $C_{i+3}$ by moving the nodes of $p$ and $c(b)$ from $V_b$ to $V_c$.
	This does not violate the second property for $p$ in $C_{i+3}$, and the partition remains valid.
	
	Now, we analyze the subgraph $S_c$.
	Consider two contracted particles occupying edge nodes $r_1$ and $r_2$ in $S_c$, for which the second property is violated in $C_{i+3}$.
	Let $r_3$ and $r_4$ in $V_c$ be a pair of grid nodes which together with $r_1$ and $r_2$ violate the second property in~$C_{i+3}$, and $c_0$, $c_1$, and $c_2$ be adjacent to them grid nodes such that  $c_2 \rightarrow r_3\rightarrow r_4\rightarrow c_1\rightarrow r_1\rightarrow r_2\rightarrow c_0$ (see Figure~\ref{fig:vc_order}).
	
	\begin{figure}
		\centering
		\includegraphics{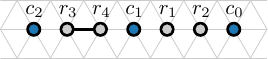}
		\caption{The order of particles at $C_i$. All particles are on $S$. The particles on crossings are blue. Supply is to the right, demand to the left.}
		\label{fig:vc_order}
	\end{figure}
	
	If the particles at $r_1$ and $r_2$ do not move in $C_i$, then $r_3$ and $r_4$ are occupied by an expanded particle in $C_i$, which moves to a different branch of $S$.
	Then it must be that $c_1\in J$, and thus the partition $(V_b,V_c)$ is not valid in $C_i$.
	Consider now the case that there is an expanded particle at $r_1$ and $r_2$ in $C_i$ that moved.
	Then, the bubble $b$ for which $r_3$ and $r_4$ are in $N(b)$ in $C_i$, crosses $c_2$ into a different branch of $S$ in $C_i$ to $C_{i+3}$.
	We change the partition of the nodes in~$C_{i+3}$ by moving the nodes of $r_1$, $r_2$, $r_3$, and $r_4$ from $V_c$ to $V_b$.
	If $c_0$ and $c_1$ do not have other adjacent nodes in $V_c$, then we move them into $V_b$ as well.
	This does not violate the first property for $p$ in $C_{i+3}$, and the partition $(V_b,V_c)$ remains valid.
\end{proof}
\medskip
This lemma ensures that in a junction-synchronized bubble schedule, bubbles in $S_b$ have enough space to move, and for expanded particles to pack tighter towards the sinks of $S_c$, in every constant number of rounds.
As a result, a bubble will cross a sink of $S_c$ and enter $S_b$ (or supply) every constant number of rounds.
Therefore, it takes a bubble in $S_c$ at most linear, in the size of $S_c$, number of rounds to exit $S_c$.
Finally, bubbles that move towards an empty supply will reverse to the nearest alive path to find non-empty supply.
This in total takes linear number of rounds in the size of the dead path.
Together with Lemma~\ref{lem:edge_counts}, these observations imply the following lemma.

\begin{lemma}\label{lem:maximal_linear}
A junction-synchronized bubble schedule $(A,(C_0,\cdots,C_t))$ solves the reconfiguration problem in $O(n)$ configurations.
\end{lemma}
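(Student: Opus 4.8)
The plan is to combine the per-bubble path-length bound of Lemma~\ref{lem:edge_counts} with the structural partition of Lemma~\ref{lem:nice-schedule} to show that, in a junction-synchronized bubble schedule, every unresolved bubble advances one step along its path at least once every constant number of configurations. Since each bubble's total path has length $O(n)$, this yields that each bubble is resolved within $O(n)$ configurations, and because all bubbles progress in parallel the entire schedule has length $O(n)$. I would only analyze the transitions between consecutive \emph{clean} configurations (those in which every expanded particle in $S$ occupies two edge nodes of $G_L$), since by the junction-synchronization each crossing spans at most three configurations, so clean configurations recur every $O(1)$ steps.

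First I would fix a clean configuration $C_i$ and invoke Lemma~\ref{lem:nice-schedule} to obtain the valid partition $(V_b,V_c)$. For the free part $S_b$, property~(1) directly guarantees that every bubble $b\in S_b$ has a pair $(r_1,r_2)\in N(b)$ of contracted edge nodes ahead of it, so $b$ performs a forward handover in the next step. For the congested part $S_c$, property~(2) states that every pair of adjacent contracted edge nodes is already reserved by some bubble, so $S_c$ is packed and each of its bubbles lies on a maximal chain of consecutive expanded particles terminating at a \emph{sink}: a junction node or a supply root feeding into $S_b$ or into supply. The safety invariants (Lemmas~\ref{lem:valid_supply_graph} and~\ref{lem:alive-path}) ensure these sinks lead to non-empty supply, so they are never permanently blocked. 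Each junction transfers one bubble every $O(1)$ configurations (at most six competing directions, three configurations per crossing), and once the head of a chain vacates its node the remaining bubbles shift forward by one via a cascade of simultaneous handovers within the same parallel step; hence every bubble of the chain, and thus every bubble in $S_c$, also advances one step every $O(1)$ configurations.

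With per-step progress established, I would conclude as follows. Every unresolved bubble advances along its path at a constant amortized rate, and by Lemma~\ref{lem:edge_counts} its path---including the backward moves made when reversing off dead branches---has length $O(n)$; hence each bubble is resolved within $O(n)$ configurations of its creation. By the same pipeline argument each demand root emits a new bubble every $O(1)$ configurations (Lemma~\ref{lem:demand_progress}), and since the total number of bubbles equals the number of supply particles (Lemma~\ref{lem:bubble_number}), which is $O(n)$, every bubble is created within $O(n)$ configurations. Combining creation and resolution times, the last bubble is resolved after $O(n)$ configurations, and by Lemma~\ref{lem:correctness} the configuration stays valid throughout, so the schedule solves the reconfiguration problem in $O(n)$ configurations.

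The main obstacle I expect is making the congestion argument fully rigorous: one must show that an entire packed chain in $S_c$ truly advances by one step every constant number of configurations even when several chains compete for the same junction and when crossings span three configurations, and that the reversals of bubbles off exhausted supply branches do not break the constant amortized progress rate. Handling reversals cleanly relies on the fact, already used in Lemma~\ref{lem:edge_counts}, that each edge of $S$ is traversed backward at most once, which bounds the extra configurations spent reversing by $O(n)$ per bubble and thereby keeps the amortized rate constant.
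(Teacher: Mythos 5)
Your proposal is correct and follows essentially the same route as the paper's proof: both combine the $O(n)$ per-bubble path bound of Lemma~\ref{lem:edge_counts} with the partition $(V_b,V_c)$ of Lemma~\ref{lem:nice-schedule} to argue constant amortized progress (immediate handovers in $S_b$, packed chains draining through sinks of $S_c$ at a constant rate), and both charge reversals to dead branches of total size $O(n)$ traversed backward at most once. The only cosmetic differences are that you make the bubble-creation pipeline at demand roots explicit, whereas the paper dispatches movement inside supply and demand components by citing the analysis of~\cite{Daymude2018}.
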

\begin{proof}
We know that the path of each bubble is linear (Lemma~\ref{lem:edge_counts}).
What is left to show is that every bubble makes progress every constant number of configurations.
The analysis of~\cite{Daymude2018} directly applies to the movement of bubbles inside the supply and demand components.
Thus, we only need to analyze the movement of bubbles through $S$.
Consider a bubble $b$ and its path $\pi_b$ in $S$ starting at some demand root in configuration $C_0$ and reaching the supply root in some configuration $C_{t_b}$.
Let $v_i(b)$ denote the node of $S$ that $b$ occupies in configuration $C_i$.
If $v_i(b)$ is in $V_b$ in $C_{i}$, then, by Lemma~\ref{lem:nice-schedule}, $b$ made progress along $\pi_b$ since at least the last three rounds.

If $v_i(b)$ is in $V_c$ in $C_{i}$ and later, at some configuration $C_{i+j}$, it crosses some sink of $S_c$ back to $S_b$, then it makes on average $O(j)$ progress along its path.
This is due to the fact that a bubble crosses a sink of $S_c$ at most every three rounds.

Finally, if $v_i(b)$ is in $V_c$ in $C_{i}$ and later at some configuration $C_{i+j}$ its node $v_{i+j}(b)$ becomes dead, 
then $b$ reverses its direction, and this branch of $S$ is not going to be traversed by other bubbles in the future.
After a linear number of rounds in the size of the dead branch, all the bubbles on it reverse their directions, and the analogous analysis as in the previous case applies.
As the total size of the dead branches is limited by $O(n)$, the number of rounds that each bubble spends to traverse dead branches is at most $O(n)$.

Thus, in total, every bubble will reach supply and get resolved in $O(n)$ rounds.
\end{proof}

Together with the proof of correctness (Lemma~\ref{lem:correctness}), we conclude with the following theorem:
\begin{theorem}\label{theorem:sequential}
The particle reconfiguration problem for particle configuration $\PC$ with $n$ particles can be solved using $O(n)$ rounds of activation under a sequential scheduler.
\end{theorem}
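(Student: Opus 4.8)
The plan is to assemble the theorem from the two pillars already in place: correctness (Lemma~\ref{lem:correctness}) and the running-time bound for the idealized parallel execution (Lemma~\ref{lem:maximal_linear}). The bridge between them is the domination argument of Lemma~\ref{lem:domination_argument}, together with its refinement to junction-synchronized bubble schedules. The one observation I would make up front is that a sequential scheduler is itself a fair asynchronous activation sequence $A$ --- one in which exactly one particle is active at each instant --- so every statement proved for an arbitrary fair $A$ applies verbatim. In particular, under a sequential scheduler conflicting pulls on a common neighbor simply cannot arise, which is exactly why this is the cleanest case to settle first.

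First I would invoke Lemma~\ref{lem:correctness} to guarantee that $\PC$ remains valid and live throughout, so that the process is well defined and can only terminate in the correct target configuration; liveness (via Lemmas~\ref{lem:bubble_progress} and~\ref{lem:demand_progress}) ensures that as long as an unresolved bubble exists, some particle can make progress when activated, and since every particle is activated at least once per round, progress occurs every round until all bubbles are resolved. This already yields termination. However, a naive count --- one unit of progress per round across $\Theta(n)$ bubbles, each traversing a path of length $O(n)$ by Lemma~\ref{lem:edge_counts} --- would only give an $O(n^2)$ bound, so the parallel-progress machinery is indispensable for the linear claim.

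To obtain the linear bound I would fix the junction-synchronized maximal bubble schedule $(A,(C_0,\dots,C_t))$ associated with $A$. Two facts combine: Lemma~\ref{lem:domination_argument} shows that $A$ dominates the maximal bubble schedule round by round, and by construction the maximal schedule in turn dominates its junction-synchronized refinement per configuration; composing these gives $C_i^A \succeq C_i$ for every $0 \le i \le t$, i.e.\ after $i$ rounds of the real execution each particle has travelled at least as far along its path $\pi_p$ as its counterpart in $C_i$. By Lemma~\ref{lem:maximal_linear} the schedule reaches a solved configuration $C_t$ with $t = O(n)$. The concluding step is the bookkeeping observation that dominating a solved configuration forces a solved configuration: in $C_t$ every particle has completed its path $\pi_p$, and since no particle can travel beyond the end of its path, the relation $C_t^A \succeq C_t$ forces every particle in $C_t^A$ to have completed its path as well. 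Hence the reconfiguration problem is solved by round $t = O(n)$.

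The main obstacle is not any single estimate but the clean transfer between the two regimes: translating ``$O(n)$ configurations of the idealized parallel schedule'' into ``$O(n)$ rounds of the real sequential execution.'' This hinges on verifying that (i) the sequential execution and the bubble schedule drive each particle along the identical path $\pi_p$, so that the partial order $\succeq$ is meaningful and monotone under a round-by-round comparison, and (ii) reaching a dominating configuration is equivalent to reaching the target. Both follow from the construction of bubble schedules and the definition of $\succeq$, but they are the true hinge of the argument and I would state them explicitly rather than leave them implicit.
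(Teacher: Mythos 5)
Your proposal is correct and follows essentially the same route as the paper: the paper obtains the theorem directly by combining Lemma~\ref{lem:correctness} with Lemma~\ref{lem:maximal_linear}, via exactly the domination chain you describe --- Lemma~\ref{lem:domination_argument} giving $C_i^A \succeq C_i$ for the maximal bubble schedule, composed with the observation that the maximal schedule dominates its junction-synchronized refinement. Your explicit remarks (that a sequential scheduler is a special case of a fair activation sequence, and that dominating a solved configuration forces a solved configuration) simply spell out steps the paper leaves implicit.
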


\subparagraph{Extending the analysis to an asynchronous scheduler}
Unlike in the case of a sequential scheduler, when at each moment only one particle can be active, under an asynchronous scheduler multiple particles may be active at the same time.
This may lead to concurrency issues.
Specifically, when actions of two or more particles conflict with each other, not all  of them can be finished successfully.
To extend our algorithm to the case of an asynchronous scheduler, we explore what kind of conflicts may arise, and ensure that our algorithm can deal with them. 
There can be three types of conflicts:
\begin{enumerate}
\item Two particles try to expand into the same empty node.
\item Two contracted particles try to push the same expanded particle.
\item Two expanded particles try to pull on the same contracted particle.
\end{enumerate}

The first two kinds of conflicts never arise in our algorithm, as (1) every empty node (demand spot) can be moved into from only a single direction, according to the spanning tree stored by the corresponding demand root, and (2) no particle ever performs a push operation.
The third conflict could potentially arise in our approach when two bubbles traveling on crossing paths pass the junction node.
However, as crossing of a junction is controlled by the junction particle, only one particle at a time is given permission to pull on the junction node.
Thus, under an asynchronous scheduler, all actions initiated by the particles always succeed.

We are left to analyze whether the asynchronous setting can lead to deadlocks in our algorithm, where concurrency issues result in some particle not being able to make progress.
As mentioned above, crossings of junctions are controlled by the junction particles, and thus cannot lead to deadlocks.
Furthermore, the algorithm forbids the simultaneous existence of bidirectional edges in $S$, thus there cannot be deadlocked bubbles moving in the opposite directions over the same edge.
The only remaining case of a potential deadlock is when the algorithm temporarily blocks one of the two directions of a bidirectional edge.
In a sequential schedule, this choice is made by one of the two edge nodes of $G_L$ corresponding to the bidirectional edge of $S$.
However, in an asynchronous schedule, both edge nodes may become active simultaneously, and choose the opposite directions of $S$.
To resolve this case, and to break the symmetry of two nodes activating simultaneously and choosing the opposite directions, we need to utilize the power of randomness.
We assume that the particles have access to a constant number of random bits.
Daymude et al.~\cite{Daymude2021a} show that, in that case, a mechanism exists that allows the particles to lock their local neighborhood, and to perform their actions as if the neighboring particles were inactive.
Such a locking mechanism increases the running time only by a constant factor in expectation, and by a logarithmic factor if the particle needs to succeed with high probability.
Thus, with a locking mechanism, we can ensure that our algorithm can select one of the two directions of a bidirectional edge in $S$.
\begin{theorem}\label{theorem:asynchronous}
The particle reconfiguration problem for particle configuration $\PC$ with $n$ particles can be solved in $O(n)$ rounds of activation in expectation (or in $O(n\log n)$ rounds of activation with high probability) under an asynchronous scheduler, if each particle has access to a constant number of random bits.
\end{theorem}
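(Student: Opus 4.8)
The plan is to build directly on the sequential-scheduler result of Theorem~\ref{theorem:sequential} and reduce the asynchronous case to it by eliminating the three structural ways concurrency could break the algorithm. First I would establish that no action of our algorithm can \emph{fail} under concurrency, by checking each of the three conflict types against the algorithm's invariants. The first two conflicts cannot occur at all: by Assumption~\assumption{6} each demand spot is filled from a single direction prescribed by the stored spanning tree, so two particles never contend for the same empty node; and since no particle ever issues a push (bubbles travel only via pull handovers, as established in the Bubbles paragraph), the second conflict type is vacuous. The third conflict---two expanded particles pulling the same contracted particle---can only arise at a junction crossing, but the junction's request/availability/acknowledgment protocol from Section~\ref{sec:nav} guarantees that at most one pull permission is granted per grid node at a time. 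Hence every action initiated under the asynchronous scheduler succeeds exactly as it would sequentially, and the safety invariants of Lemma~\ref{lem:correctness} are preserved verbatim.

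Next I would rule out deadlocks, which is where the real subtlety lies. The domination argument (Lemma~\ref{lem:domination_argument}, via the maximal and junction-synchronized bubble schedules) already shows that any fair asynchronous sequence $A$ makes at least as much progress per round as the parallel schedule $B'$, and Lemma~\ref{lem:maximal_linear} bounds $B'$ at $O(n)$ configurations; so as long as progress is never permanently stalled, the $O(n)$ round bound transfers. I would therefore enumerate the remaining deadlock candidates and dispatch them: junction crossings are serialized by the junction particle and thus cannot deadlock; and Observation~\ref{obs:bidirectional} together with the bidirectional-edge remark forbids two bubbles traveling in opposite directions over the same edge simultaneously. The one genuinely new phenomenon is the symmetry-breaking choice for a bidirectional edge of $S$: sequentially this direction is chosen deterministically by whichever edge node receives the supply-found token first, but under asynchrony both edge nodes may activate at once and pick opposing directions, producing exactly the forbidden bidirectional state.

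The main obstacle is precisely this symmetry-breaking step, and my plan is to resolve it with the locking primitive of Daymude et al.~\cite{Daymude2021a}. Granting each particle a constant number of random bits, I would invoke their mechanism to let an edge node lock its local neighborhood and commit to a direction as though its neighbor were inactive, thereby serializing the two competing activations. The cost accounting then follows immediately from their guarantees: the locking overhead is a constant factor in expectation and an $O(\log n)$ factor to succeed with high probability on any single attempt. Multiplying the $O(n)$ sequential bound by these overheads yields $O(n)$ expected rounds and $O(n\log n)$ rounds with high probability, which is the statement of Theorem~\ref{theorem:asynchronous}. The only point requiring care is that the locking is applied only at the finitely many bidirectional edges rather than at every activation, so that the multiplicative overhead does not compound across the whole execution; I would make this explicit by noting that all other actions already succeed without locking by the conflict analysis above.
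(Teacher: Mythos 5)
Your proposal matches the paper's own proof essentially step for step: the same enumeration and dismissal of the three conflict types, the same deadlock analysis via junction serialization and the bidirectional-edge restriction, and the same resolution of the symmetry-breaking choice on bidirectional edges of $S$ via the locking mechanism of Daymude et al.~\cite{Daymude2021a} with a constant number of random bits, yielding the constant-factor expected and logarithmic-factor high-probability overheads on top of Theorem~\ref{theorem:sequential}. Your closing remark that locking is needed only at the bidirectional edges is a reasonable clarification that the paper leaves implicit, but it does not change the argument.
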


The running time of our algorithm is worst-case optimal: any algorithm to solve the particle reconfiguration problem needs at least a linear number of rounds.

\begin{theorem}
Any algorithm that successfully solves the particle reconfiguration problem needs $\Omega(n)$ rounds.
\end{theorem}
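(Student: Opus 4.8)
The plan is to exhibit a single family of instances on which any correct algorithm is forced to spend $\Omega(n)$ rounds, by combining a total-movement (flow) lower bound with the observation that in one round the particles can collectively move only a linear amount of ``mass''. For the instance I would take $I$ and $T$ to be two congruent horizontal line segments of $n$ nodes each, offset along their common axis by $n/2$: say $I$ occupies the nodes at positions $0,\dots,n-1$ and $T$ occupies $n/2,\dots,3n/2-1$. Both shapes are simply connected and have exactly $n$ nodes, so this is a legal instance of the reconfiguration problem; its core $I\cap T$ is the sub-segment $[n/2,n-1]$, which is non-empty and simply connected (matching \assumption{1}), and the single demand component is again a line segment of minimal description complexity.

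The core estimate is a lower bound on the \emph{total movement} of the system, measured as the total number of elementary edge traversals (each expansion, contraction, or handover moves one particle across one grid edge and thus contributes at most a constant to this count). For each integer $c$ with $n/2\le c\le n$ I consider the cut separating the nodes at positions $<c$ from those at positions $\ge c$. Initially exactly $n-c$ particles lie to the right of this cut, whereas in the target $3n/2-c$ of them do; hence a net of $n/2$ particles must cross each such cut, and since in a width-one line the cut is spanned by a single grid edge, that edge is traversed at least $n/2$ times over the whole execution. Summing over the $\Theta(n)$ cuts shows that the total movement is $\Omega(n^2)$. This bound counts net mass flow and is therefore insensitive to handover chains: even though each individual particle may move only $O(1)$ steps, the aggregate flow forced across the overlap is quadratic.

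To convert this into a bound on rounds, I would fix the round-robin scheduler, which activates every particle exactly once per round and is fair; it suffices to exhibit one fair scheduler forcing $\Omega(n)$ rounds, since the worst-case round complexity is the maximum over schedulers. Under this scheduler each of the $n$ particles performs at most one move per round, so the total movement accrued in a single round is $O(n)$. Dividing the $\Omega(n^2)$ total-movement requirement by the $O(n)$ movement available per round yields $\Omega(n)$ rounds before the target can be reached, which proves the theorem.

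The step I expect to be the main obstacle is making the reduction from ``total movement'' to ``number of rounds'' fully rigorous, since a round is only defined as a minimal interval in which every particle is activated at least once, and an adversarial scheduler could in principle cram many activations of a few particles into one round, which would deflate the round count. This is exactly why I restrict attention to the round-robin scheduler for the lower bound: it caps the per-round movement at $O(n)$ and is fair, so it certifies that the \emph{worst-case} round complexity is $\Omega(n)$. A secondary point to check carefully is the per-cut bookkeeping---that an expanded particle straddling a cut is accounted for consistently and that any back-and-forth motion can only increase the traversal count---but this is routine once the flow argument is set up.
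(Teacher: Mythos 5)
Your proposal is correct, but it takes a genuinely different route from the paper. The paper's proof is a single-particle distance argument: it offsets the two lines so that they overlap in just one node, observes that the leftmost particle must travel at least $n-1$ nodes to reach any node of $T$, and concludes at least $2n-2$ rounds since each node of travel costs an expansion and a contraction, i.e., two move operations at one per activation. You instead prove an aggregate flow bound: a net of $n/2$ particles must cross each of $\Theta(n)$ vertical cuts, giving $\Omega(n^2)$ total edge traversals, which you divide by the $O(n)$ moves available per round under the round-robin scheduler. Your approach is more robust (it does not hinge on tracking one distinguished particle, and it establishes the stronger statement that total work is quadratic, not just that some particle is far from its destination), and you handle the scheduler subtlety more carefully than the paper does: the paper silently assumes one move per round per particle, whereas you explicitly fix a fair round-robin scheduler and note that the worst case over schedulers is what is being bounded. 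Note, incidentally, that on your instance the paper's simpler argument would also suffice, since the particle at position $0$ must travel at least $n/2$ nodes. One small inaccuracy to repair: your claim that each cut ``is spanned by a single grid edge'' is false, because nothing confines intermediate configurations to $I\cup T$ --- particles may leave the union of the two lines while the system stays connected. This does not hurt the argument: define each cut as a vertical line in the plane avoiding grid nodes; each edge of the triangular grid has horizontal extent at most one, so every move operation crosses at most one cut, and the net-flow count of $n/2$ crossings per cut survives, yielding the same $\Omega(n^2)$ total and hence $\Omega(n)$ rounds. Finally, your choice of instance has the pleasant side effect of satisfying \assumption{1} with a large core, whereas the paper's instance has a core consisting of a single node.
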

\begin{proof}
Let the initial shape be a horizontal line of $n$ particles.
Let the target shape be a horizontal line of $n$ particles such that the rightmost particle of the initial shape is the leftmost position of the target shape.
Now the leftmost particle of the initial shape needs to travel at least $n-1$ nodes.
Because a particle executes at most one move operation per activation, a particle needs at least $2$ rounds for every node it moves.
Therefore, this particle needs at least $2n-2$ rounds to reach the target shape.
Any algorithm that solves this problem needs $\Omega(n)$ rounds.
\end{proof}

\section{Non-simply Connected}\label{sec:holes}
Our algorithm assumes that the core of $I\cap T$ is simply connected (assumption \assumption{1}).
If the core consists of more than one component, then we can choose one component as the core for the reconstruction.
The other components can then be interpreted by the algorithm as both supply and demand.
Consequently they will first be deconstructed and then reassembled.
Clearly this is not an ideal solution, but it does not affect the asymptotic running time and does not require any major changes to our algorithm.

If the core is not simply connected, that is, it contains holes, then our procedure for creating feather trees may no longer create shortest path trees or it might not even terminate.
Synchronizing the trees around holes is difficult while creating multiple trees at the same time.
However, creating a single shortest path tree in an arbitrary connected configuration (including those with holes) has been done before~\cite{boulinier2008space} (Lemma~\ref{lem:boulinier_sp}).
In a preprocessing step we can \emph{cut} a configuration in such a way that the resulting configuration is simply connected, while increasing the maximum depth of any feather tree by at most a factor two.
The algorithm described in~\cite{boulinier2008space} takes a leader particle~$\ell$, and computes for each particle if its neighbors are closer, further, or equal distance from $\ell$.
Note that every particle~$p$ has at least one neighbor that is closer to $\ell$ than $p$.
We use the information provided by this algorithm to determine where to cut a configuration~$\PC$ such that the underlying graph $G_\PC$ becomes simply connected.
With slight abuse of notation, we say that a particle $p\in\PC$ occupies node $p$ of $G_\PC$.

Consider two different paths $\pi_1$ and $\pi_2$ having a common node~$s$ as start, and a common target node~$t$, or adjacent target nodes~$t_1$ and $t_2$.
Paths $\pi_1$ and $\pi_2$ are \emph{homologically identical} if all the nodes of $G$ that are inside the area $A$ enclosed by $\pi_1$, $\pi_2$ (and edge $(t_1, t_2)$) are also part of $G_\PC$, i.e. the area contains no holes.
The paths are homologically distinct otherwise.
We can use this notion of distinct paths to detect if a configuration is simply connected.

\begin{lemma}\label{lem:no-holes}
    Let $\PC$ be a configuration with a leader particle $\ell$. If for every particle $p$ and every edge $(p, p')$ all shortest paths to $\ell$ are homologically identical, then $\PC$ is simply connected.
\end{lemma}
\begin{proof}
    Assume for contradiction that $\PC$ contains a hole~$h$, but that for all particles $p$ and all edges $(p, p')$, all shortest paths to $\ell$ are homologically identical.
    Let $H$ be the geodesic convex hull of $\ell$ and $h$.
    Now let $m$ be the point on $h$ such that the distance from $m$ to $\ell$ on $h$ is the same both clockwise as well as counter-clockwise.
    Because $h$ is the geodesic convex hull, $m$ lies either on a particle, or in the middle of an edge.
    Moreover, both paths from $\ell$ to $h$ are shortest paths and homologically distinct.
    This is a contradiction and hence $\PC$ cannot contain a hole.
\end{proof}

Since our goal is to make the configuration simply connected, it suffices to make a configuration such that all shortest paths from a particle~$p$ to a leader~$\ell$ are homologically identical.
Therefore, particles that have at least two homologically different shortest paths to $\ell$ are of special interest.
We say a particle~$p$ lies on a \emph{bisector} from leader $\ell$ if there are at least two homologically distinct shortest paths from $\ell$ to $p$.
An edge $(p, p')$ of $G_\PC$ lies on a bisector from leader $\ell$ if all shortest paths from $\ell$ to $p$ are homologically distinct from all shortest paths from $\ell$ to $p'$, see Figure~\ref{fig:bisector_example}.

We will now show how to locally detect some of these particles and edges. For a particle~$p$, let $d_p$ be the distance from the leader $\ell$ to $p$, and let $N_{-1}$ ($N_{+1}$) be all particles neighboring $p$ that are closer to (further from) $\ell$, i.e. $d_p > d_v$ ($d_p < d_v$) for all $v\in N_{-1}$ ($v\in N_{+1}$).

\begin{figure}[t]
    \centering
    \includegraphics{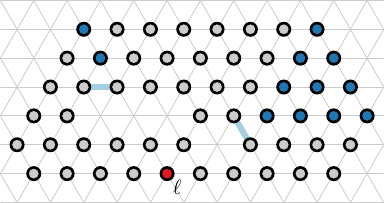}
    \caption{A configuration with holes. Leader $\ell$ is marked red. The blue particles and edges lie on a bisector from $\ell$}
    \label{fig:bisector_example}
\end{figure}

\begin{lemma}\label{lem:node_in_A_has_path}
    Let paths $\pi_1$ and $\pi_2$ from $s$ to $t$ be two homologically identical shortest paths enclosing an area $A$. Then for every node $n\in A$, there exists a shortest path from $s$ to $t$ via $n$.
\end{lemma}
\begin{proof}
    In the case that $A$ consists of multiple disconnected components, i.e. $\pi_1$ and $\pi_2$ overlap at a particle that is not $s$ nor $t$, we restrict ourselves to one of these components.
    Redefine $s$ to be the last particle before $A$ where $\pi_1$ and $\pi_2$ overlap, and redefine $t$ to be the first particle after $A$ where $\pi_1$ and $\pi_2$ overlap again.
    Since both $\pi_1$ and $\pi_2$ are shortest paths, their segments between $s$ and $t$ are still equal in length.
    Because $\pi_1$ and $\pi_2$ now do not overlap, but are equal in length, they must lie completely inside the visibility region of $s$, since otherwise they would meet at some root of the SPM according to Corollary~\ref{the:spm}. Hence, they must both be $60^\circ$-angle monotone, using the same $60^\circ$-cone with fixed orientation $o$.
    Now there exists a $60^\circ$-angle monotone path $\pi'_1$ from $s$ to $n$ as well as a $60^\circ$-angle monotone path $\pi'_2$ from $n$ to $t$ both using the same cone with orientation $o$.
    The paths combining $\pi'_1$ and $\pi'_2$ is a shortest path from $s$ to $t$ via $n$.
\end{proof}

\begin{lemma}\label{lem:edge_detection}
    Let $(p, p')$ be an edge of $G_\PC$ with a leader $\ell$ such that $d_p = d_{p'}$, and let $n_1$ and $n_2$ be the common neighbors of $p$ and $p'$. If and only if $d_{n_1} \geq d_p$ and $d_{n_2} \geq d_p$, then $(p, p')$ is on a bisector with respect to $\ell$.
\end{lemma}
\begin{proof}
    Assume that $d_{n_1} \geq d_p$ and $d_{n_2} \geq d_p$.
    Now assume for contradiction that there exist paths $\pi_1$ from $\ell$ to $p$ and $\pi_2$ from $\ell$ to $p'$ such that all nodes in the area~$A$ enclosed by $\pi_1$ and $\pi_2$ are contained in $G_\PC$.
    Since $d_p = d_{p'}$ and all nodes in $A$ are present, either $n_1$ or $n_2$ lies in $A$.
    Without loss of generality, let $n_1$ be within $A$.
    Due to Lemma~\ref{lem:node_in_A_has_path}, there exists a shortest path via $n_1$ to both $p$ and $p'$. Hence, $n_1$ is closer to $\ell$, i.e. $d_{n_1} < d_p$. This is a contradiction and hence if $d_{n_1} \geq d_p$ and $d_{n_2} \geq d_p$, then $(p, p')$ is on a bisector with respect to $\ell$.

    On the other hand, assume for contradiction that $(p, p')$ is on a bisector, but (without loss of generality) $d_{n_1} < d_p$.
    Then there must exist a shortest path from $\ell$ to $p$ via $n_1$.
    Hence, there also exists a shortest path from $\ell$ to $p'$ via $n_1$, that only differs in the last edge.
    This means that there are two shortest paths towards $p$ and $p'$ that are homologically identical and $(p, p')$ would not be on a bisector by definition. This is a contradiction.
\end{proof}

\begin{lemma}\label{lem:node_detection}
    If $G_{N_{-1}}$ is not connected, then $p$ lies on a bisector.
\end{lemma}
\begin{proof}
    Let $N_1$ and $N_2$ be two of the connected components of $G_{N_{-1}}$.
    Take two shortest paths $\pi_1$ and $\pi_2$ that go via $n_1\in N_1$ and $n_2\in N_2$ respectively, and let $A$ be the area enclosed by $\pi_1$ and $\pi_2$.
    Because $N_1$ is not connected to $N_2$, there must be a neighbor $n_3\not\in N_{-1}$ that lies inside of $A$, but with $d_{n_3} > d_p$.
    If $\pi_1$ and $\pi_2$ would be homologically identical, then by Lemma~\ref{lem:node_in_A_has_path}, there must be a shortest path via $n_3$ to $p$.
    This contradicts $d_{n_3} > d_p$ and hence $\pi_1$ and $\pi_2$ are homologically distinct and $p$ lies on a bisector.
\end{proof}

Due to Lemma~\ref{lem:edge_detection} we can detect any edge on a bisector by examining its two common neighbors, and because of Lemma~\ref{lem:node_detection} we can detect some particles on a bisector by looking at their neighbors.

We use this information as follows to cut the shape into a simply connected configuration.
Whenever a particle \emph{cuts} an edge, it sends this information to its neighbor. Note that the cut edge can still be used to pass information for the purpose of the cutting algorithm.

Every particle will have one of three states: \emph{start}, \emph{primed}, and \emph{finished}.
A particle can also be \emph{marked} as being on a bisector.
Initially, all particles will be in the \emph{start} state.
The leader~$\ell$ puts itself into the \emph{primed} state.
Then, whenever a particle~$p$ activates, it puts itself in the \emph{primed} state if all particles $p'\in N_{-1}$ are also \emph{primed}.
If a particle \emph{primes} itself, it first guarantees a path towards $\ell$ exists and keeps existing.
It does this by picking a neighbor~$p'\in N_{-1}$ for which the edge $(p, p')$ is not cut yet and marking the edge $(p, p')$ as \emph{not-cut}.
If no such $p'$ exists, then it picks an arbitrary $p'\in N_{-1}$, repairs the edge $(p, p')$ and then marks it as non-cut.
Non-cut edges are henceforth never cut anymore.
If a particle \emph{primes} itself, it checks if it is on a bisector as follows.
If the particle~$p$ is part of an edge $(p, p')$ that is on a bisector due to Lemma~\ref{lem:edge_detection}, the respective edge is cut and both common neighbors are marked as being on the bisector as long as they are not \emph{primed} yet.

If the particle~$p$ is a node on a bisector due to Lemma~\ref{lem:node_detection}, or it was marked as being on a bisector by a neighbor before priming itself, it cuts every single edge except the single non-cut edge.
For each connected component of $G_{N_{+1}}$ it chooses one particle and marks it as being on a bisector.

Whenever for a \emph{primed} particle~$p$, all neighboring particles $p'\in N_{+1}$ are \emph{finished}, it marks itself as \emph{finished}. This also happens if $N_{+1}$ is empty.
Finally, the algorithm is done when the root~$\ell$ is \emph{finished}. The algorithm can be seen in Algorithm~\ref{alg:cut}, and an example of the final cut can be seen in Figure~\ref{fig:cutting}.

\begin{algorithm*}
    \caption{Algorithm for cutting the core, ran on a particle $p$}\label{alg:cut}
    \begin{algorithmic}
        \If{$p$.state = \emph{start}}
            \If{$p = \ell \textbf{ or } \forall p'\in N_{-1}$: $p'$.state = \emph{primed}}
                \State $p$.state $\leftarrow$ \emph{primed}
                \State Pick a $p'\in N_{-1}$ such that $(p, p')$ is not cut yet. Mark $(p, p')$ as non-cut.
                \State If no such $p'$ exists, pick a $p'\in N_{-1}$, repair $(p, p')$ and mark it as non-cut.
                \If{edge on a bisector with $p''$ (Lemma~\ref{lem:edge_detection}) and $p''$.state = \emph{start}}
                    \State Cut edge $(p, p'')$
                    \State Mark the common neighbors of $(p, p'')$ as nodes on the bisector
                \EndIf
                \If{marked as node on a bisector (by another node or Lemma~\ref{lem:node_detection})}
                    \State Cut every single edge, except for the non-cut edge
                    \State For each connected component of $G_{N_{+1}}$, mark one node on a bisector.
                \EndIf
            \EndIf
        \EndIf
        \State
        \If{$p$.state = \emph{primed}}
            \If{$\forall p'\in N_{+1}$: $p'$.state = \emph{finished}}

                \State $p$.state $\leftarrow$ \emph{finished}
            \EndIf
        \EndIf
        \State
        \If{$p$.state = \emph{finished}}
            \State Do nothing
        \EndIf
    \end{algorithmic}
\end{algorithm*}

\begin{figure}[t]
    \centering
    \includegraphics{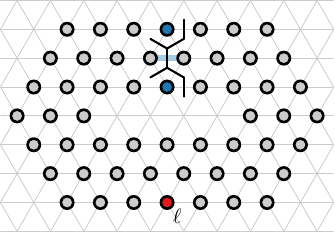}
    \caption{A configuration with a hole. Leader $\ell$ is marked red. The bottom blue particle is marked at the start, since it has two independent neighbors closer to $\ell$. The blue edge is also marked at the start. The upper blue particle is marked during the execution of the algorithm. The black lines represent the final cuts in $G_\PC$.}
    \label{fig:cutting}
\end{figure}

\begin{lemma}\label{lem:connected_after_cut}
    The configuration is connected after the cut, and distances to $\ell$ are the same as before the cut.
\end{lemma}
\begin{proof}
    Every single particle~$p$ is still connected to at least one neighbor that is on a shortest path to $\ell$ before the cut, since every particle denotes one edge as non-cut.
    Therefore, every single node has at least one shortest path towards the original leader $\ell$, that was also a shortest path before the cut.
\end{proof}

\begin{lemma}
    The configuration is simply connected after the cuts.
\end{lemma}
\begin{proof}
    We show that there is no node or edge that can be reached by two homologically different shortest paths to $\ell$, which is sufficient according to Lemma~\ref{lem:no-holes}.
    First, observe that every particle goes from \emph{start} to \emph{primed}, as well as from \emph{primed} to \emph{finished} exactly once.
    Any edge whose nodes are reached by two homologically distinct paths is cut as soon as its respective particles go from \emph{start} to \emph{primed}.
    After the cut all edges are reached by homologically identical shortest paths.
    Now let $p$ be a node that can be reached by two homologically distinct shortest paths before the algorithm.
    We will show that after the cut, $p$ does not have two distinct shortest paths to $\ell$ anymore.

    If a node~$p$ can be reached by two homologically distinct paths $\pi_1$ and $\pi_2$, then when walking back over $\pi_1$ and $\pi_2$ towards $\ell$, there must be a node where they diverge.
    We will recursively walk back over $\pi_1$ and $\pi_2$ until we find this node, and show that a cut has been made at that point.
    
    If a node~$p$ can be reached by two homologically distinct paths $\pi_1$ and $\pi_2$, there are two options.
    Either a neighbor $p'\in N_{-1}$ lies on both $\pi_1$ and $\pi_2$, or no such neighbor exists.    
    If such a neighbor $p'$ exists, we recursively analyze this neighbor.
    Otherwise, $\pi_1$ and $\pi_2$ split here. 
    We again have two cases.
    Let $p'_1$ and $p'_2$ be the two neighbors on paths $\pi_1$ and $\pi_2$ respectively.
    Nodes $p'_1$ and $p'_2$ can either be neighbors of each other or not.
    
    \subparagraph{Neighbors}
    In the first case, we claim that $(p'_1, p'_2)$ is an edge on the bisector.
    To show this, let $p'_3$ be the common neighbor of $p'_1$ and $p'_2$ that is not $p$.
    If $p'_3$ is closer to $\ell$ than $p'_1$ and $p'_2$, then $\pi_2$ can be changed to go via $p'_1$, contradicting the fact that there does not exist a single neighbor $p'$ on both $\pi_1$ and $\pi_2$.
    Therefore, $p'_3$ and $p$ are both at least as far away from $\ell$ as $p'_1$ and $p'_2$, and hence according to Lemma~\ref{lem:edge_detection}, $(p'_1, p'_2)$ is an edge on a bisector.
    This means that $p$ would be marked as being on the bisector during the algorithm and has cut all edges except for one.
    Hence, after the cut, it cannot be reached by both $\pi_1$ and $\pi_2$.

    \subparagraph{Not neighbors}
    In the second case, $p'_1$ and $p'_2$ are not neighbors.
    Now $G_{N_{-1}}$ cannot be disconnected, since otherwise $p$ would have been marked at the start as being on a bisector.
    Therefore, let $G_{N_{-1}}$ be connected, and assume that around $p$, between $p'_1$ and $p'_2$ there exists only a single node $p'_3$, which because $G_{N_{-1}}$ is connected, has $d_{p'_3} = d_p - 1$.
    Now both $(p'_1, p'_3)$ and $(p'_2, p'_3)$ cannot be edges on a bisector, since otherwise $p$ would have been marked.
    This means that both $\pi_1$ and $\pi_2$ could have gone via $p'_3$.
    This contradicts our initial assumption that there is no $p'\in N_{-1}$ reachable by the same two distinct paths as $p$ and the case that $p'_1$ and $p'_2$ are not neighbors does not occur.

    In all cases, by recursively following $\pi_1$ and $\pi_2$, we either find a node that has been marked by the algorithm, or we find $\ell$.
    However, if we find $\ell$, then $\pi_1$ is not distinct from $\pi_2$, which is a contradiction.
    Hence, by recursively following $\pi_1$ and $\pi_2$, we will find a node that has been marked by the algorithm and at that place the appropriate cut has been made.
    Therefore, after the cuts, there does not exist a node or edge that is reachable by two homologically distinct shortest paths and the configuration is simply connected
\end{proof}

We can use this procedure as a pre-processing step to make the core of our reconfiguration problem simply connected. We can use an arbitrary leader particle~$\ell$.

\begin{lemma}\label{lem:twice_diameter}
    The diameter of the simply connected configuration after the cuts is at most twice the diameter of the original configuration.
\end{lemma}
\begin{proof}
    Due to Lemma~\ref{lem:connected_after_cut}, every particle has a shortest path to the root $\ell$ that does not change after the cuts. This distance is at most the diameter of the original configuration.
    Hence, after the cuts, every two particles $s$ and $t$ are connected via $\ell$ and the length of this path is at most twice the original diameter.
\end{proof}

The algorithm to calculate distances from $\ell$ to each particle takes $O(d)$ rounds for a configuration $\PC$ with diameter $d$.
The cutting pre-processing step itself also takes $O(d)$ rounds.
Combining this with Lemma~\ref{lem:twice_diameter} shows that the asymptotic running time of our reconfiguration algorithm is not changed when applying this preprocessing step, lifting our original assumption.

\section{Discussion}\label{sec:conclusion}

We presented the first reconfiguration algorithm for programmable matter that does not use a canonical intermediate configuration. In the worst case, our algorithm requires a linear number of activation rounds and hence is as fast as existing algorithms. However, in practice, our algorithm can exploit the geometry of the symmetric difference between input and output and can create as many parallel shortest paths as the problem instance allows. 

We implemented our algorithm in the Amoebot Simulator\footnote{\url{https://github.com/SOPSLab/AmoebotSim}}. In the following screenshots and the accompanying videos of complete reconfiguration sequences\footnote{\url{https://github.com/PetersTom/AmoebotVideos}} supply particles are colored green, demand roots red, and supply roots cyan.
The dark blue particles are part of the supply graph and therefore lie on a feather path from a demand root to a supply root.

Figure~\ref{fig:parallel} illustrates that our algorithm does indeed create a supply graph which is based on the shortest paths between supply and demand and hence facilitates parallel movement paths if the geometry allows. Activation sequences are randomized, so it is challenging to prove statements that capture which supply feeds which demand, but generally we observe that close supply and demand nodes will connect first.
An interesting open question in this context is illustrated in Figure~\ref{fig:bottleneck}: here we see two supplies and two demands, but the shortest paths have a common bottleneck, which slows down the reconfiguration in practice. Is there an effective way to include near-shortest paths in our supply graph to maximize parallelism? One could also consider temporarily adding particles to the symmetric difference. Note, though, that additional parallelism leads only to constant factor improvements in the running time; this is of course still meaningful in practice.

\begin{figure*}[t]
    \includegraphics[width=.48\textwidth]{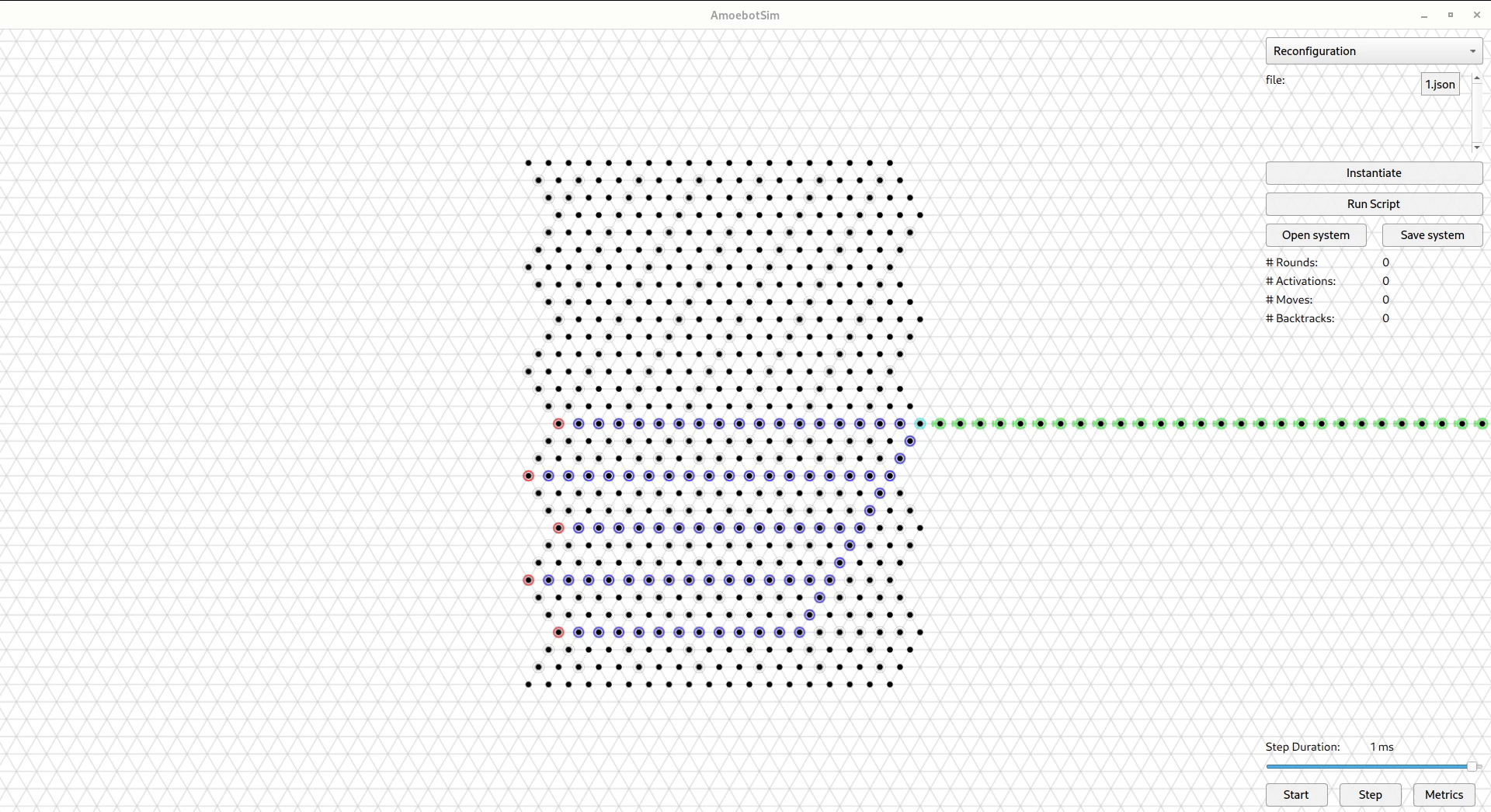}
    \hfill
    \includegraphics[width=.48\textwidth]{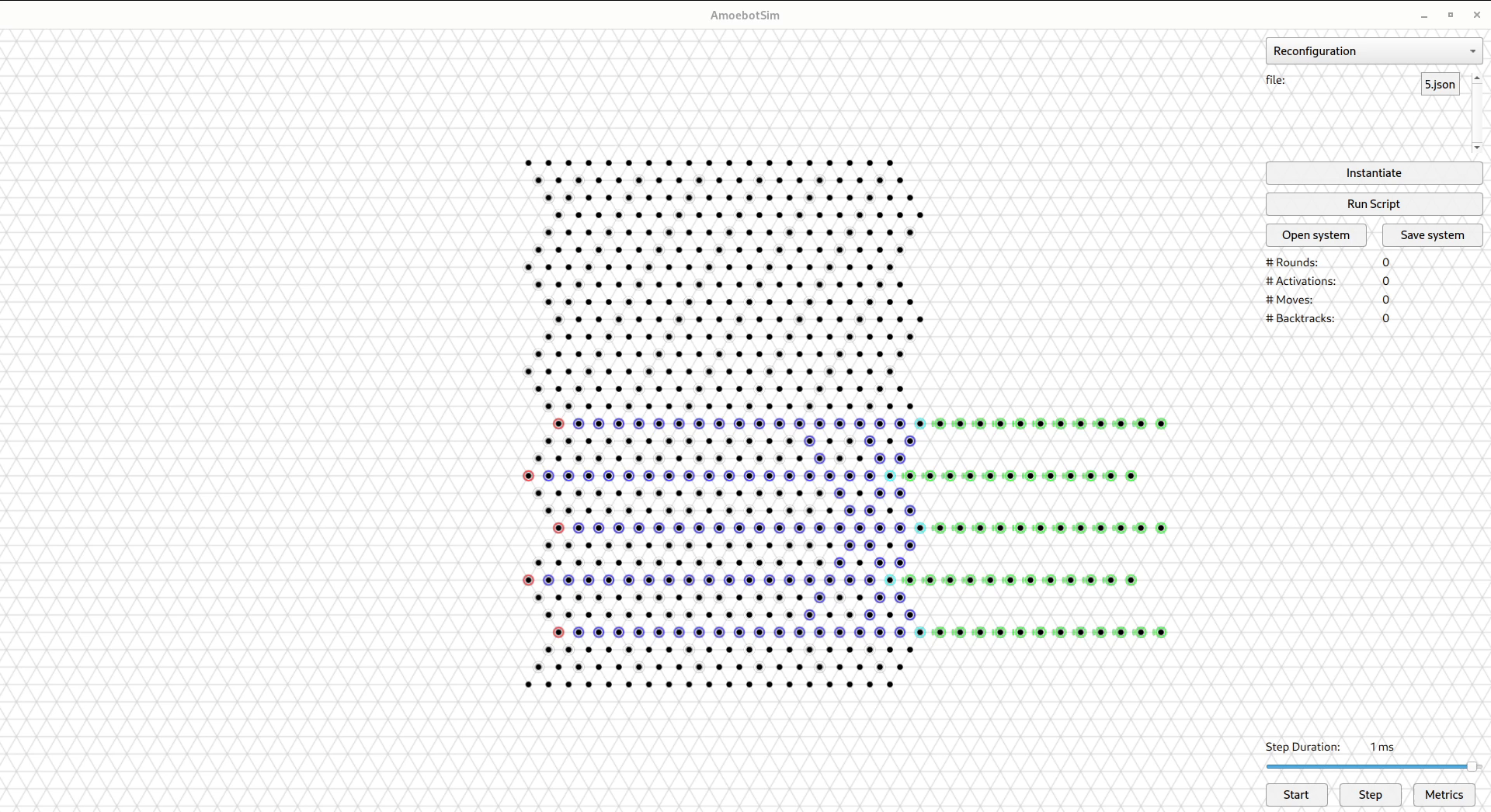}
    \caption{One supply and five demands vs. five supplies and five demands.}
    \label{fig:parallel}
\end{figure*}

\begin{figure}[t]
    \centering
    \includegraphics{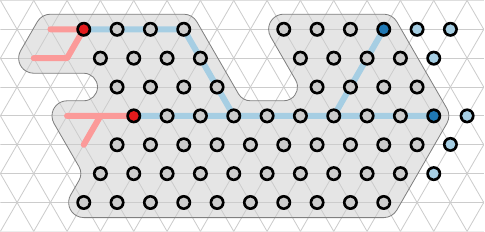}
    \caption{Shortest paths lead through a bottleneck, slowing down the algorithm in practice.}
    \label{fig:bottleneck}
\end{figure}

As we already hinted in the introduction, the running time of our algorithm is linked to the balance of supply in the feather trees. Recall that the trees are rooted at demand and grow towards supply; bubbles move from demand to supply. Because bubbles decide randomly at each junction which path to follow, each sub-tree will in expectation receive a similar amount of bubbles. If all sub-trees of a junction contain an equal amount of supply particles, then we say this junction is \emph{balanced}. If all junctions are balanced, then we say a feather tree is balanced; similarly we speak about a balanced supply graph.

The worst case running time of $O(n)$ rounds is triggered by unbalanced supply graphs. See, for example, Figure~\ref{fig:specialcases} (right), where at each junction the sub-tree rooted in the center carries all remaining supply, except for two particles. Such situations arise when there are many small patches of supply and only a few locations with demand. In the realistic scenario of shape repair we have exactly the opposite situation with many small damages (demand) and one large reservoir of supply stored for repairs. Here the supply graph will naturally be balanced and the running time is proportional to the diameter of the core $I\cup T$. Figure~\ref{fig:specialcases} (left) shows an example with only one supply and demand; here the running time of our algorithm is even proportional to the distance between supply and demand within $I\cup T$.

Feather trees are created to facilitate particles traveling on (crossing) shortest paths between supply and demand; they do not take the balance of the supply graph into account. A challenging open question in this context is whether it is possible to create supply graphs which retain the navigation properties afforded by feather tress but are at the same time balanced with respect to the supply.

\begin{figure}[b]
    \includegraphics[width=.48\textwidth]{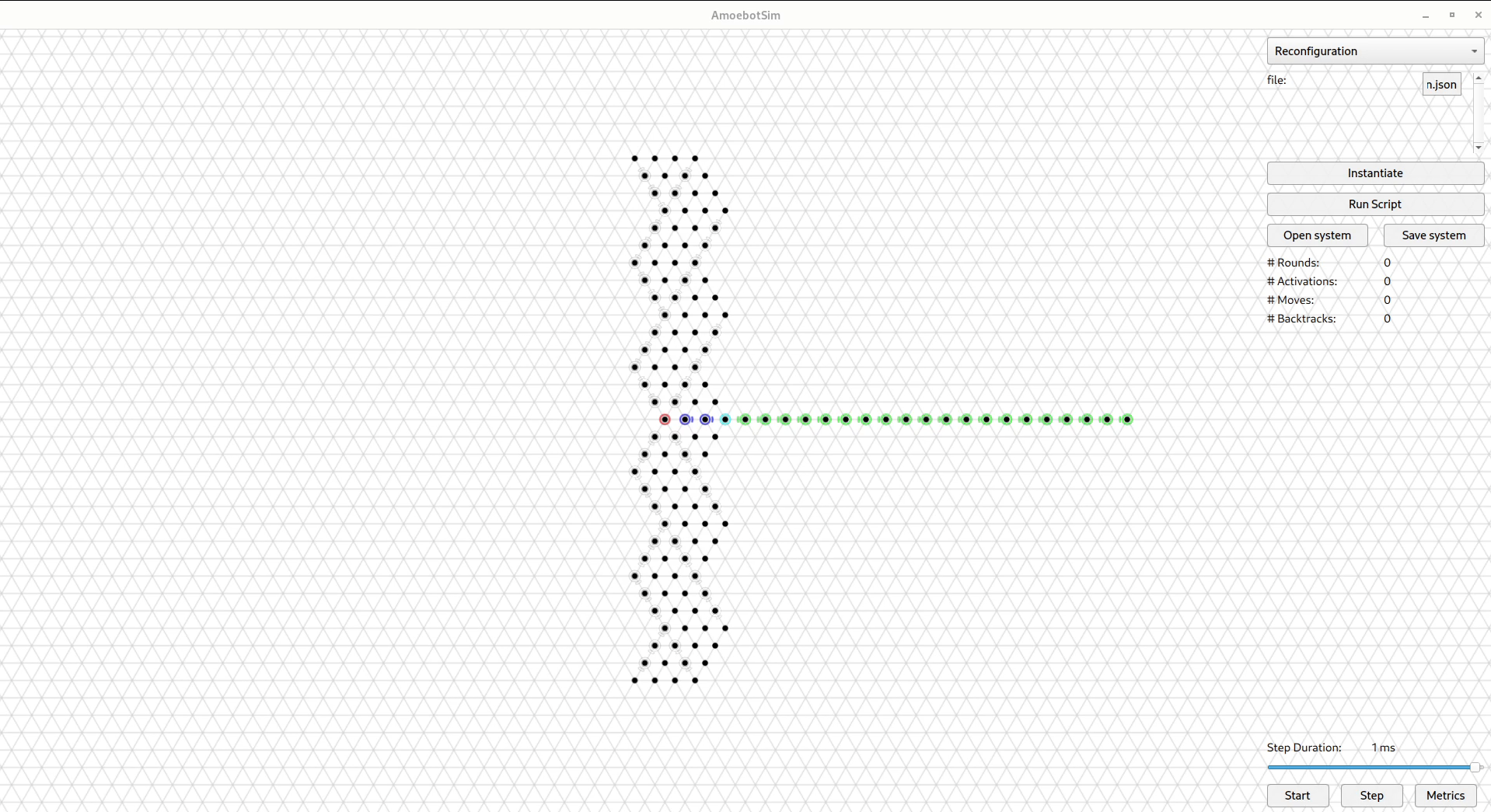}
    \hfill
    \includegraphics[width=.48\textwidth]{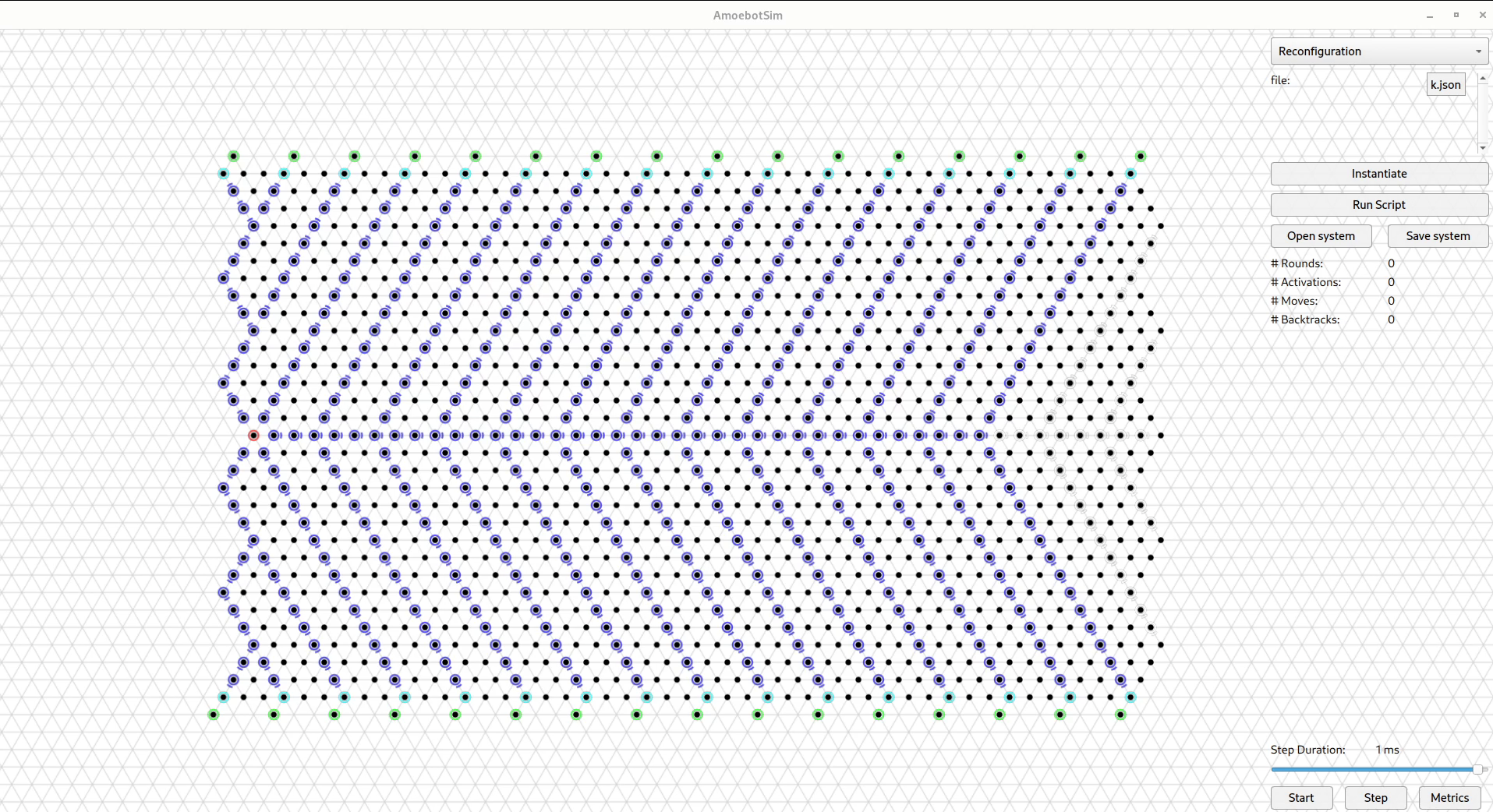}
    \caption{Left: Particles move directly between supply and demand. Right: Worst case configuration for back-tracking.}
    \label{fig:specialcases}
\end{figure}

\begin{figure*}[t]
    \centering
    \includegraphics[width=0.32\textwidth]{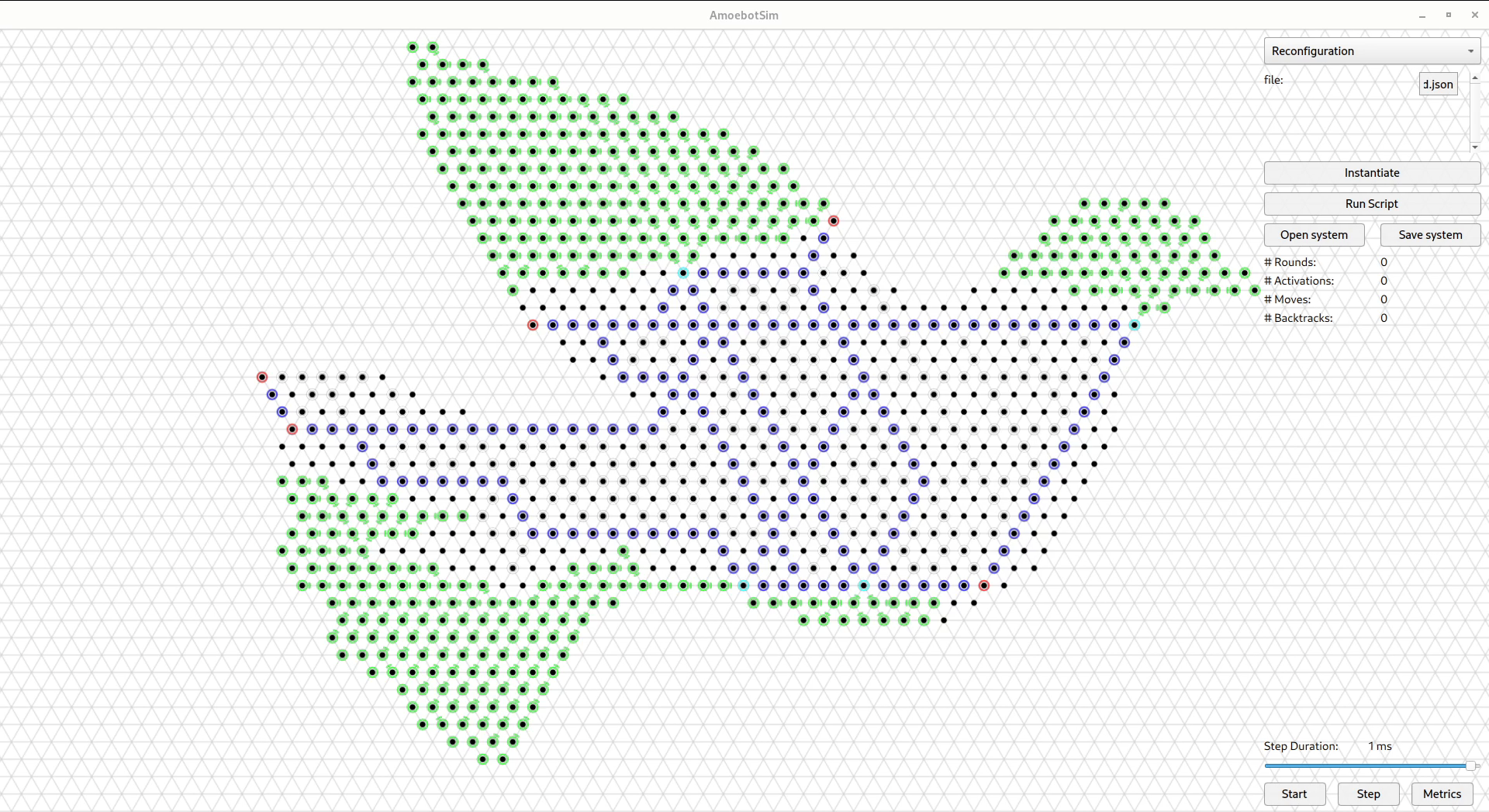}
    \hfill
    \includegraphics[width=0.32\textwidth]{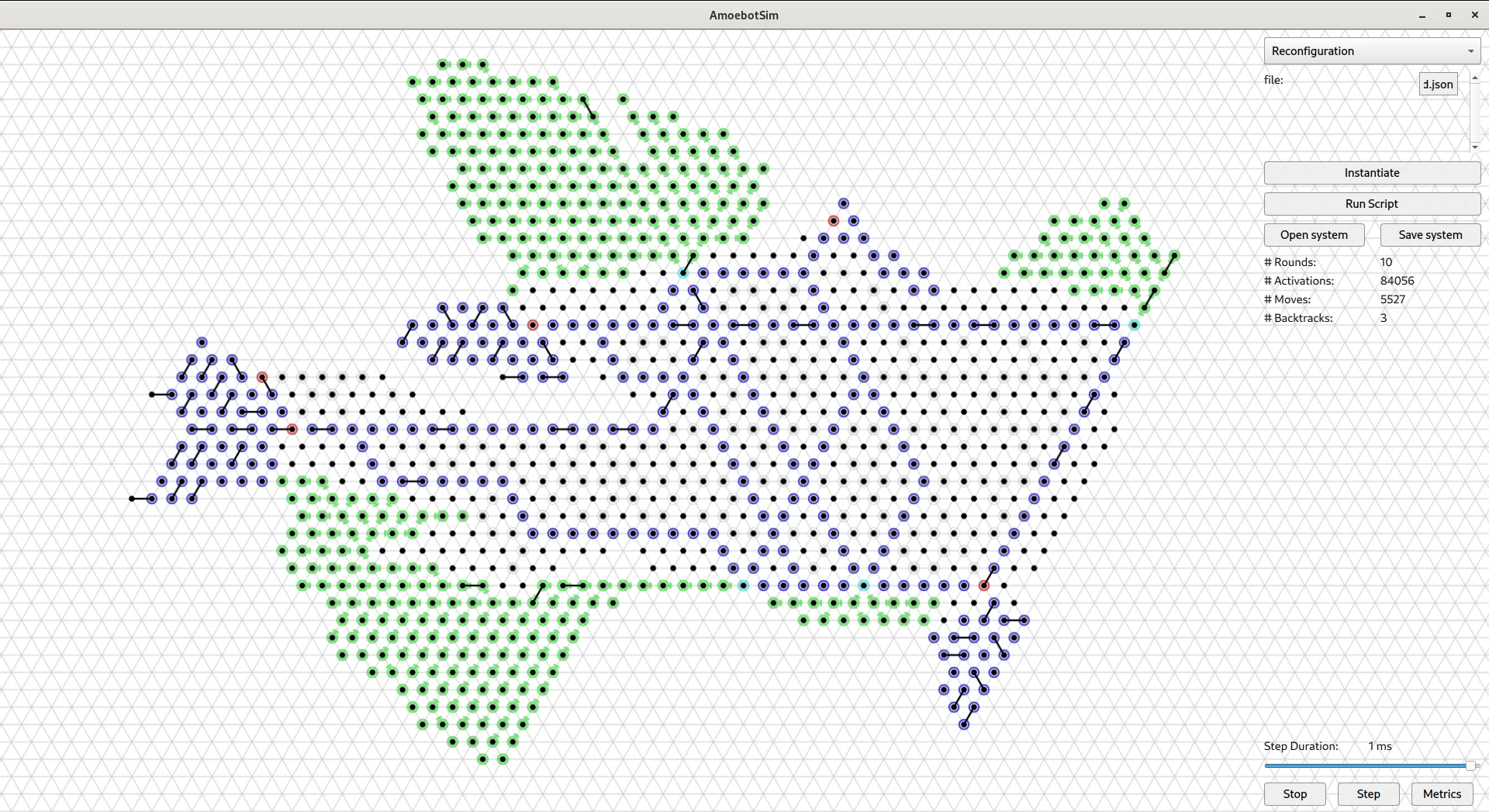}
    \hfill
    \includegraphics[width=0.32\textwidth]{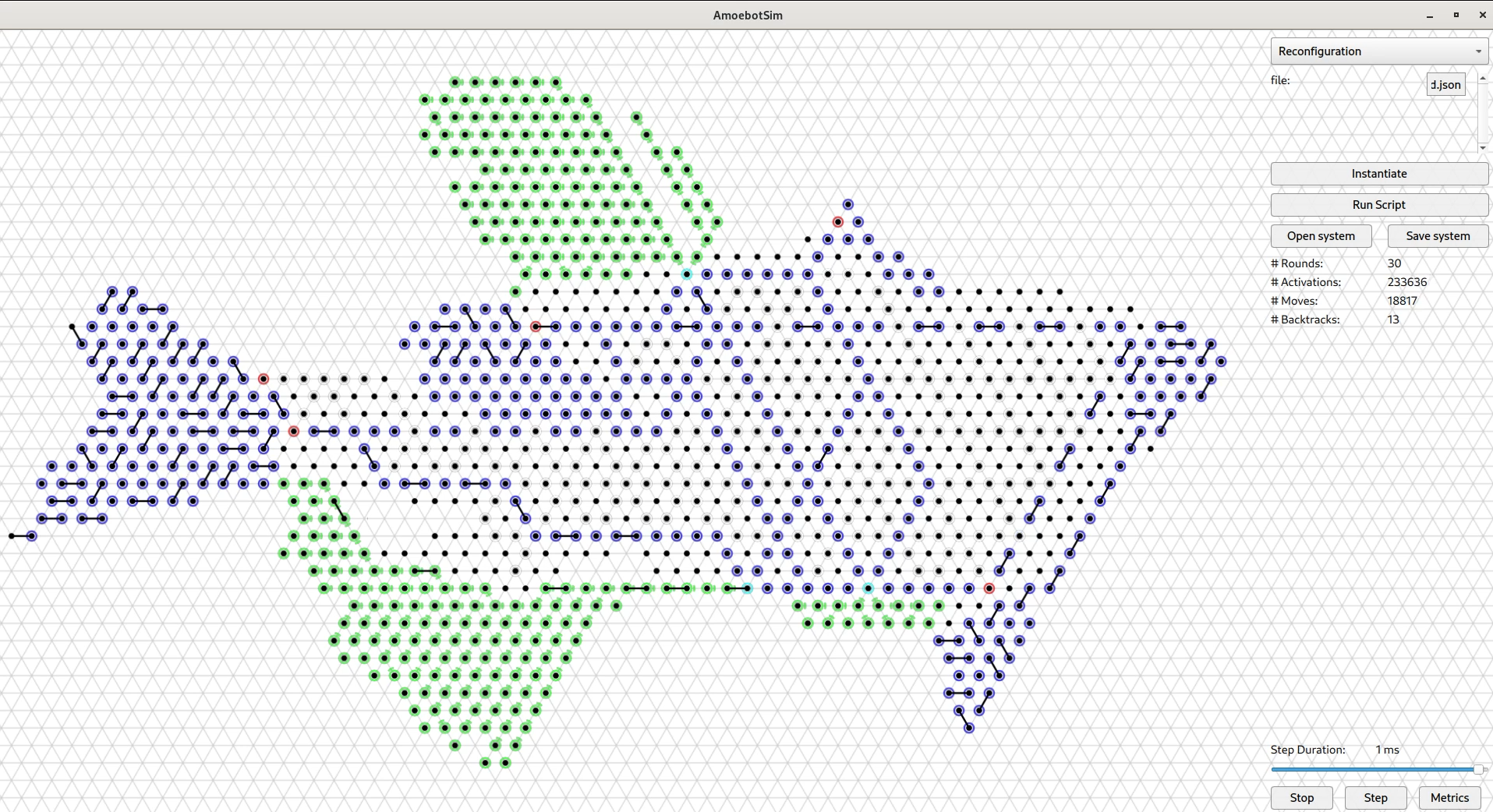}
    \vskip 2mm
    \includegraphics[width=0.32\textwidth]{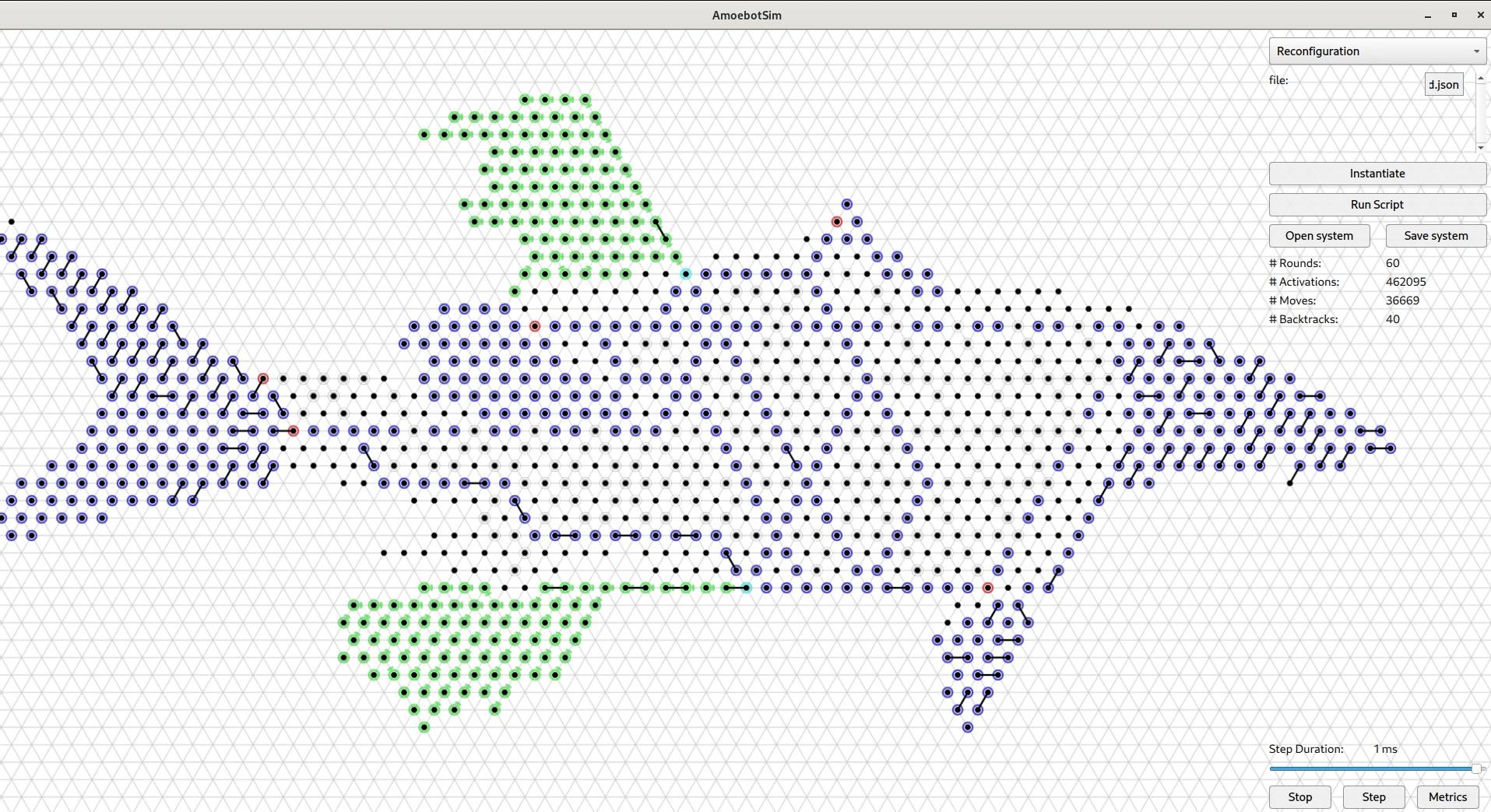}
    \hfill
    \includegraphics[width=0.32\textwidth]{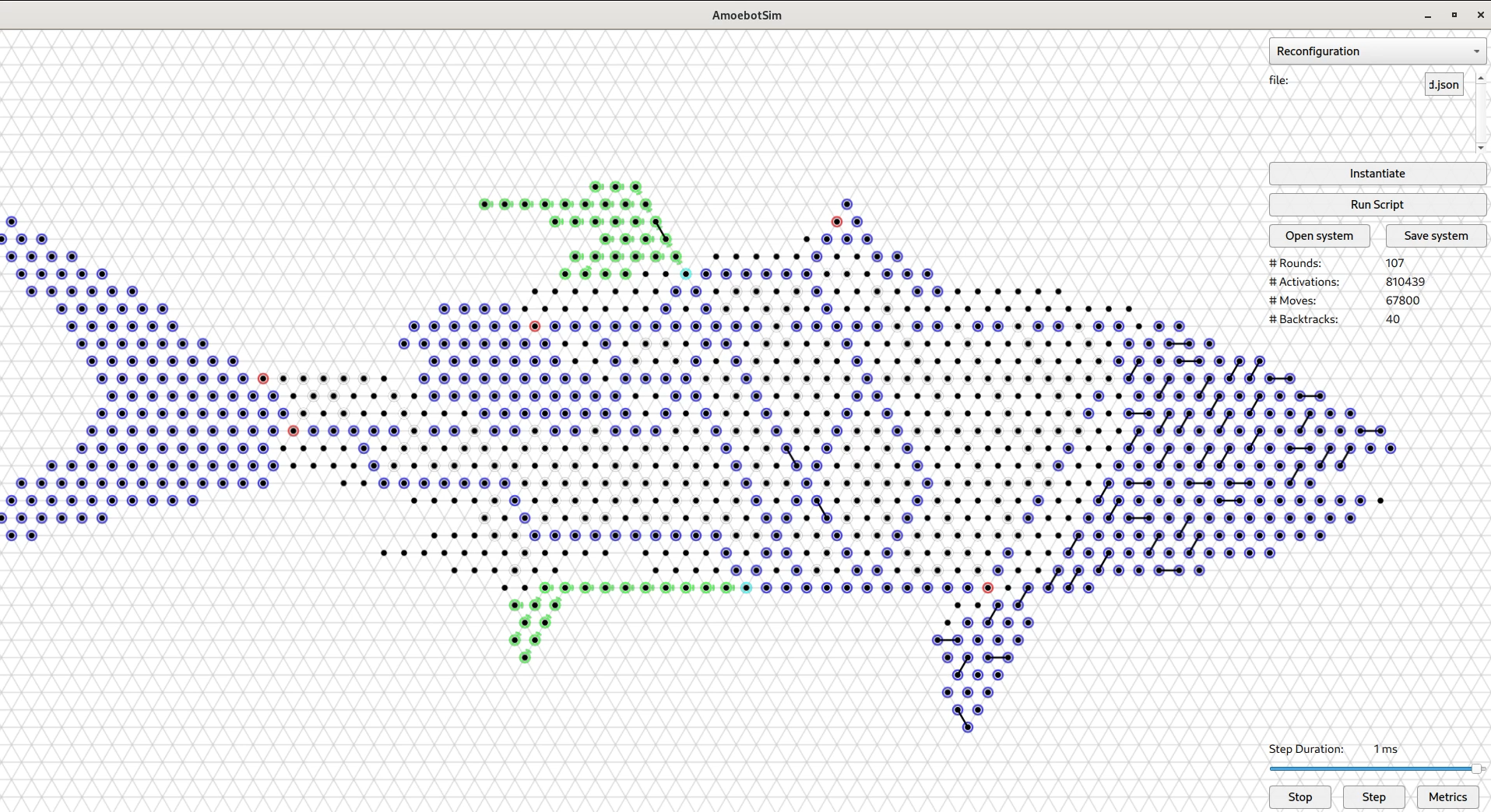}
    \hfill
    \includegraphics[width=0.32\textwidth]{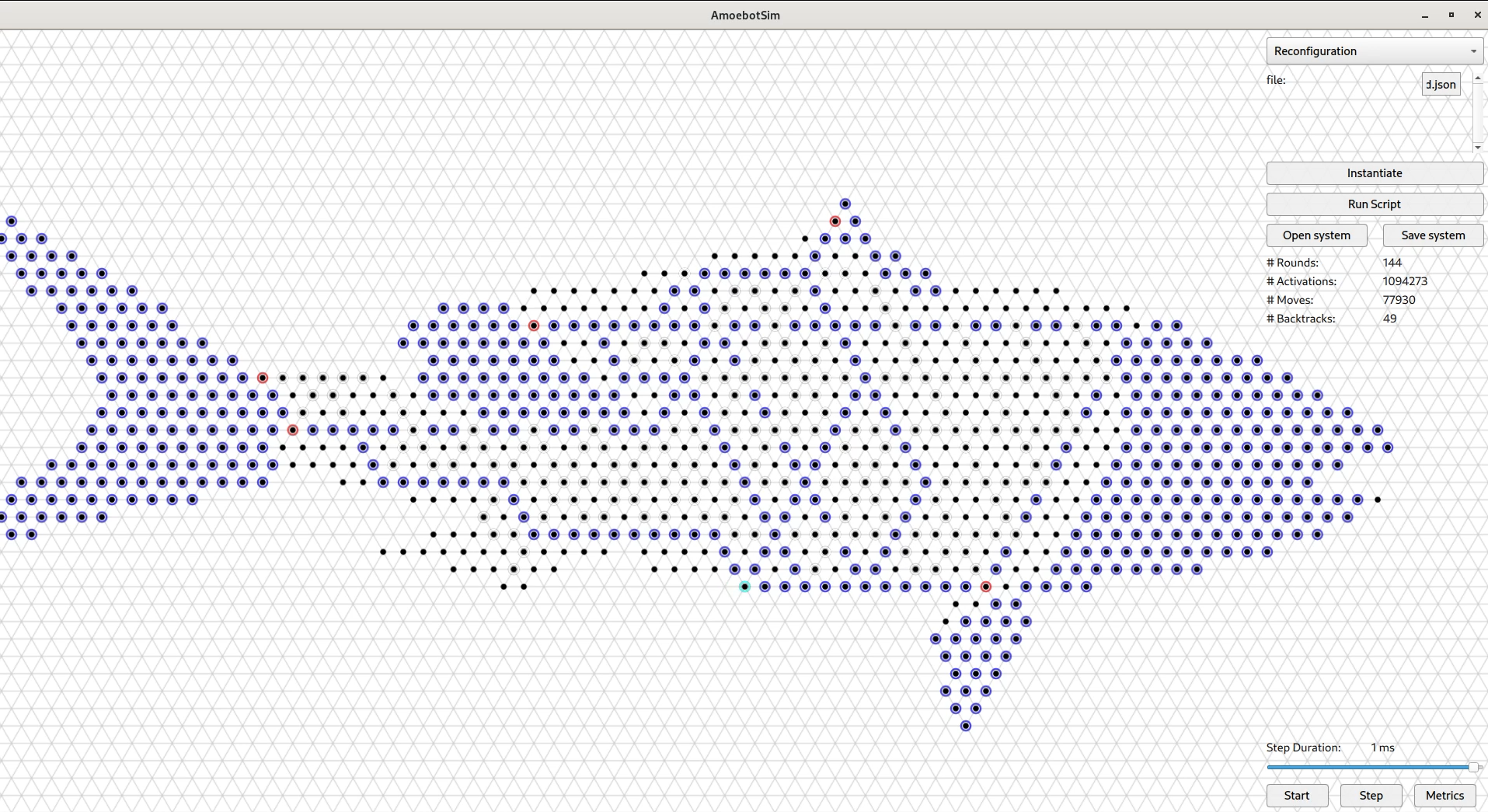}
    \caption{Dove to fish, full video at \url{https://github.com/PetersTom/AmoebotVideos}.}
    \label{fig:dovefish}
\end{figure*}

As a final example, Figure~\ref{fig:dovefish} gives an impression of the natural appearance of the reconfiguration sequences produced by our algorithm.

\backmatter

\bibliography{bibliography}

@article{Cheung2011,
	author = {Cheung, Kenneth C. and Demaine, Erik D. and Bachrach, Jonathan R. and Griffith, Saul},
	doi = {10.1109/TRO.2011.2132951},
	issn = {1552-3098},
	journal = {IEEE Transactions on Robotics},
	number = {4},
	pages = {718--729},
	title = {{Programmable Assembly With Universally Foldable Strings (Moteins)}},
	volume = {27},
	year = {2011}
}

@inproceedings{Cannon2016,
	author = {Cannon, Sarah and Daymude, Joshua J. and Randall, Dana and Richa, Andr{\'{e}}a W.},
	booktitle = {Proc. ACM Symposium on Principles of Distributed Computing (PODC)},
	doi = {10.1145/2933057.2933107},
	isbn = {9781450339643},
	pages = {279--288},
	title = {{A Markov Chain Algorithm for Compression in Self-Organizing Particle Systems}},
	year = {2016}
}

@article{DiLuna2019,
	author = {{Di Luna}, Giuseppe A. and Flocchini, Paola and Santoro, Nicola and Viglietta, Giovanni and Yamauchi, Yukiko},
	doi = {10.1007/s00446-019-00350-6},
	issn = {0178-2770},
	journal = {Distributed Computing},
	title = {{Shape formation by programmable particles}},
	year = {2020},
	pages = {69–101},
	volume = {33}
}

@InProceedings{Woods2013,
	Title = {Active self-assembly of algorithmic shapes and patterns in polylogarithmic time},
	Author = {Damien Woods and Ho{-}Lin Chen and Scott Goodfriend and Nadine Dabby and Erik Winfree and Peng Yin},
	Booktitle = {Proc. 4th Conference on Innovations in Theoretical Computer Science ({ITCS})},
	Year = {2013},
	pages = {353--354},
	doi = {10.1145/2422436.2422476}
}

@inproceedings{Gmyr2018,
	author = {Gmyr, Robert and Hinnenthal, Kristian and Kostitsyna, Irina and Kuhn, Fabian and Rudolph, Dorian and Scheideler, Christian and Strothmann, Thim},
	booktitle = {Proc. International Conference on DNA Computing and Molecular Programming (DNA)},
	doi = {10.1007/978-3-030-00030-1_8},
	pages = {122--138},
	title = {{Forming Tile Shapes with Simple Robots}},
	year = {2018}
}

@Article{Patitz2014,
	Title = {An introduction to tile-based self-assembly and a survey of recent results},
	Author = {Matthew J. Patitz},
	Journal = {Natural Computing},
	Year = {2014},
	Number = {2},
	Pages = {195--224},
	Volume = {13},
	Doi = {10.1007/s11047-013-9379-4}
}

@inproceedings{Derakhshandeh2014,
	author = {Derakhshandeh, Zahra and Dolev, Shlomi and Gmyr, Robert and Richa, Andr{\'{e}}a W. and Scheideler, Christian and Strothmann, Thim},
	booktitle = {Proc. 26th ACM Symposium on Parallelism in Algorithms and Architectures (SPAA)},
	doi = {10.1145/2612669.2612712},
	isbn = {9781450328210},
	pages = {220--222},
	title = {{Brief announcement: Amoebot---A New Model for Programmable Matter}},
	year = {2014}
}

@inproceedings{Derakhshandeh2015leader,
author = {Derakhshandeh, Zahra and Gmyr, Robert and Strothmann, Thim and Bazzi, Rida and Richa, Andr{\'{e}}a W. and Scheideler, Christian},
booktitle = {Proc. International Workshop on DNA-Based Computing (DNA)},
doi = {10.1007/978-3-319-21999-8_8},
pages = {117--132},
series = {LNCS 9211},
title = {{Leader Election and Shape Formation with Self-organizing Programmable Matter}},
year = {2015}
}

@InProceedings{Derakhshandeh2016,
	Title = {{Universal Shape Formation for Programmable Matter}},
	Author = {Zahra Derakhshandeh and Robert Gmyr and Andr{\'{e}}a W. Richa and Christian Scheideler and Thim Strothmann},
	Booktitle = {Proc. 28th Annual ACM Symposium on Parallelism in Algorithms and Architectures ({SPAA})},
	Year = {2016},
	Pages = {289--299},
	doi = {10.1145/2935764.2935784}
}

@inproceedings{Daymude2020,
author = {Daymude, Joshua J. and Gmyr, Robert and Hinnenthal, Kristian and Kostitsyna, Irina and Scheideler, Christian and Richa, Andr{\'{e}}a W.},
booktitle = {Proc. 21st International Conference on Distributed Computing and Networking},
doi = {10.1145/3369740.3372916},
isbn = {9781450377515},
pages = {1--10},
title = {{Convex Hull Formation for Programmable Matter}},
year = {2020}
}

@article{Mitchell1991,
	author = {Mitchell, Joseph S. B.},
	doi = {10.1007/BF01530888},
	issn = {1012-2443},
	journal = {Annals of Mathematics and Artificial Intelligence},
	number = {1},
	pages = {83--105},
	title = {{A new algorithm for shortest paths among obstacles in the plane}},
	volume = {3},
	year = {1991}
}

@techreport{Culberson1987,
author = {Culberson, Joseph C. and Reckhow, Robert A.},
institution = {University of Alberta},
group = {Department of Computing Science},
year = {1987},
title = {{Dent Diagrams: A Unified Approach to Polygon Covering Problems}},
note = {TR 87--14}
}

@inproceedings{dehkordi2014,
  title={{Increasing-Chord Graphs On Point Sets}},
  author={Dehkordi, Hooman R. and Frati, Fabrizio and Gudmundsson, Joachim},
  booktitle={Proc. International Symposium on Graph Drawing (GD)},
  pages={464--475},
  year={2014},
  series = {LNCS 8871},
  doi ={10.1007/978-3-662-45803-7_39}
}

@inproceedings{Porter2018,
author = {Porter, Alexandra and Richa, Andrea},
booktitle = {Proc. International Conference on Unconventional Computation and Natural Computation (UCNC)},
doi = {10.1007/978-3-319-92435-9_14},
pages = {188--203},
series = {LNCS 10867},
title = {{Collaborative Computation in Self-organizing Particle Systems}},
year = {2018}
}

@book{Ghosh2007,
place={Cambridge},
title={Visibility Algorithms in the Plane}, 
DOI={10.1017/CBO9780511543340}, 
publisher={Cambridge University Press}, 
author={Ghosh, Subir Kumar},
year={2007}
}

@article{Daymude2018,
archivePrefix = {arXiv},
arxivId = {1606.03642},
author = {Daymude, Joshua J. and Derakhshandeh, Zahra and Gmyr, Robert and Porter, Alexandra and Richa, Andr{\'{e}}a W. and Scheideler, Christian and Strothmann, Thim},
doi = {10.1007/s11047-017-9658-6},
eprint = {1606.03642},
issn = {15729796},
journal = {Natural Computing},
keywords = {Coating,Distributed algorithms,Programmable matter,Self-organization,Self-organizing systems},
number = {1},
pages = {81--96},
title = {{On the Runtime of Universal Coating for Programmable Matter}},
volume = {17},
year = {2016}
}

@article{Geary2014,
author = {Geary, Cody and Rothemund, Paul W. K. and Andersen, Ebbe S.},
doi = {10.1126/science.1253920},
issn = {0036-8075},
journal = {Science},
number = {6198},
pages = {799--804},
title = {{A single-stranded architecture for cotranscriptional folding of RNA nanostructures}},
volume = {345},
year = {2014}
}

@inproceedings{Demaine2018b,
author = {Demaine, Erik D. and Hendricks, Jacob and Olsen, Meagan and Patitz, Matthew J. and Rogers, Trent A. and Schabanel, Nicolas and Seki, Shinnosuke and Thomas, Hadley},
booktitle = {DNA Computing and Molecular Programming},
doi = {10.1007/978-3-030-00030-1_2},
pages = {19--36},
title = {{Know When to Fold 'Em: Self-assembly of Shapes by Folding in Oritatami}},
year = {2018}
}

@article{Piranda2018,
author = {Piranda, Benoit and Bourgeois, Julien},
doi = {10.1007/s10514-018-9710-0},
issn = {0929-5593},
journal = {Autonomous Robots},
number = {8},
pages = {1619--1633},
title = {{Designing a quasi-spherical module for a huge modular robot to create programmable matter}},
volume = {42},
year = {2018}
}

@inproceedings{Naz2016,
author = {Naz, Andre and Piranda, Benoit and Bourgeois, Julien and Goldstein, Seth Copen},
booktitle = {Proc. 2016 IEEE 15th International Symposium on Network Computing and Applications (NCA)},
doi = {10.1109/NCA.2016.7778628},
isbn = {978-1-5090-3216-7},
pages = {254--263},
title = {{A distributed self-reconfiguration algorithm for cylindrical lattice-based modular robots}},
year = {2016}
}

@article{AndresArroyo2018,
author = {{Andr{\'{e}}s Arroyo}, Marta and Cannon, Sarah and Daymude, Joshua J. and Randall, Dana and Richa, Andr{\'{e}}a W.},
doi = {10.1007/s11047-018-9714-x},
issn = {1567-7818},
journal = {Natural Computing},
number = {4},
pages = {723--741},
title = {{A stochastic approach to shortcut bridging in programmable matter}},
volume = {17},
year = {2018}
}

@inproceedings{Dufoulon2021,
author = {Dufoulon, Fabien and Kutten, Shay and {Moses Jr.}, William K.},
booktitle = {Proc. 2021 ACM Symposium on Principles of Distributed Computing},
doi = {10.1145/3465084.3467900},
isbn = {9781450385480},
pages = {103--113},
title = {{Efficient Deterministic Leader Election for Programmable Matter}},
year = {2021}
}

@InProceedings{Cannon2019,
  author =	{Sarah Cannon and Joshua J. Daymude and Cem G{\"o}kmen and Dana Randall and Andr{\'e}a W. Richa},
  title =	{{A Local Stochastic Algorithm for Separation in Heterogeneous Self-Organizing Particle Systems}},
  booktitle =	{Approximation, Randomization, and Combinatorial Optimization. Algorithms and Techniques (APPROX/RANDOM 2019)},
  pages =	{54:1--54:22},
  series =	{Leibniz International Proceedings in Informatics (LIPIcs)},
  ISBN =	{978-3-95977-125-2},
  ISSN =	{1868-8969},
  year =	{2019},
  volume =	{145},
  doi =		{10.4230/LIPIcs.APPROX-RANDOM.2019.54}
}

@InProceedings{Daymude2021,
  author =	{Daymude, Joshua J. and Richa, Andr\'{e}a W. and Scheideler, Christian},
  title =	{{The Canonical Amoebot Model: Algorithms and Concurrency Control}},
  booktitle =	{35th International Symposium on Distributed Computing (DISC)},
  pages =	{20:1--20:19},
  series =	{Leibniz International Proceedings in Informatics (LIPIcs)},
  ISBN =	{978-3-95977-210-5},
  ISSN =	{1868-8969},
  year =	{2021},
  volume =	{209},
  doi =		{10.4230/LIPIcs.DISC.2021.20}
}

@misc{Daymude2021a,
arxivId = {2111.09449},
author = {Daymude, Joshua J. and Richa, Andr{\'{e}}a W. and Scheideler, Christian},
journal = {arxiv},
month = {nov},
title = {{Local Mutual Exclusion for Dynamic, Anonymous, Bounded Memory Message Passing Systems}},
url = {http://arxiv.org/abs/2111.09449},
year = {2021}
}

@inproceedings{disc22,
  title={Brief Announcement: An Effective Geometric Communication Structure for Programmable Matter},
  author={Kostitsyna, Irina and Peters, Tom and Speckmann, Bettina},
  booktitle={36th International Symposium on Distributed Computing (DISC 2022)},
  year={2022},
  organization={Schloss Dagstuhl-Leibniz-Zentrum f{\"u}r Informatik}
}

@article{boulinier2008space,
  title={Space efficient and time optimal distributed BFS tree construction},
  author={Boulinier, Christian and Datta, Ajoy K. and Larmore, Lawrence L. and Petit, Franck},
  journal={Information processing letters},
  volume={108},
  number={5},
  pages={273--278},
  year={2008},
  publisher={Elsevier}
}

\end{document}